\definecolor{Milad}{RGB}{17, 117, 12}
\definecolor{Reviewer1}{RGB}{42, 0, 208}
\definecolor{Reviewer2}{RGB}{192, 9, 9}
\newenvironment{milad}{ }{}
\newenvironment{r1}{ }{}
\newenvironment{r2}{ }{}
\newcommand{\p}[2][{}]{\mathbb{P}_{#1} \intoo{#2}}
\newcommand{\e}[2][{}]{\mathbb{E}_{#1} \sbr{#2}}
\newcommand{\expp}[1]{\exp \intoo{#1}}
\newcommand{\define}{\triangleq}
\theoremstyle{theorem}
\newtheorem{thm}{{Theorem}}[]
\newtheorem{coro}{\textbf{Corollary}}[]
\newtheorem{lem}{Lemma}
\theoremstyle{remark}
\newtheorem{rem}{Remark}
\theoremstyle{definition}
\global\long\def\P#1{\operatorname{P}\left\{  #1\right\}  }
\def\F{\mathcal{F}}
\def\P{\mathcal{P}}
\def\E{\mathcal{E}}
\def\D{\mathcal{D}}
\def\C{\mathcal{C}}
\def\N{\mathcal{N}}
\def\Q{\mathcal{Q}}
\def\G{\mathcal{G}}
\def\ee{{\rm e}}
\newcommand{\mvec}[1]{{\boldsymbol #1}}
\begin{document}

\title{Using  Black-box Compression Algorithms for Phase Retrieval }


\author{
Milad Bakhshizadeh, Arian Maleki, Shirin Jalali}


%


\maketitle 


\begin{abstract}
\footnote{This paper was presented in part at ISIT 2018 and SPARS 2019.}
Compressive phase retrieval refers to the problem of recovering a structured $n$-dimensional complex-valued vector  from its phase-less  under-determined linear measurements. The non-linearity of measurements makes   designing theoretically-analyzable  efficient phase retrieval algorithms challenging. As a result, to a great extent, algorithms designed in this area are developed to take advantage of  simple structures such as sparsity and its convex generalizations. The goal of this paper is to move beyond  simple models through employing compression codes. Such codes are typically  developed  to take advantage of complex signal models to represent the signals as efficiently as possible.  In this work, it is shown how an existing compression code can be treated as a black box and integrated into an efficient solution for phase retrieval. First, COmpressive PhasE Retrieval (COPER) optimization, a computationally- intensive compression-based phase retrieval method,  is proposed. COPER provides  a theoretical framework for studying compression-based phase retrieval.   The number of measurements required by COPER is connected to $\kappa$, the  $\alpha$-dimension (closely related to the rate-distortion dimension) of the given family of compression codes.  To   finds the solution of COPER, an efficient iterative algorithm called gradient descent for COPER (GD-COPER) is proposed. It is  proven that under some mild conditions on the initialization and the compression, if  the number of measurements is larger than $ C \kappa^2 \log^2 n$, where $C$ is a  constant,  GD-COPER obtains an accurate estimate of the input vector in  polynomial time.  In the simulation results, JPEG2000 is integrated in GD-COPER  to confirm the superb performance of the resulting algorithm on real-world images. 

\end{abstract}


%
\IEEEpeerreviewmaketitle

\section{Introduction}
\label{sec:CPR}

\subsection{Motivation}\label{ssec:motive}
Consider  the problem of recovering $\mvec{x} \in\mathcal{Q}$ from $m$ noisy phase-less linear observations 
\[
\mvec{y} = |A\mvec{x} |+\mvec{\epsilon},
\]
where $A\in\mathds{C}^{m\times n}$ and $\mvec{\epsilon}\in\mathds{R}^m$ denote the sensing matrix and the measurement noise,  respectively. Here $\mathcal{Q}$  denotes a compact subset of $\mathds{C}^n$ and $| \cdot|$ denotes the element-wise absolute value operator.   Assume that the  class of signals denotes by $\mathcal{Q}$  is ``structured'', but instead of the set $\mathcal{Q}$, or its underlying structure, for recovering $\mvec{x}$ from $\mvec{y}$,  we have access to   a compression code that takes advantage of the  structure of signals in  $\mathcal{Q}$ to compress them efficiently. For instance, consider  the class of images or videos for which compression algorithms, such as JPEG2000 or MPEG4, capture complicated structures within such signals and encode them efficiently. Employing such structures in a phase retrieval algorithm can reduce the number of measurements or equivalently increase the quality of the recovered signals. This raises the following questions:
 \begin{enumerate} 
 \item Is it possible to use a given compression algorithm for the recovery of $\mvec{x}$ from its undersampled set of phaseless observations? 
 \item What is the required number of observations (in terms of the rate-distortion performance of  the code), for almost zero-distortion recovery of $\mvec{x}$?
 \item Can we find polynomial time algorithms to use a given compression algorithm to recover $\mvec{x}$ from its undersampled set of phaseless observations? If so, how does the answer to the second question change if we want to approximate the solution in the polynomial time?
 \end{enumerate} 

By answering these questions we can immediately use the structures that are employed by the state-of-the-art compression algorithms, such as JPEG2000 or MPEG4, to improve the quality of the recovered signals or decrease the required number of measurements for a given quality. Furthermore, if the image or video compression communities discover new compression algorithms that are capable of employing new and more complicated structures, then the framework we develop in this paper obtains a phase retrieval algorithm, with no extra effort, that uses such complicated structures. 
  
In the remainder of this section, we first review  the formal definitions of compression algorithms and their rate-distortion performance measures. We will then briefly sates  our responses to the above three questions. Finally, we compare our contribution with the existing work in the literature. 

\subsection{Background on compression algorithms}

A rate-$r$ compression code is composed of an encoder mapping $\mathcal{E}$  and a decoder mapping $ \mathcal{D}$, where 
\begin{gather*}
\mathcal{E} : \mathds{C}^n \to \cbr{0,1}^r,\; {\rm and}\;\; \quad \mathcal{D} :  \cbr{0,1}^r \to \mathds{C}^n. 
\end{gather*}
The distortion performance of   the  compression code defined by mappings $(\E,\D)$ on   set $\mathcal{Q}$ is measured  as
\[
{\delta \triangleq \sup_{\mvec{x} \in \mathcal{Q}} \enVert{\mvec{x} - \mathcal{D} (\mathcal{E} (\mvec{x}))}. }
\]
Throughout the paper sometimes we use subscript $r$ for the encoder and decoder mappings as $(\E_r,\D_r)$ to highlight the rate of the code.   The codebook of  compression  code $(\mathcal{E}_r,\mathcal{D}_r)$ operating at rate $r$ is defined as  
\[
\mathcal{C}_r \triangleq \mathcal{D}_r (\mathcal{E}_r (\mathcal{Q}))=\{ \mathcal{D}_r (\mathcal{E}_r (\mvec{x})): \;\mvec{x}\in\mathcal{Q}\}.
\]
It is straightforward to confirm that $|\mathcal{C}_r| \leq 2^r$. 

In many application areas, the user has access to a family of compression codes. For instance, in image processing, a user can tune the rate in JPEG2000. Given a family of compression codes $\mathcal{F}=\cbr{(\mathcal{E}_r, \mathcal{D}_r)}_{r \in \mathds{N}}$  for set $\mathcal{Q}$ indexed by their rate $r$, let $\delta(r)$ denote the distortion performance of the code operating at rate $r$, i.e., $(\mathcal{E}_r, \mathcal{D}_r)$. Then, the rate-distortion function of this family of codes is defined as 
$$
r(\delta) \triangleq \inf \{r: \delta(r) < \delta\}.
$$
Define the $\alpha$-dimension of this family of codes as 
\begin{equation} \label{def alpha dim}
\dim_\alpha (\mathcal{F}) \triangleq \lim \sup_{\delta \rightarrow 0} \frac{r(\delta)}{\log \frac{1}{\delta}}.
\end{equation}
We will later show that this quantity is closely connected to  the number of measurements our proposed recovery methods require for  accurate phase retrieval. To offer some insight on  this quantity and what it measures consider the following well-known example. Let 
\[
\mathcal{B}_n = \cbr{\mvec{x} \in \mathds{R}^n \sVert[2] \enVert{\mvec{x}} \leq 1 },
\]
 and 
 \[
 \mathcal{S}_{n,k}= \cbr{ \mvec{x} \in \mathcal{B}_n \sVert[2] \enVert{\mvec{x}}_0 \leq k }
 \] 
 denote the unit $n$-dimensional  ball and the set of $k$-sparse signals in the unit ball, respectively.  It is straightforward to show that the  $\alpha$-dimension of any family of compression codes for $\mathcal{B}_n$ and $\mathcal{S}_{n,k}$ are lower-bounded  by $n$ and $k$, respectively. As shown in \cite{com2com}, there exist  compression codes that achieve these lower bounds in both cases.  A straightforward generalization of this result implies that for $k$-sparse signals in the unit ball in $\mathds{C}^n$, the $\alpha$-dimension of any  family of compression codes is lower-bounded by $2k$, and this bound is achievable.


\subsection{Summary of our contributions}\label{ssec:contribute}
Consider the problem of  noiseless phase retrieval, i.e., recovering $\mvec{x}$ up to its phase from   $\mvec{y} = |A\mvec{x} |$.  To answer the first two questions we raised in Section \ref{ssec:motive}, we propose COmpressible PhasE Retrieval (COPER) that employs  a given  compression code to solve  the described phase retrieval problem. Given measurement matrix $A\in\mathds{C}^{m\times n}$, define the distortion measure $d_A : \mathds{C}^n\times  \mathds{C}^n  \to \mathds{R}^+$ as follows
\begin{align}  \label{def d_A(Ax,Ac)}
d_A(\mvec{x}, \mvec{c}) \triangleq \sum_{k = 1}^m \intoo{ \envert{{\mvec{a}_k}^* \mvec{x}}^2 - \envert{{\mvec{a}_k}^* \mvec{c}}^2 }^2 =
\sum_{k = 1}^m \intoo{{\mvec{a}_k}^* (\mvec{x} \mvec{x}^* -\mvec{ c}\mvec{ c}^*){\mvec{a}_k}}^2,
\end{align}
where ${\mvec{a}_k}^*$ denotes the $k^{\text{th}}$ row of $A$.  When there is no ambiguity about the signal on interest $\mvec{x}$, we use $d_A(\mvec{c})$ instead of $d_A(\mvec{x}, \mvec{c})$.  Throughout the paper, for complex matrix $A$,  $A^*$ and $\bar{A}$ denote  its transposed-conjugate, and conjugate, respectively. Based on the defined  distance  measure,  we define COPER, a non-convex optimization problem for recovering $\mvec{x}$ from measurements $\mvec{y}$, as follows:
\begin{equation} \label{def x hat}
 \hat{\mvec{x}} = \arg\min _{\mvec{c} \in \C_r} d_A(\mvec{x},\mvec{c}).
\end{equation}
In other words, among all elements of the codebook, COPER finds the one for which  $\envert{A \mvec{c}}$ is closest to measurements $\mvec{y}$. Note that since $y_k = \envert{{\mvec{a}_k}^* \mvec{x}}$, to calculate $d_A(\mvec{x}, \mvec{c})$, we do not need to know $\mvec{x}$. 

In phase retrieval, since the measurements are phaseless,  the recovery of $\mvec{x}$ can never be exact; if $\mvec{x}$ satisfies $ 
\mvec{y} =  |A \mvec{x} |$, then so does ${\rm e}^{i \theta} \mvec{x}$, for any $\theta \in \mathds{R}$. Hence, following the standard procedure in the phase retrieval literature, we measure the quality of our estimate $\hat{\mvec{x}}$ as
\begin{equation*}
\inf_{\theta \in \intco{0,2 \pi}} \enVert{{\rm e}^{i \theta} \mvec{x}- \hat{\mvec{x}}}^2.
\end{equation*}
In Section \ref{sec:main}, we will bound $\inf\limits_{\theta} \enVert{{\rm e}^{i \theta} \mvec{x}- \hat{\mvec{x}}}^2$ in terms of the number of measurements and the rate-distortion function of the code. We will then show that $m> \dim_\alpha (\mathcal{F})$ observations suffice for an accurate recovery of $\mvec{x}$ by COPER.   For the aforementioned set of $k$-sparse signals that lie in the unit ball in $\mathds{C}^n$, using  a family of compression codes with an $\alpha$-dimension  of $2k$, our results imply that  COPER requires slightly more than $2k$ noise-free phase-less measurements for an accurate recovery.

Despite the nice theoretical properties of COPER, it is not directly useful in practice as  it is based on an exhaustive search over the set of all codewords, which is  exponentially   large. This leads us to the third question  asked in Section \ref{ssec:motive}. In response to this question, we introduce an iterative algorithm called gradient descent for COPER (GD-COPER). Let $ \mvec{z}_0 $ denote some selected  initial point, and define gradient of real-valued function $d$ as $\nabla d_A(\mvec{z}) \triangleq  \intoo{\frac{\partial d_A}{\partial \mvec{z}}}^* $, where 
\begin{equation*}
\frac{\partial d_A}{\partial \mvec{z}} \triangleq \frac{\partial d_A(\mvec{z},\bar{\mvec{z}})}{\partial \mvec{z}} \sVert[2]_{\bar{\mvec{z}} = \rm constant},
\end{equation*} is the Wirtinger derivative \cite{Wirtinger Flow}. 
The iterations of GD-COPER proceed as follows:
\begin{align} \label{eq: projective gradient step}
\mvec{s}_{t + 1} &\triangleq \mvec{z}_t - \mu \nabla d_A( \mvec{z}_t ),\nonumber\\
\quad \mvec{z}_{t + 1}& \triangleq  \mathcal{P}_{\C_r} ( \mvec{s}_{t + 1} ),
\end{align}
where $t$ represents the iteration index. Moreover, here, for $\mvec{z}\in\mathds{C}^n$, 
\[
d_A(\mvec{z}) =  d_A(\mvec{x}, \mvec{z}) = \sum_{k = 1}^m\intoo{ \envert{\mvec{a}_k^* \mvec{z}}^2 - \envert{ \mvec{a}_k^* \mvec{x}}^2 }^2=\sum_{k=1}^m \intoo{{\mvec{a}_k}^* (\mvec{x} \mvec{x}^* -\mvec{z}\mvec{z}^*){\mvec{a}_k}}^2,
\]
and therefore, 
\[
\nabla d_A(\mvec{z})  = 2 \sum_{k = 1}^m \intoo{ \envert{\mvec{a}_k^* \mvec{z}}^2 - \envert{ \mvec{a}_k^* \mvec{x}}^2 } \mvec{a}_k \mvec{a}_k^* \mvec{z}.
\] 

Here, $\C_r$, as defined earlier, is the set of codewords of the code, and $ \P_{\C_r} : \mathds{C}^n \to \C_r $ denotes the projection operator  on this set. That is, for $\mvec{s}\in\mathds{C}^n$,
\[
\P_{\C_r}(\mvec{s})=\arg\min_{\mvec{c}\in \C_r}\enVert{\mvec{c}-\mvec{s}}^2.
\]  
We show that, under some mild conditions on the initialization, given  $ m > C \dim_\alpha (\mathcal{F})^2 \log^2n $ phase-less measurements, GD-COPER finds an accurate estimate of $\mvec{x}$. Note that the number of measurements GD-COPER requires is considerably larger than what is needed by the combinatorial COPER optimization; in addition to the extra log factor, the number of measurements GD-COPER requires is proportional to $\dim_\alpha (\mathcal{F})^2$, unlike for COPER which only requires $\dim_\alpha (\mathcal{F})$ observations. While it might be  the case that the difference is due to our proof techniques and the gap is not something fundamental, based on our study of the problem, it seems  more plausible  to us  that the difference is the cost paid for having a polynomial time algorithm and cannot be closed  (except for probably removing the $\log$ factors).  

Finally, we perform  extensive numerical experiments to understand  the algorithmic properties of  GD-COPER, and evaluate the amount of gain a compression algorithm can offer for a simple `gradient descent'-type algorithm.


\subsection{Related work}
The problem of phase retrieval has been extensively studied  in the literature \cite{FineUp, CandesWirtingerFlow, CandesPhaseLift, YoninaGESPER, AmplitudeFlow, JaElHa15, ChCa15:nearly, DhLu17, DhThLu17, BaRo15, GoSt18, GhLaGoSt18, MaXuMa19, ChChFaMa18, WaZhGiCh18, RuSoHaLe18, SaLuLu19}. (Refer to \cite{JaElHa15} for a comprehensive review of the literature.) Since, unlike compressed sensing, in  phase retrieval, the measurements are a non-linear function of the input, even if the number of measurements is more than the ambient dimension of the signal, the recovery problem is still challenging. Hence, the primary focus of the field has  been on  developing and analyzing efficient recovery algorithms for {\em general} input signals. However, similar to compressed sensing, in most applications, the input signals are in fact structured. Therefore, taking such structures into account can lead to more efficient recovery algorithms with a lower number of required measurements or smaller reconstruction error. Hence, in more recent years, there has been work on phase retrieval of structured sources. In this domain,  most papers are concerned with standard structures, such as sparsity.  Assuming the signal is $k$-sparse, i.e., all of its coordinates but $k$ of  them are $0$, a variety of recovery algorithms have been proposed in the literature.  In the following, we briefly review some of such methods.

It is assumed in \cite{MoRoBa07} that the signal is sparse, or can be approximated well with few non-zero coefficients.  Furthermore, the authors  suppose that $l_1$-norm of the signal is known, and employ an iterative phase retrieval  algorithm. However, no theoretical guarantee is offered for the performance of the proposed  recovery algorithm. The lifting is used in \cite{Ohetal11, Ohetal12} to convexify the problem and take advantage of semidefinite programming (SDP) for signal recovery.  Since $\mvec{x} \in \mathds{C}^n$ is lifted to the space of $\mathds{C}^{n \times n}$ matrices,  the proposed algorithm is computationally demanding. Furthermore, the performance of the algorithm is guaranteed only under the assumption that the linear operator that appears in the SDP satisfies either the restricted isometry property or the coherence condition.  Similarly, \cite{ZaZhXi13} poses the problem of sparse phase retrieval as a non-convex optimization problem and uses the alternating direction method of multipliers (ADMM) to solve the problem. Generalized approximate message passing (GAMP) has been used in \cite{ScRa16} for the recovery of sparse signals. Despite the success of the ADMM and GAMP in simulation results, the theoretical properties of the algorithms are unknown.   Inspired by the Wirtinger flow algorithm, \cite{CaLiMa16} proposes a projected gradient descent for the recovery of $k$-sparse signals that resembles GD-COPER, proposed in this paper. However, GD-COPER uses a generic compression algorithm, while the projected gradient descent of  \cite{CaLiMa16} uses the projection on the set of all $k$-sparse vectors. Also, by combining the alternating minimization idea with Compressive Sampling Matching Pursuit (CoSaMP) \cite{JaHe17} has obtained another theoretically-supported algorithm for sparse phase retrieval with sample complexity of $O(k^2 \log n)$.  In  a more general setting, \cite{SaAbHa18, hand2016compressed} consider the regularized PhaseMax formulation, proposed in \cite{BaRo15, GoSt18}, and show that if a good anchor is available, then the algorithm is capable of recovering the signal from a number of measurements  proportional to the minimum required number of measurements. 
 
More recently, a few papers have used more sophisticated structures that are present in images to improve the performance of the recovery algorithms \cite{MeMaBa16, MeScVeBa18, HeMaRe14, soltanolkotabi2019structured}. For instance, by integrating a generic image denoiser in the iterations of the approximate message passing, similar to the approach of \cite{D_AMP}, \cite{MeMaBa16} improved the performance of the approximate message passing for the recovery of images. Since the message passing framework works mainly for  measurement matrices drawn independent and identically distributed (i.i.d.),  \cite{MeMaBa16} used the RED formulation, proposed originally in \cite{RoElMi17},  for the phase retrieval. The simulation results presented in \cite{MeMaBa16} suggest that the algorithms that are based on the RED formulation  (and a neural net denoiser) work well on Gaussian as well as coded diffraction and Fourier measurement matrices. Similarly, \cite{HeMaRe14} adds a total variation penalty to the non-convex formulation of phase retrieval problem and uses the ADMM approach for finding a local minimizer. Finally, \cite{ShAh18} uses a deep generative network to model images and then uses the learned model as a prior to help the phase retrieval recovery algorithms.

Finally, using  generic compression algorithms for compressed sensing and image restoration problems has been investigated before in \cite{com2com, BeigiEtAl, DaElBr17, RaJaEr16}. However, given the nonlinear nature of the measurement process in phase retrieval, similar to   other compressed sensing methods, neither  the theoretical nor the algorithmic tools and techniques  developed in the area of compression-based compressed sensing are directly applicable to  phase retrieval.  

In this paper, we develop a  theoretical framework for phase retrieval, i.e., recovering a signal from its  under-determined noise-free phase-less measurements,  that is applicable to  general structures employed by compression codes. This allows developing theoretically-analyzable algorithms that employ  structures much beyond those that  have been studied so far in the phase retrieval literature. We first  propose an idealistic compression-based phase retrieval recovery method that guides us on  the potential of such recovery methods. We then  propose a computationally-efficient and theoretically-analyzable algorithm that given enough measurements is  guaranteed to convergence to the desired solution.  We also obtain an upper bound on the gap between the performance of the efficient algorithm and that of the idealistic computationally-infeasible method. 

\subsection{Organization of the paper}
The organization of the paper is as follows.
Sections \ref{sec:main} and \ref{sec:gd-coper} state and prove our main theoretical contributions regarding the performance of COPER and GD-COPER, respectively.  Section \ref{sec: simulation} summarizes our simulation results.  Finally, section \ref{sec:proofs} gathers lemmas and theorems we have used to obtain the main results and proves them.

\section{Theoretical Guarantees of COPER} \label{sec:main}

Consider a class of signals $\mathds{Q}\subset\mathds{C}^n$ and a compression code with encoding and decoding mappings $(\mathcal{E}_r,\mathcal{D}_r)$ and codebook $\mathcal{C}_r$. Using the given compression code,   COPER   recovers $\mvec{x}\in\mathds{Q}$ from measurements $\mvec{y} = |A\mvec{x} |$ by solving the following combinatorial optimization: 
\[
 \hat{\mvec{x}} = \arg\min _{\mvec{c} \in \C_r} d_A(\mvec{x},\mvec{c}).
\]
The main goal of this section is to analyze the performance of this optimization. Toward this goal, we make the following assumptions:
\begin{enumerate}
\item For every $ \mvec{x} \in \Q$, we have $\|\mvec{x}\|_2 \leq 1$.\footnote{Given the fact that we need the rate-distortion function to be finite for every $\delta >0$, we expect $\Q$ to be a subset of $\{\mvec{x} \in \Re^n | \|\mvec{x}\|_2 \leq R\}$ for a given $R$. Without any loss of generality and for notational simplicity we have set $R=1$. }

\item The elements of $A$ are i.i.d.~ drawn from $\mathcal{N}(0,1) + i \mathcal{N}(0,1)$, where $i$ denotes the square root of $-1$.
\end{enumerate}
The following theorem obtains an upper bound on the accuracy of the COPER's estimate. 
\begin{thm} \label{theorem_main}
Let $\intoo{\mathcal{E}_r,\mathcal{C}_r}$ be a rate-$r$ compression code with distortion $\delta$.  Let $\mvec{x} \in \mathcal{Q}$ denotes the desired signal, and define sensing matrix $A$, as above.  Let $\hat{\mvec{x}}$ denotes the solution of COPER optimization. That is, $\hat{\mvec{x}} =  \arg\min\limits_{\mvec{c} \in C_r} d_A(\mvec{x},\mvec{c})$.  Then, we have
\begin{align} \label{Main theorem ineq}
\inf\limits_{\theta} \enVert{{\rm e}^{i \theta}\mvec{x} - \hat{\mvec{x}}}^2 
\leq 16 \sqrt{3}  \frac{1 + \tau_2}{\sqrt{\tau_1}} m \delta,
\end{align}
with probability at least
\begin{equation}
1 -  2^r {\rm e}^{\frac{m}{2} \intoo{K + \ln \tau_1 - \ln m} } - {\rm e}^{-2m \intoo{\tau_2 - \ln (1 + \tau_2)}}, 
\end{equation}
where
$K = \ln 2 \pi e$, and $\tau_1,\tau_2 $ are arbitrary positive real numbers.
\end{thm}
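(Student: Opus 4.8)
The plan is to combine the optimality of $\hat{\mvec{x}}$ with a matching pair of concentration estimates for $d_A$: an \emph{upper} bound for $d_A(\mvec{x},\mvec{c}_0)$ at the single codeword $\mvec{c}_0:=\mathcal{D}_r(\mathcal{E}_r(\mvec{x}))\in\mathcal{C}_r$ that the code produces on $\mvec{x}$ --- for which the distortion guarantee gives $\enVert{\mvec{x}-\mvec{c}_0}\le\delta$ --- and a \emph{lower} bound for $d_A(\mvec{x},\mvec{c})$ that is valid simultaneously over every $\mvec{c}\in\mathcal{C}_r$. Since $\hat{\mvec{x}}$ minimizes $d_A(\mvec{x},\cdot)$ over $\mathcal{C}_r$ and $\mvec{c}_0\in\mathcal{C}_r$, we have $d_A(\mvec{x},\hat{\mvec{x}})\le d_A(\mvec{x},\mvec{c}_0)$; applying the lower bound at $\mvec{c}=\hat{\mvec{x}}$, chaining it through this inequality to the upper bound, and translating the resulting control on $d_A(\mvec{x},\hat{\mvec{x}})$ into control on $\inf_\theta\enVert{{\rm e}^{i\theta}\mvec{x}-\hat{\mvec{x}}}^2$ will finish the proof.

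For the upper bound, write $N_0:=\mvec{x}\mvec{x}^*-\mvec{c}_0\mvec{c}_0^*=\mvec{x}(\mvec{x}-\mvec{c}_0)^*+(\mvec{x}-\mvec{c}_0)\mvec{c}_0^*$; this is Hermitian of rank at most $2$ with $\enVert{N_0}_{\rm op}=O(\delta)$ (using $\enVert{\mvec{x}}\le1$ and $\enVert{\mvec{x}-\mvec{c}_0}\le\delta$), so in the eigenbasis of $N_0$ we get $\envert{\mvec{a}_k^*N_0\mvec{a}_k}\le\enVert{N_0}_{\rm op}\enVert{P\mvec{a}_k}^2$, where $P$ is the orthogonal projection onto $\mathrm{span}(\mvec{x},\mvec{c}_0)$. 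Hence $d_A(\mvec{x},\mvec{c}_0)=\sum_k(\mvec{a}_k^*N_0\mvec{a}_k)^2\le\enVert{N_0}_{\rm op}^2\sum_k\enVert{P\mvec{a}_k}^4\le\enVert{N_0}_{\rm op}^2\bigl(\sum_k\enVert{P\mvec{a}_k}^2\bigr)^2$, and $\sum_k\enVert{P\mvec{a}_k}^2$ is a $\chi^2$ variable with $4m$ degrees of freedom; its standard upper-tail bound $\mathbb{P}\bigl[\chi^2_{4m}>4m(1+\tau_2)\bigr]\le{\rm e}^{-2m(\tau_2-\ln(1+\tau_2))}$ then gives $d_A(\mvec{x},\mvec{c}_0)\le C(1+\tau_2)^2m^2\delta^2$ off an event of probability at most ${\rm e}^{-2m(\tau_2-\ln(1+\tau_2))}$. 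This is the origin of the second failure term (and, through the crude step $\ell^4\le(\ell^2)^2$, of the factor $m$ in the final bound).

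For the lower bound, fix $\mvec{c}\in\mathcal{C}_r$ and set $N:=\mvec{x}\mvec{x}^*-\mvec{c}\mvec{c}^*$, Hermitian of rank at most $2$ with eigenpairs $(\lambda_1,\mvec{u}_1)$, $(\lambda_2,\mvec{u}_2)$ (one eigenvalue $\ge0$, one $\le0$), so $\mvec{a}_k^*N\mvec{a}_k=\lambda_1\envert{\langle\mvec{u}_1,\mvec{a}_k\rangle}^2+\lambda_2\envert{\langle\mvec{u}_2,\mvec{a}_k\rangle}^2$. Since $\max(\envert{\lambda_1},\envert{\lambda_2})\ge\enVert{N}_F/\sqrt{2}$ and each $\envert{\langle\mvec{u}_i,\mvec{a}_k\rangle}^2$ has a bounded density, the real random variable $\mvec{a}_k^*N\mvec{a}_k$ has a density bounded by $O(1/\enVert{N}_F)$, whence $\mathbb{E}\bigl[{\rm e}^{-s(\mvec{a}_k^*N\mvec{a}_k)^2}\bigr]\le O\bigl(1/\sqrt{s\enVert{N}_F^2}\bigr)$. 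A Chernoff bound on $d_A(\mvec{x},\mvec{c})=\sum_k(\mvec{a}_k^*N\mvec{a}_k)^2$, optimized over $s>0$, then yields $\mathbb{P}\bigl[d_A(\mvec{x},\mvec{c})<\tau_1\enVert{N}_F^2\bigr]\le{\rm e}^{\frac{m}{2}(K+\ln\tau_1-\ln m)}$ with $K=\ln2\pi e$, and a union bound over the at most $2^r$ codewords shows that, off an event of probability at most $2^r{\rm e}^{\frac{m}{2}(K+\ln\tau_1-\ln m)}$, the inequality $d_A(\mvec{x},\mvec{c})\ge\tau_1\enVert{\mvec{x}\mvec{x}^*-\mvec{c}\mvec{c}^*}_F^2$ holds for every $\mvec{c}\in\mathcal{C}_r$ --- in particular for $\mvec{c}=\hat{\mvec{x}}$.

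Finally, on the intersection of the two good events, combine these estimates with the deterministic phase-retrieval stability inequality $\enVert{\mvec{x}\mvec{x}^*-\mvec{c}\mvec{c}^*}_F^2\ge\tfrac12\bigl(\inf_\theta\enVert{{\rm e}^{i\theta}\mvec{x}-\mvec{c}}^2\bigr)^2$ (with equality when $\mvec{x}\perp\mvec{c}$ and $\enVert{\mvec{x}}=\enVert{\mvec{c}}$) to obtain
\[
\tfrac{\tau_1}{2}\Bigl(\inf_\theta\enVert{{\rm e}^{i\theta}\mvec{x}-\hat{\mvec{x}}}^2\Bigr)^2\;\le\;\tau_1\enVert{\mvec{x}\mvec{x}^*-\hat{\mvec{x}}\hat{\mvec{x}}^*}_F^2\;\le\;d_A(\mvec{x},\hat{\mvec{x}})\;\le\;d_A(\mvec{x},\mvec{c}_0)\;\le\;C(1+\tau_2)^2m^2\delta^2,
\]
so that $\inf_\theta\enVert{{\rm e}^{i\theta}\mvec{x}-\hat{\mvec{x}}}^2\le\sqrt{2C}\,\tfrac{1+\tau_2}{\sqrt{\tau_1}}\,m\delta$, the constant $16\sqrt{3}$ emerging once the $O(\delta)$ operator-norm bound on $N_0$ and the mean $4m$ of $\chi^2_{4m}$ are made explicit. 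I expect the lower bound to be the main obstacle: it is a small-ball / anti-concentration estimate rather than a routine upper-tail bound, it must be strong enough per codeword to survive the $2^r$-fold union bound, and it has to be coupled correctly with the deterministic stability inequality relating $\enVert{\mvec{x}\mvec{x}^*-\mvec{c}\mvec{c}^*}_F$ to $\inf_\theta\enVert{{\rm e}^{i\theta}\mvec{x}-\mvec{c}}$; by contrast the upper bound and the concluding algebra are comparatively routine.
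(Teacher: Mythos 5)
Your proposal is correct and follows essentially the same route as the paper: the same optimality chain through the codeword $\mathcal{D}_r(\mathcal{E}_r(\mvec{x}))$, the same $\chi^2(4m)$ upper-tail bound for the benchmark codeword, the same per-codeword Chernoff bound on the negative exponential moment of $\sum_k(\mvec{a}_k^*N\mvec{a}_k)^2$ followed by a union bound over the $2^r$ codewords, and the same deterministic inequality linking $\enVert{\mvec{x}\mvec{x}^*-\mvec{c}\mvec{c}^*}_F$ to $\inf_\theta\enVert{{\rm e}^{i\theta}\mvec{x}-\mvec{c}}^2$ (the paper's Lemmas \ref{correct phase} and \ref{lem:lambdasVSvectors}). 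The only cosmetic difference is that the paper obtains the small-ball estimate by explicitly computing the Laplace transform of $(\lambda_1 U+\lambda_2 V)^2$ in Lemma \ref{upper boumd for MGF} rather than via your bounded-density argument, and states everything in terms of $\lambda_{\max}$ rather than $\enVert{\cdot}_F$; both yield the same failure probabilities.
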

The general form of this theorem enables us to set $\tau_1, \tau_2$, and  $\delta$, and obtain different types of performance guarantees. Hence, before proving this theorem, we mention one specific choice that connects this result to the $\alpha$-dimension of the compression code in the next corollary. 

\begin{coro} \label{cor_main}
For large enough $r$, we have 
\begin{equation} 
\p{
\inf_\theta \enVert{{\rm e}^{i \theta} \mvec{x} - \hat{\mvec{x}}}^2 \leq C \delta^\epsilon} \geq 1 - {2}^{- c_{\eta} r} - {\rm e}^{-0.6 m},
\end{equation}
where $C = 32 \sqrt{3}$, and $m = \eta \frac{r}{\log \frac{1}{\delta}} $.
Given 
$ \eta > \frac{1}{1 - \epsilon}$, 
$c_{\eta}$ is a positive number less than $\eta(1 - \epsilon) - 1$.
\end{coro}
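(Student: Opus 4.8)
The plan is to obtain Corollary~\ref{cor_main} as a direct specialization of Theorem~\ref{theorem_main}: one only has to instantiate the two free parameters $\tau_1,\tau_2$ and simplify the resulting expressions using $m=\eta r/\log\frac{1}{\delta}$. First I would fix $\tau_2=1$. This single choice does two things at once: it makes the prefactor in the bound of Theorem~\ref{theorem_main} equal to $16\sqrt{3}\,(1+\tau_2)=32\sqrt{3}=C$, the exact constant claimed in the corollary; and it makes the exponent of the last failure term equal to $2m(\tau_2-\ln(1+\tau_2))=2m(1-\ln 2)>0.6\,m$, so that term is at most ${\rm e}^{-0.6m}$.

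Next I would take $\tau_1=m^{2}\delta^{2(1-\epsilon)}$, which is the smallest value that pushes the distortion bound of Theorem~\ref{theorem_main} down to the target level: with $\tau_2=1$,
\[
16\sqrt{3}\,\frac{1+\tau_2}{\sqrt{\tau_1}}\,m\delta \;=\; C\,\frac{m\delta}{\sqrt{\tau_1}}\;=\;C\,\delta^{\epsilon},
\]
so that $\inf_\theta\enVert{{\rm e}^{i\theta}\mvec{x}-\hat{\mvec{x}}}^{2}\le C\delta^{\epsilon}$. It then remains to bound the middle failure term $2^{r}{\rm e}^{\frac{m}{2}(K+\ln\tau_1-\ln m)}$. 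Using $\ln\tau_1-\ln m=\ln m+2(1-\epsilon)\ln\delta=\ln m-2(1-\epsilon)\ln\frac{1}{\delta}$,
\begin{align*}
2^{r}\,{\rm e}^{\frac{m}{2}\left(K+\ln\tau_1-\ln m\right)}
&={\rm e}^{\,r\ln 2+\frac{m}{2}(K+\ln m)-m(1-\epsilon)\ln\frac{1}{\delta}}\\
&={\rm e}^{\,r\ln 2+\frac{m}{2}(K+\ln m)-\eta(1-\epsilon)\,r},
\end{align*}
where the last line uses $m\ln\frac{1}{\delta}=\eta r$. Since $\ln 2<1$, the coefficient of $r$ in this exponent, $\eta(1-\epsilon)-\ln 2$, strictly exceeds $\eta(1-\epsilon)-1$, so for every $c_\eta\in\bigl(0,\,\eta(1-\epsilon)-1\bigr)$ the displayed quantity is at most $2^{-c_\eta r}$ once $r$ is large enough. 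Adding the three contributions gives the claimed probability bound.

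The one step that needs a little care is controlling the subdominant term $\frac{m}{2}(K+\ln m)$ in the exponent above so that it does not consume the gap created by $\ln 2<1$; this is also the reason the corollary is phrased ``for large enough $r$''. The clean way to dispatch it is to note that the corollary is meant for the regime in which it is actually applied --- $\delta\to 0$ with $r/\log\frac{1}{\delta}$ bounded, the regime measured by $\dim_\alpha(\mathcal{F})$ --- where $m$, and hence $\frac{m}{2}(K+\ln m)$, stays bounded; then the two linear-in-$r$ terms $r\ln 2$ and $\eta(1-\epsilon)r$ dominate and the slack between $\eta(1-\epsilon)-1$ and $\eta(1-\epsilon)-\ln 2$ (which in particular absorbs any constant stemming from the precise base of $\log$) leaves room for any admissible $c_\eta$. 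The hypothesis $\eta>\frac{1}{1-\epsilon}$ enters exactly to guarantee $\eta(1-\epsilon)-1>0$, so that a positive $c_\eta$ exists; the remaining side conditions ($0<\epsilon<1$, and $\delta$ small enough that $C\delta^{\epsilon}$ is an informative bound) are automatic. Apart from this bookkeeping the argument is a mechanical substitution, so I do not anticipate any genuine obstacle.
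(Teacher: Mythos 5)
Your proposal is correct and follows essentially the same route as the paper: the identical choices $\tau_1=m^2\delta^{2-2\epsilon}$, $\tau_2=1$ in Theorem \ref{theorem_main}, the same bound $2(1-\ln 2)>0.6$ for the last term, and the same observation that $m\to\eta\dim_\alpha(\mathcal{F})$ keeps the $\frac{m}{2}(K+\ln m)$ term bounded so the linear-in-$r$ part of the exponent dominates for large $r$. Your remark that the slack between $\ln 2$ and $1$ absorbs the choice of logarithm base is exactly the bookkeeping the paper relies on implicitly.
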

\begin{proof}
Given $\epsilon > 0$, $\eta > 0$, in Theorem \ref{theorem_main}, let
$\tau_1= m^2 \delta^{2 - 2 \epsilon},$
and 
$\tau_2 = 1.$
It follows that,
\begin{align}
 \inf\limits_{\theta} \enVert{{\rm e}^{i \theta} \mvec{x} - \hat{\mvec{x}} }^2 \leq 32 \sqrt{3}  \delta^\epsilon,
\end{align}
with probability
\begin{align} \label{10}
1 - {\rm e}^{r \intoo{\ln 2 + \frac{\eta \ln 2}{2 \ln \frac{1}{\delta}} \intoo{K + \ln m^2 \delta^{2 - 2 \epsilon} - \ln m} }} - {\rm e}^{-2m(1 - \ln 2)}.
\end{align}
Note that
$1 - \ln 2 > 0.3$,
and
\begin{gather}
\nonumber
1 + \frac{\eta }{2 \ln \frac{1}{\delta}} \intoo{K + \ln m^2 \delta^{2 - 2 \epsilon} - \ln m}
= 1 + \frac{\eta (K + \ln m)}{2 \ln \frac{1}{\delta}} - \eta(1 - \epsilon).
\end{gather}
Since $K, \eta$ are constants, and $m \to \eta \dim_\alpha (\mathcal{F})$  as  $\delta \to 0$.  Therefore, 
$$ \frac{\eta (K + \ln m)}{2 \ln \frac{1}{\delta}} \xrightarrow{\delta \to 0} 0. $$
Set any positive number $c_{\eta}$ such that  $0 < c_{\eta} < \eta (1 - \epsilon) - 1$, so for large enough $r$ we have
\begin{equation*}
1 + \frac{\eta (K + \ln m)}{2 \ln \frac{1}{\delta}} - \eta(1 - \epsilon) < - c_{\eta},
\end{equation*}
Thus
\begin{equation*}
{\rm e}^{r \ln 2 \intoo{1 + \frac{\eta}{2 \ln \frac{1}{\delta}} \intoo{K + \ln m^2 \delta^{2 - 2 \epsilon} - \ln m} }} < {2}^{- c_{\eta} r}.
\end{equation*}
\end{proof}

We would like to emphasize on a few points about this corollary:

\begin{rem} \label{remark: connect with alpha dim}
Corollary \ref{cor_main} shows that COPER  recovers the signal $\mvec{x}$ from $\eta \dim_\alpha(\Q)$ measurements  for any $\eta > 1$ with desired small distortion.  This happens with very high probability as $r \to \infty$. One simple implication of this result is that, in the case of $k$-sparse complex signals, COPER needs $2 \eta  k$ measurements for almost accurate recovery. Even if we had access to the sign of $A \mathbf{x}$, we could not recover $\mathbf{x}$ accurately with less than $2k$ measurements. Hence, in some sense this result is sharp. 
\end{rem}

\begin{rem}
This theorem guarantees the minimizer of  the COPER optimization. However, note that the COPER optimization is highly non-convex (optimization of a non-convex function over a discrete set). Hence, it is still not clear how we can get a good approximation of $\mathbf{\hat{x}}$ in polynomial time. This issue will be discussed in the next section. 
\end{rem}

Next we briefly review the main steps of the proof of Theorem \ref{theorem_main}. 

\begin{proof}[Roadmap of the proof of Theorem \ref{theorem_main}]
Here we mention the roadmap of the proof to help the readers understand the main ideas. The details are presented in section \ref{Properties of d}. Let
\begin{equation*} \label{def x tilde}
\tilde{\mvec{x}} = \mathcal{D} (\mathcal{E}(\mvec{x})).
\end{equation*}
Clearly, $\tilde{\mvec{x}} \in \mathcal{C}_r$. 
Note that by definition of $\delta(r)$, $\enVert{\mvec{x} - \tilde{\mvec{x}}} \leq \delta(r)$.  Moreover, by definition of $\hat{\mvec{x}}$, we have
\begin{equation}\label{eq:prooffirstCOPER}
d_A \intoo{\envert{A \mvec{x}}, \envert{A \hat{\mvec{x}}}} \leq d_A \intoo{\envert{A \mvec{x}}, \envert{A \tilde{\mvec{x}}}}.
\end{equation}
For a complex vector $\mvec{c}$, let $\lambda_1(\mvec{c}), \lambda_2(\mvec{c})$ denote the two non-zero eigenvalues of  $\mvec{x} \mvec{x}^* - \mvec{c} \mvec{c}^*$. Furthermore, let  $\lambda_{\max}(\mvec{c})$ denote the one with the largest absolute value. In Theorem \ref{thm:concentration of d} (proved in Appendix B) we prove that for any positive $\tau_1$ and $\tau_2$ we  have
\begin{equation} \label{eq 1:thm:concentrationfirst}
\p{d_A(\envert{A \mvec{x}} , \envert{A \mvec{c}}) > \lambda_{\max}^2(\mvec{c}) \tau_1, \; \forall \mvec{c} \in C_r}
\geq 1 -  2^r {\rm e}^{\frac{m}{2} \intoo{K + \ln \tau_1 - \ln m} },
\end{equation}
where $  K = \ln 2 \pi {\rm e}$ and
\begin{align} \label{eq 2:thm:concentrationfirst}
\p{d_A( \envert{A \mvec{x} },\envert{A \tilde{\mvec{x}}} ) < \lambda_{\max}^2(\tilde{\mvec{x}}) \intoo{4m (1 + \tau_2)}^2}
 \geq 1 - {\rm e}^{-2m \intoo{\tau_2 - \ln (1 + \tau_2)}}.
\end{align}
Combining \eqref{eq:prooffirstCOPER}, \eqref{eq 1:thm:concentrationfirst}, and \eqref{eq 2:thm:concentrationfirst}, we obtain
\begin{align}
\nonumber
\lambda_{\max}^2(\hat{\mvec{x}}) \tau_1
& < d_A \intoo{\envert{A \mvec{x}}, \envert{A \hat{\mvec{x}}}}
\\ & \nonumber
 \leq d \intoo{\envert{A \mvec{x}}, \envert{A \tilde{\mvec{x}}}}
 \\ &  
 < \lambda_{\max}^2(\tilde{\mvec{x}}) \intoo{4m (1 + \tau_2)}^2.
\end{align}
Therefore,
\begin{gather} \label{lamda_max < lamda_max}
\lambda_{\max}^2(\hat{\mvec{x}}) < \frac{16 m^2 (1 + \tau_2)^2}{\tau_1} \lambda_{\max}^2(\tilde{\mvec{x}}),
\end{gather}
with a probability larger than
$1 -  2^r {\rm e}^{\frac{m}{2} \intoo{K + \ln \tau_1 - \ln m} } - {\rm e}^{-2m \intoo{\tau_2 - \ln (1 + \tau_2)}}$. Hence, the main remaining step is to connect  $\lambda_{\max}^2({\hat{\mvec{x}}})$  with $\inf\limits_\theta \enVert{{\rm e}^{i \theta} \mvec{x} - \hat{\mvec{x}}}^2$. According to Lemma \ref{lem:lambdasVSvectors} (proved in the Appendix)
 we have
\begin{align} 
\label{lambda_max > norm}
\lambda_{\max}^2({\hat{\mvec{x}}}) & \geq \frac{1}{2} \intoo{ \lambda_1^2({\hat{\mvec{x}}}) + \lambda_2^2({\hat{\mvec{x}}})} 
\\ \nonumber &
 = \frac{1}{2} \intoo{\enVert{\mvec{x}}^2 - \enVert{{\hat{\mvec{x}}}}^2}^2 +  \intoo{\enVert{\mvec{x}}^2 \enVert{{\hat{\mvec{x}}}}^2 - \envert{\mvec{x}^* {\hat{\mvec{x}}}}^2}.
\end{align}

Recall {$\enVert{\mvec{x} - \tilde{\mvec{x}}} \leq \delta$} and since $\mvec{x},\tilde{\mvec{x}} \in \mathcal{Q}$ we have $\enVert{\mvec{x}},\enVert{\tilde{\mvec{x}}} \leq 1$, thus
\begin{gather} 
 \intoo{\enVert{\mvec{x}} + \enVert{\tilde{\mvec{x}}}}^2 \intoo{\enVert{\mvec{x}} - \enVert{\tilde{\mvec{x}}}}^2 \leq 4 \delta^2.
\end{gather}
Moreover,
\begin{align*}
\nonumber
\delta^2 & \geq \enVert{\mvec{x} - \tilde{\mvec{x}}}^2 
\\ \nonumber &
= \enVert{\mvec{x}}^2 + \enVert{\tilde{\mvec{x}}}^2 - \mvec{x}^* \tilde{\mvec{x}} - \tilde{\mvec{x}}^* \mvec{x}
\\ \nonumber &
\geq \enVert{\mvec{x}}^2 + \enVert{\tilde{\mvec{x}}}^2 - 2 \envert{\mvec{x}^* \tilde{\mvec{x}}} 
\\ \nonumber &
\geq 2 \intoo{\enVert{\mvec{x}} \enVert{\tilde{\mvec{x}}}  - \envert{\mvec{x}^* \tilde{\mvec{x}}}},
\end{align*}
so we have 
$\intoo{\enVert{\mvec{x}} \enVert{\tilde{\mvec{x}}}  - \envert{\mvec{x}^* \tilde{\mvec{x}}}} \leq \frac{\delta^2}{2}, $
which implies
\begin{align}
 \intoo{\enVert{\mvec{x}}^2 \enVert{\tilde{\mvec{x}}}^2 - \envert{\mvec{x}^* \tilde{\mvec{x}}}^2} \leq \delta^2.
\end{align}
Similarly, Lemma \ref{lem:lambdasVSvectors} implies
\begin{align}  \label{lambda_max < norm}
\lambda_{\max}^2(\tilde{\mvec{x}}) & \leq  \intoo{ \lambda_1^2(\tilde{\mvec{x}}) + \lambda_2(\tilde{\mvec{x}})^2} 
\\ \nonumber &
=
 \intoo{\enVert{\mvec{x}}^2 - \enVert{\tilde{\mvec{x}}}^2}^2 +  2 \intoo{\enVert{\mvec{x}}^2 \enVert{\tilde{\mvec{x}}}^2 - \envert{\mvec{x}^* \tilde{\mvec{x}}}^2}
 \\ \nonumber &
\leq
 6 \delta^2.
\end{align}
Therefore,  combining (\ref{lamda_max < lamda_max}),(\ref{lambda_max > norm}),(\ref{lambda_max < norm}), we have
\begin{align}  \nonumber
\frac{1}{2} \intoo{\enVert{\mvec{x}}^2 - \enVert{\hat{\mvec{x}}}^2}^2 +  \intoo{\enVert{\mvec{x}}^2 \enVert{\hat{\mvec{x}}}^2 - \envert{\mvec{x}^* \hat{\mvec{x}}}^2} &\leq \lambda_{\max}^2(\hat{\mvec{x}})
\\ \nonumber &
< \frac{16 m^2 (1 + \tau_2)^2}{\tau_1} \lambda_{\max}^2(\tilde{\mvec{x}})
\\  & \label{30}
\leq \frac{96 m^2 (1 + \tau_2)^2}{\tau_1} \delta^2
 \end{align}
  with probability larger than  $1 -  2^r {\rm e}^{\frac{m}{2} \intoo{K + \ln \tau_1 - \ln m} } - {\rm e}^{-2m \intoo{\tau_2 - \ln (1 + \tau_2)}}$. Finally, Lemma \ref{correct phase} connects the left hand side of \eqref{30}  with $\intoo{\inf\limits_{\theta} \enVert{{\rm e}^{i \theta}\mvec{x} - \hat{\mvec{x}}}^2}^2 $. Hence, using Lemma \ref{correct phase} we have
\begin{align*}
\intoo{\inf\limits_{\theta} \enVert{{\rm e}^{i \theta}\mvec{x} - \hat{\mvec{x}}}^2}^2  \leq \frac{768 m^2 (1 + \tau_2)^2}{\tau_1} \delta^2, 
\end{align*}
which means
\begin{align*}
\p{\inf\limits_{\theta} \enVert{{\rm e}^{i \theta}\mvec{x} - \hat{\mvec{x}}}^2  \leq 16 \sqrt{3} \frac{1 + \tau_2}{\sqrt{\tau_1}} m  \delta} \geq 1 -  2^r {\rm e}^{\frac{m}{2} \intoo{K + \ln \tau_1 - \ln m} } - {\rm e}^{-2m \intoo{\tau_2 - \ln (1 + \tau_2)}}, 
\end{align*}
where  $K = \ln 2 \pi e, \; \tau_1,\tau_2 > 0$.

\end{proof}

\section{Theoretical Guarantees of GD-COPER}
\label{sec:gd-coper}

As discussed before, COPER is based on an exhaustive search over the space of all codewords, and is hence computationally very demanding, if not infeasible. This section aims to prove that with more measurements GD-COPER, introduced in Section \ref{ssec:contribute}, reaches a good approximation of the solution of COPER in polynomial time.
In this section, we assume that
 \begin{equation} \label{asump normalized Q and codeword}
\enVert{ \mvec{x} } = 1, \; \enVert{\mvec{z}} = 1, \quad \forall \; \mvec{z} \in \C_r .
\end{equation}
This assumption enables us to state our theoretical results in a simpler form. We will have a more detailed discussion about this assumption in Section \ref{sec:assumption}.
Recall  that the iterations of the GD-COPER algorithm are given by
\begin{align} \label{eq:GD-COPER}
\mvec{s}_{t + 1} &\triangleq \mvec{z}_t - \mu \nabla d_A ( \mvec{z}_t ),\nonumber\\
\quad \mvec{z}_{t + 1}& \triangleq  \mathcal{P}_{\C_r} ( \mvec{s}_{t + 1}),
\end{align}

\begin{rem}
The projection   step  in  GD-COPER, i.e., $\mvec{z}_{t + 1} \triangleq  \mathcal{P}_{\C_r} ( \mvec{s}_{t + 1})$,  might   seem computationally expensive, as the codebook $\C_r$ is exponentially large. However,  for a good compression code, it is natural to  expect  the projection on the set of codewords to be equivalent to the successive  application of the encoder and the decoder mappings  of the compression code. In other words, we expect $\mathcal{P}_{\C_r} (\cdot) = \mathcal{D}_r (\mathcal{E}_r (\cdot))$ or, at least, $\mathcal{D}_r (\mathcal{E}_r (\cdot))$ to be very close to $\mathcal{P}_{\C_r} (\cdot)$. We will present an example in Section \ref{sec:assumption} to justify this claim. We will also provide theoretical results regarding the robustness of GD-COPER to this assumption in Theorem \ref{thm convergence non-ideal compression}. Hence, in our simulations, we  use this observation  and run the GD-COPER algorithm as follows:
\begin{align*} 
\mvec{s}_{t + 1} &= \mvec{z}_t - \mu \nabla d_A ( \mvec{z}_t ),\nonumber\\
\quad \mvec{z}_{t + 1}& =  \mathcal{D}_r  (\mathcal{E}_r ( \mvec{s}_{t + 1} )).
\end{align*}
\end{rem}

We first mention our generic result. We will then, simplify this result in a few corollaries to interpret it and compare with the existing work.

\begin{thm} \label{thm convergence of algorithm}
For a fixed signal $\mvec{x} \in \Q$, define $\mvec{z}_t  \in \C_r$ as in \eqref{eq:GD-COPER} with $ \mu = \frac{1}{8 m} $.  Suppose that for all $  \theta \in \mathds{R}$,   $\ee^ {i \theta} \mvec{x} \in \Q$. Define
$ \theta_t \triangleq \arg \min\limits_{\theta \in \mathds{R}} \enVert{\mvec{z}_t -  {\rm e}^{i \theta} \mvec{x}}$.  For all $ \epsilon \geq C_2 m^{-\frac{1}{3}} $, with probability at least $ 1 - C_3 {\rm e}^{- C_1 \sqrt{m \epsilon} + (3 \ln 2) r} $, where $C_1, C_2, C_3 > 0$ are absolute constants, for  $t=1,2,\ldots$, we have
\begin{equation}\label{eq:theoremGD}
\enVert{ \mvec{z}_{t + 1} -  {\rm e}^{i \theta_t} \mvec{x} } \leq \intoo{ \enVert{ \mvec{z}_t -  {\rm e}^{i \theta_t} \mvec{x}} + \epsilon } \enVert{ \mvec{z}_t -  {\rm e}^{i \theta_t} \mvec{x}} + 3 \delta_r. 
\end{equation}

\end{thm}

Before  proving this theorem, we first simplify the statement of this theorem and compare it with Corollary \ref{cor_main}.  
The following Corollary shows having enough measurements, we may get arbitrary close to the COPER's solution with this algorithm, with exponentially high probability. 

\begin{coro}\label{cor:relaxedinit}
Consider the same setup as in Theorem \ref{thm convergence of algorithm}. Assume that  $ \inf\limits_{\theta \in \mathds{R}} \enVert{{\rm e}^{i \theta} \mvec{x} - \mvec{z}_0}  = 1 - 2 \tau < 1 $, for some $\tau > 0$.  Then, if $\delta \leq \frac{\tau (1 - 2 \tau)}{3}$,  and 
\[
m \geq \max \cbr{ \intoo{\frac{C_2}{\tau}}^3, \frac{C_4}{\tau} (\dim_\alpha (\F) \log \frac{1}{\delta} )^2},
\] 
after $T$ iterations of GD-COPER, 
\begin{equation}
\inf_{\theta \in \mathds{R}} \enVert{ {\rm e}^{i \theta} \mvec{x} - \mvec{z}_T }  \leq (1 - 2 \tau) \intoo{1 - \tau}^T  +\frac{3}{\tau} \delta_r,
\end{equation}
with probability at least
\begin{equation}
1 - C_3 {\rm e}^{ - \frac{C_1 \sqrt{\tau}}{2} \sqrt{m}}.
\end{equation}
Here, $C_1, C_2, C_3$ are the constants introduced in Theorem \ref{thm convergence of algorithm} with $ \epsilon = \tau$, and $C_4$ is an absolute constant.

\end{coro}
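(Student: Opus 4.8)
The plan is to specialize Theorem~\ref{thm convergence of algorithm} to the choice $\epsilon = \tau$ and then convert the quadratic one-step bound \eqref{eq:theoremGD} into a linear recursion that unrolls to the claimed estimate; all the real content is a short induction plus a geometric sum, and the only delicate point is the probability bookkeeping.

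\textbf{Step 1 (invoke the theorem).} The requirement $\epsilon \geq C_2 m^{-1/3}$ in Theorem~\ref{thm convergence of algorithm} becomes $\tau \geq C_2 m^{-1/3}$, which is exactly what the hypothesis $m \geq \intoo{C_2/\tau}^3$ guarantees. Thus, on a single event of probability at least $1 - C_3 {\rm e}^{-C_1\sqrt{m\tau} + (2\ln 2) r}$, the estimate
\[
a_{t+1} \leq (a_t + \tau)\, a_t + 3\delta_r, \qquad a_t \triangleq \enVert{\mvec{z}_t - {\rm e}^{i\theta_0}\mvec{x}},
\]
holds simultaneously for all $t \geq 0$ (the theorem already states it uniformly in $t$, so no further union bound over $t$ is needed), with $\theta_0$ as in Theorem~\ref{thm convergence of algorithm}; the corollary's assumption is precisely $a_0 = \inf_\theta \enVert{{\rm e}^{i\theta}\mvec{x} - \mvec{z}_0} = 1 - 2\tau$.

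\textbf{Step 2 (an invariant, then linearization).} First I would show by induction that $a_t \leq 1 - 2\tau$ for every $t$. The base case is the identity above. For the step, $g(x) \triangleq x^2 + \tau x$ is increasing on $[0,\infty)$, so from $a_t \leq 1-2\tau$ and the assumption $\delta_r \leq \delta \leq \tau(1-2\tau)/3$,
\[
a_{t+1} \leq g(a_t) + 3\delta_r \leq g(1-2\tau) + 3\delta_r = (1-2\tau)(1-\tau) + 3\delta_r \leq 1 - 2\tau,
\]
the last inequality holding because $(1-2\tau) - (1-2\tau)(1-\tau) = \tau(1-2\tau) \geq 3\delta_r$. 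Given this invariant, $a_t + \tau \leq 1 - \tau$, so the one-step bound degenerates into the linear recursion $a_{t+1} \leq (1-\tau)\,a_t + 3\delta_r$. Unrolling it and summing the geometric series,
\[
a_T \leq (1-\tau)^T a_0 + 3\delta_r \sum_{j=0}^{T-1}(1-\tau)^j \leq (1-2\tau)(1-\tau)^T + \frac{3}{\tau}\,\delta_r,
\]
and since $\inf_\theta \enVert{{\rm e}^{i\theta}\mvec{x} - \mvec{z}_T} \leq a_T$, this gives the asserted error bound.

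\textbf{Step 3 (probability bookkeeping — the hard part).} Finally I would absorb the $(2\ln 2) r$ term in the exponent: it suffices to have $(2\ln 2) r \leq \tfrac{C_1\sqrt{\tau}}{2}\sqrt{m}$, since then $-C_1\sqrt{m\tau} + (2\ln 2)r \leq -\tfrac{C_1\sqrt{\tau}}{2}\sqrt{m}$, leaving failure probability at most $C_3 {\rm e}^{-\tfrac{C_1\sqrt{\tau}}{2}\sqrt{m}}$. Here the rate $r = r(\delta)$ is tied to the target distortion, and by the definition \eqref{def alpha dim} of the $\alpha$-dimension, for $\delta$ small enough (equivalently, $r$ large enough, as is implicitly assumed in Corollary~\ref{cor_main}) one has $r \leq 2\dim_\alpha(\F)\log\frac{1}{\delta}$; it is then enough that $m \geq \frac{64(\ln 2)^2}{C_1^2\,\tau}\intoo{\dim_\alpha(\F)\log\frac{1}{\delta}}^2$, which is exactly what $m \geq \frac{C_4}{\tau}\intoo{\dim_\alpha(\F)\log\frac{1}{\delta}}^2$ supplies for a suitable absolute constant $C_4$. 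I expect this last step to be the only genuinely delicate one: because $\dim_\alpha$ is a $\limsup$, the bound $r(\delta) \lesssim \dim_\alpha(\F)\log(1/\delta)$ is only an eventual statement, so the corollary tacitly lives in the small-$\delta$ regime; everything else is the two-line induction and geometric sum of Step~2.
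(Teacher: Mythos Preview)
Your proposal is correct and follows essentially the same approach as the paper's own proof: apply Theorem~\ref{thm convergence of algorithm} with $\epsilon=\tau$, establish the invariant $a_t\leq 1-2\tau$ by induction using $\delta\leq\tau(1-2\tau)/3$, linearize to $a_{t+1}\leq(1-\tau)a_t+3\delta_r$, unroll the geometric sum, and finally absorb the $(2\ln 2)r$ term in the exponent by invoking $r\lesssim\dim_\alpha(\F)\log(1/\delta)$ for small $\delta$. The only cosmetic difference is that the paper uses the factor $1.5$ (so $C_4=2.25\,(4\ln 2/C_1)^2$) where you use $2$, which is immaterial since $C_4$ is an unspecified absolute constant.
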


\begin{proof}
We apply Theorem \ref{thm convergence of algorithm} with $ \epsilon = \tau $, thus we need $ \tau = \epsilon \geq C_2 m^{- \frac{1}{3}} $, hence
\begin{equation} \label{coro proof m lower bound 1}
m \geq \intoo{\frac{C_2}{\tau}}^3.
\end{equation}
With a probability larger than $1 - C_3 {\rm e}^{- C_1 \sqrt{m \tau} + (3 \ln 2) r} $ at each iteration we have
\begin{equation} \label{rmk conv alg proof 1}
\enVert{ \mvec{z}_{t + 1} - {\rm e}^{i \theta_{t + 1}}  \mvec{x} } \leq \intoo{ \enVert{ \mvec{z}_t - {\rm e}^{i \theta_{t}}  \mvec{x}} + \tau } \enVert{ \mvec{z}_t - {\rm e}^{i \theta_{t}}  \mvec{x}} + 3 \delta,
\end{equation}
hence,
\begin{align} \nonumber
\enVert{ \mvec{z}_{t + 1} - {\rm e}^{i \theta_{t + 1}} \mvec{x} } 
& \leq 
 \intoo{ \enVert{ \mvec{z}_t - {\rm e}^{i \theta_{t}}  \mvec{x}} + \tau } \enVert{ \mvec{z}_t - {\rm e}^{i \theta_{t}}  \mvec{x}} + 3 \delta
\\ & \leq \nonumber
(1 - \tau) (1 - 2 \tau) + 3 \delta
\\ & \leq \label{rmk conv alg proof 3}
1 - 2 \tau,
\end{align}
since $ \delta \leq \frac{\tau (1 - 2 \tau)}{3} $.
Therefore, by \eqref{rmk conv alg proof 1} and \eqref{rmk conv alg proof 3}, we may deduce that
\begin{equation*}
\enVert{ \mvec{z}_{t + 1} - {\rm e}^{i \theta_{t + 1}}  \mvec{x} } \leq \intoo{ 1 - \tau } \enVert{ \mvec{z}_t - {\rm e}^{i \theta_{t}}  \mvec{x}} + 3 \delta.
\end{equation*}
Hence we get,
\begin{align} \nonumber
\enVert{\mvec{x} - {\rm e}^{i \theta_{T}}  \mvec{z}_T} 
& \leq
 \intoo{1 -  \tau}^T \enVert{{\rm e}^{i \theta_{0}}  \mvec{x} - \mvec{z}_0}  + 3 \delta \intoo{1 + 1 -  \tau + \intoo{1 -  \tau}^2 + \ldots +\intoo{1 -  \tau}^{T- 1}}
 \\ & \leq \label{coro proof main ineq}
(1 - 2 \tau) \intoo{1 - 
\tau}^T  +\frac{3 \delta}{\tau}.
\end{align}

Moreover, if $\G$ denotes the event under which Theorem \ref{thm convergence of algorithm} holds, i.e. \eqref{eq:theoremGD} is satisfied, then
\begin{align*}
\p{\G} 
& \geq
1 - C_3 {\rm e}^{- {C_1} \sqrt{m \tau} + (3 \ln 2) r}
\\ & \geq
1 - C_3 {\rm e}^{- \frac{C_1 \sqrt{\tau}}{2}\sqrt{m}},
\end{align*}
once we have $ (3 \ln 2 ) r \leq \frac{C_1 \sqrt{\tau m}}{2} $, or equivalently
\begin{equation*}
m \geq \frac{C_4'}{\tau} r^2, \quad C_4' = \intoo{ \frac{6 \ln 2}{C_1} }^2.
\end{equation*}
Since $ \lim\limits_{r \to \infty} \frac{r}{\log \frac{1}{\delta}} = \dim_\alpha (\F) $, for large enough $r$, we have $ r \leq 1.5 \dim_\alpha (\F) \log \frac{1}{\delta} $. Hence,
\begin{equation} \label{coro proof m lower bound 2}
m \geq \frac{C_4}{\tau} \intoo{\dim_\alpha(\F) \log \frac{1}{\delta}}^2, 
\end{equation}
where $C_4 = 2.25 C_4'$.

Since we assumed 
$
m \geq \max \intoo{\intoo{\frac{C_2}{\tau}}^3, \frac{C_4}{\tau} \intoo{\dim_\alpha(\F) \log \frac{1}{\delta}}^2},
$
both \eqref{coro proof m lower bound 1} and \eqref{coro proof m lower bound 2} are satisfied.  Then by \eqref{coro proof main ineq} we obtain
\begin{equation}
\inf_{\theta \in \mathds{R}} \enVert{ {\rm e}^{i \theta} \mvec{x} - \mvec{z}_T } \leq (1 - 2 \tau) \intoo{1 - 
\tau}^T  +\frac{3 \delta}{\tau}.
\end{equation}
\end{proof}

\begin{rem}  \label{rem: conv with |x - z_0| < 1/2}
Consider the same setup as in Corollary \ref{cor:relaxedinit} and let $\tau = \frac{1}{4}$.  Then, for $ \delta \leq \frac{1}{24}$,  and
$$m \geq \max \cbr{ \intoo{{4 C_2}}^3, {4 C_4} (\dim_\alpha (\F) \log \frac{1}{\delta} )^2},$$
 after $T$ iterations of the GD-COPER algorithm, 
\begin{equation}
\inf_{\theta \in \mathds{R}} \enVert{ {\rm e}^{i \theta} \mvec{x} - \mvec{z}_T } \leq \frac{1}{2} \intoo{\frac{3}{4}}^T  +12 \delta,
\end{equation}
with  a probability greater than $1  - C_3 {\rm e}^{- \frac{C_1}{4} \sqrt{m} },$ 
where $C_i, \; i \in \cbr{1,\ldots, 4}$, are positive constants. 
\end{rem}

\begin{rem}
If $n$ is a large number (which is the case in almost all the applications of the phase retrieval), then we can set $\delta = \frac{1}{n}$ in Remark \ref{rem: conv with |x - z_0| < 1/2}, and  conclude that with $m \geq C'_4 \dim_\alpha (\F)^2 \log^2n$ measurements, GD-COPER can with high probability obtain an accurate estimate of $\mvec{x}$ (with $O(1/n)$ distortion). Hence, the number of measurements GD-COPER requires is substantially more than the number of observations COPER requires. At this stage, it is not clear whether this discrepancy is an artifact of our proof technique, the limitation of the GD-COPER algorithm, or a fundamental limitation of the polynomial time algorithms. We leave the full study of this phenomenon for future research. We should also mention that in the case of sparse phase retrieval \cite{CaLiMa16} observed that even under a good initialization the thresholded Wirtinger flow algorithm can recover the signal exactly with $k^2 \log n$ measurements, which is again consistent with our result. Furthermore, the paper presented other evidences to suggest that to obtain a good initialization $k^2 \log n$ measurements are required. It is also worth mentioning that \cite{LiVl13} has shown that convex relaxation methods will not work if the number of measurements is less than $c k^2/ \log n$ for constant $c$.  
\end{rem}

\begin{rem}
Corollary \ref{cor:relaxedinit} proves the accuracy of GD-COPER under an initialization that satisfies $\inf\limits_{\theta \in \mathds{R}} \enVert{{\rm e}^{i \theta} \mvec{x} - \mvec{z}_0}  = 1 - 2 \tau < 1$. Finding an initialization that theoretically satisfies this condition is a good research direction for future research. However, as will be clarified in our simulation results and has also be discussed elsewhere \cite{MaXuMa19}, the choice of initialization seems to have a minor effect (almost none) on the performance of GD-COPER (and other iterative algorithms). Hence, in our simulation results we have initialized GD-COPER with a white image. 
\end{rem}

\begin{proof}[Roadmap of the proof of Theorem \ref{thm convergence of algorithm}]
Let $\tilde{\mvec{x}} = \P_{\C_r}(  {\rm e}^{i \theta_t} \mvec{x}) $.  Since $ \mvec{z}_{t + 1} = \P_{\C_r} (\mvec{s}_{t + 1})$ and $ \tilde{\mvec{x}} \in \C_r $, we have
\begin{align} \nonumber
\enVert{ \mvec{s}_{t + 1} -  \tilde{\mvec{x}} }^2 
& \geq \enVert{ \mvec{s}_{t + 1} -  {\mvec{z}_{t + 1}} }^2 
\\ & =
\enVert{\mvec{s}_{t + 1} -  \tilde{\mvec{x}} }^2 + \enVert{ \tilde{ \mvec{x}} - \mvec{z}_{t + 1}}^2 + 2 \Re \intoo{ ( \tilde{ \mvec{x}} - \mvec{z}_{t + 1})^* (\mvec{s}_{t + 1} - \tilde{\mvec{x}}) }.
\end{align}
Therefore,
\begin{equation} \label{upper bdd of error by real inner product}
\enVert{ \tilde{\mvec{x}} - \mvec{z}_{t + 1}}^2 \leq 2 \Re \intoo{ ( \tilde{ \mvec{x}} - \mvec{z}_{t + 1})^* (\tilde{\mvec{x}} - \mvec{s}_{t + 1}) }.
\end{equation}
Let 
\begin{align}
 \mvec{v}_{t} \triangleq \frac{ \tilde{\mvec{x}}  - \mvec{z}_{t + 1}}{\enVert{ \tilde{\mvec{x}}  - \mvec{z}_{t + 1} }}.
\end{align}
Using this definition, \eqref{upper bdd of error by real inner product} can be written as
\begin{equation} \label{def and ineq v_z}
\enVert{ \tilde{\mvec{x}} - \mvec{z}_{t + 1}} \leq 2 \Re \intoo{ \mvec{v}_{t}^* ( \tilde{\mvec{x}} -  \mvec{s}_{t + 1}) }.
\end{equation}

Recall that  $ \e{ \nabla d_A (\mvec{z}) } 
= 8m ( \mvec{z} \mvec{z}^* - \mvec{x} \mvec{x}^* ) \mvec{z} $. Hence, 
\begin{align} \nonumber
 \tilde{\mvec{x}} - \mvec{s}_{t + 1} 
& =
 \tilde{\mvec{x}} -  {\rm e}^{i \theta_t} \mvec{x}  + {\rm e}^{i \theta_t} \mvec{x}  - \intoo{ \mvec{z}_t - \frac{1}{8 m} \e{ \nabla d_A( \mvec{z}_t) } + \frac{1}{8m} \intoo{\e{ \nabla d_A(\mvec{z}_t)} - \nabla d_A(\mvec{z}_t)} }
\\ & = \nonumber
 \tilde{\mvec{x}} -  {\rm e}^{i \theta_t} \mvec{x}  + {\rm e}^{i \theta_t} \mvec{x}  - \intoo{\mvec{z}_t - \mvec{z}_t + (\mvec{x}^* \mvec{z}_t) \mvec{x}} + \frac{1}{8m} \intoo{ \nabla d_A (\mvec{z}_t) - \e{ \nabla d_A(\mvec{z}_t)} }
\\ & = \label{thm conv alg proof 1}
\tilde{\mvec{x}} -  {\rm e}^{i \theta_t} \mvec{x}  + (1 - ({\rm e}^{i \theta_t} \mvec{x})^* \mvec{z}_t) {\rm e}^{i \theta_t} \mvec{x}  + \frac{1}{8m} \intoo{ \nabla d_A(\mvec{z}_t) - \e{ \nabla d_A(\mvec{z}_t)} }.
\end{align}
Note that 
$ \enVert{\tilde{\mvec{x}} - {\rm e}^{i \theta_t} \mvec{x}} \leq \delta_r$.
Also,  since by lemma \ref{lem correct phase_a} we have $1 -({\rm e}^{i \theta_t} \mvec{x})^* \mvec{z}_t = \frac{1}{2} \enVert{  {\rm e}^{i \theta_t} \mvec{x} - \mvec{z}_t }^2
$
and $ \enVert{{\rm e}^{i \theta_t} \mvec{x}} = \enVert{\mvec{v}_{t}} = 1 $, by the triangle inequality,  from \eqref{def and ineq v_z} and \eqref{thm conv alg proof 1}, we have
\begin{eqnarray} \nonumber
\lefteqn{
\enVert{  { {\rm e}^{i \theta_t} \mvec{x}} - \mvec{z}_{t + 1} }
  \leq
\enVert{{{\rm e}^{i \theta_t} \mvec{x}} - \tilde{\mvec{x}}} + \enVert{  \tilde{\mvec{x}} - \mvec{z}_{t + 1} } 
} 
\\ \nonumber
& \leq &
\delta_r + 2 \Re \intoo{ \mvec{v}_{t}^* ( \tilde{\mvec{x}} -  \mvec{s}_{t + 1}) }
\\ & \nonumber
\leq & \delta_r + 2 \enVert{\mvec{v}_t} \enVert{\tilde{\mvec{x}} - {\rm e}^{i \theta_t} \mvec{x}} + 2 (1 - ({\rm e}^{i \theta_t} \mvec{x})^* \mvec{z}_t) \enVert{\mvec{v}_t} \enVert{ {\rm e}^{i \theta_t} \mvec{x}} + \frac{1}{4m} \Re \intoo{\mvec{v}_t^* \intoo{\nabla d_A(\mvec{z}_t) - \e{\nabla d_A(\mvec{z}_t)}} }
\\
& \leq & \label{thm conv alg proof 2}
\delta_r + 2 \delta_r + \enVert{{{\rm e}^{i \theta_t} \mvec{x}} - \mvec{z}_{t} }^2 + \frac{1}{4 m} \Re  \intoo{\mvec{v}_{t}^*\intoo{ \nabla d_A(\mvec{z}_t) - \e{ \nabla d_A(\mvec{z}_t)} } }.
\end{eqnarray}
Define event $\cal G$ as follows
\begin{equation} \label{def event G}
{\cal G} \triangleq \cbr{ \frac{1}{4m} \Re \left(\mvec{v}^* \intoo{ \nabla d_A(\mvec{z}) - \e{ \nabla d_A(\mvec{z}) } } \right) \leq \epsilon \inf_{\theta \in \mathds{R}} \enVert{ {\rm e}^{i \theta} \mvec{x} - \mvec{z} }, \quad \mvec{v} = \frac{\tilde{\mvec{x}} - \mvec{z'}}{\enVert{\tilde{\mvec{x}} - \mvec{z'}}}, \qquad \forall \tilde{\mvec{x}}, \mvec{z}, \mvec{z'} \in \C_r }. 
\end{equation}
 
 One difficulty  in bounding  $\p{\cal G}$ is that $\nabla d_A(\mvec{z})$ is summation of heavy-tailed random variables.  To address this issue,  in Lemma \ref{lem concentration of gradient} (stated and proved in Section \ref{ssec:concgradient}), we develop a technique that yields sharp concentration bounds for such summations. Applying Lemma \ref{lem concentration of gradient} with $4 \epsilon$, for a given $\mvec{v} \in \mathds{C}^{n}$ with $\enVert{\mvec{v}} = 1$ and $ \mvec{z} \in \C_r $, we get constants $C_1, C_2, C_3 > 0$ for which, for every $\epsilon \geq C_2 m^{-\frac{1}{3}}$,
\begin{equation}\label{eq:concentrationsinglepart}
\p{ \envert{ \Re \intoo{  \mvec{v}^* \intoo{ \nabla d_A(\mvec{z}) - \e{\nabla d_A(\mvec{z})}} } } > 4 m \epsilon \inf_{\theta \in \mathds{R}} \enVert{{\rm e}^{i \theta} \mvec{x} - \mvec{z} } } \leq  C_3 {\rm e}^{ - C_1 \sqrt{m \epsilon}}. 
\end{equation}
Hence, combining \eqref{eq:concentrationsinglepart} with the union bound, for every $\epsilon \geq C_2 m^{-\frac{1}{3}}$, we have 
\begin{equation}
\p{\cal G} \geq 1 - 2^{3r} C_3 {\rm e}^{- C_1 \sqrt{m \epsilon}}. 
\end{equation}
Therefore, conditioned on  $\cal G$ we have  
\begin{align}
\frac{1}{4 m} \Re  \intoo{\mvec{v}_{t}^*\intoo{ \nabla d_A(\mvec{z}_t) - \e{ \nabla d_A(\mvec{z}_t)} } } & \leq \epsilon \inf_{\theta \in \mathds{R}} \enVert{{\rm e}^{i \theta} \mvec{x} - \mvec{z}_t} 
= \epsilon \enVert{{\rm e}^{i \theta_t} \mvec{x} - \mvec{z}_t},
\end{align}
hence, \eqref{thm conv alg proof 2} implies that, for all  $t\in\{ 1 \cdots,T\}$, 
\begin{equation}
\enVert{ \mvec{z}_{t + 1} -  {\rm e}^{i \theta_t} \mvec{x} } \leq \intoo{ \enVert{ \mvec{z}_t - {\rm e}^{i \theta_t} \mvec{x}} + \epsilon } \enVert{ \mvec{z}_t - {\rm e}^{i \theta_t} \mvec{x}} + 3 \delta_r,
\end{equation}
which in turn leads to \eqref{eq:theoremGD}. 

\end{proof}

\section{Simulation results}
\label{sec: simulation}

The main goal of this section is to experimentally evaluate the performance of our algorithm. Furthermore, comparisons between our algorithm and Wirtinger flow will be presented to empirically evaluate the amount of gain a compression scheme offers. Since the publicly available compression algorithms work with real-valued images, in our simulation results we focus on real-valued signals and measurements only. Note that even though our theoretical results are presented for complex-valued signals, the extension to real-valued signals is straightforward. For the sake of brevity, we did not include such extensions. 

\subsection{Measurement matrices}

 We consider two types of measurement matrices: (i) Gaussian measurement matrices in which $A_{ij} \overset{iid}{\sim} N(0,1)$, and (ii) coded diffraction patterns in which the measurements are constructed in the following way:

\begin{equation}
{y}_{i, l} = \envert{\sum_{k = 1}^n {x}_k \cos \intoo{\frac{i \pi}{n} \intoo{k + \frac{1}{2}}} {M}_{l, k}}.
\end{equation}
In these measurements, $M_{l,k}$ modulates the entries of the signal and is drawn from the following distribution:
\begin{equation} \label{def: mask}
{M}_{l, k} \overset{iid}{\sim} \begin{dcases}
1 & \text{with probability} \quad \frac{1}{4}
\\
-1 & \text{with probability} \quad \frac{1}{4}
\\
0 & \text{with probability} \quad \frac{1}{2}
\end{dcases},
\qquad
1 \leq k \leq n,
\quad
1 \leq l \leq L.
\end{equation}
Coded diffraction patterns have recently received attention in the phase retrieval problem since they can outperform the Fourier matrices. Note that due to the construction of the coded diffraction measurement matrices, the imaging system is over-sampled by the factor $L$. Our simulation results will cover $L \in \{1, 2, \ldots, 15\}$. As we will discuss later, GD-COPER algorithm is capable of performing well even when $L=1$. Note that given that the signs are missing, this can be considered as an under-sampled situation.

\subsection{Setting the parameters}\label{ssec:parametersetting}
\subsubsection{GD-COPER}

 In our simulation results, we will be using natural images, and JPEG2000 compression algorithm. In particular, we have used a python implementation of the JPEG2000 which is a part of the PIL package available at : {\url {https://pillow.readthedocs.io/en/3.1.x/handbook/image-file-formats.html##jpeg-2000}}. The compression algorithm has multiple inputs. The first one is the image itself. The other parameter that is important in our implementation is the parameter ``quality-layer'', denoted by $q$ in our paper, that controls compression ratio (or equivalently the rate). Figure \ref{fig: different quality layers} shows the result of the compression-decompression for three different values of the parameter $q$. It is clear from this figure that as $q$ decreases, the distortion in the reconstructed image reduces. The value $q=0$ corresponds to the lossless compression. 

\begin{figure}
\centering
\begin{subfigure}[b]{.3 \textwidth}
\includegraphics[width = \textwidth]{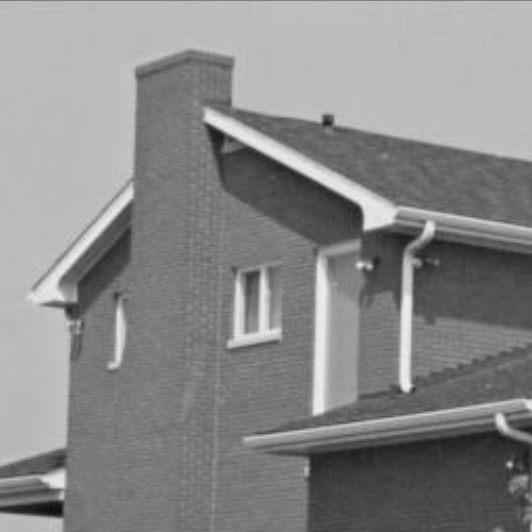}
\caption{q = 0}
\end{subfigure}
\begin{subfigure}[b]{.3 \textwidth}
\includegraphics[width = \textwidth]{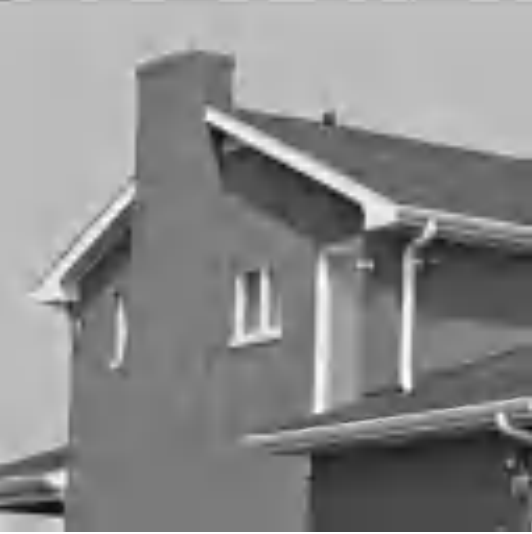}
\caption{q = 50}
\end{subfigure}
\begin{subfigure}[b]{.3 \textwidth}
\includegraphics[width = \textwidth]{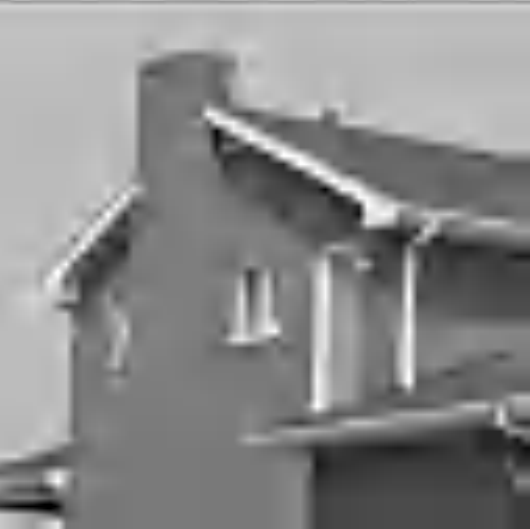}
\caption{q = 100}
\end{subfigure}
\caption{Compression with different quality-layers}
\label{fig: different quality layers}
\end{figure}

The GD-COPER algorithm has three different parameters that require tuning: (i) initialization, (ii) the quality parameter of the compression algorithm $q$ at every iteration, and  (iii) the step size $\mu$. As will be discussed in {Section \ref{sec:stability to initialization}}, our algorithm is not sensitive to the the initialization and in our simulation results, we start the algorithm with the white image. Hence, in this section, we only describe how we set the step-size $\mu_t$ and the quality parameter $q_t$ at every iteration. The problem of parameter tuning for iterative algorithms is a challenging problem that has not been settled properly yet \cite{parameter_tuning}. Hence, after doing multiple runs of the algorithm we have found a set of parameters that work well in practice. Below we summarize the chosen parameters for the Gaussian and coded diffraction patterns. We should emphasize that better tuning are expected to improve the performance of GD-COPER. Below we discuss our choice of parameters for the Gaussian and coded diffraction patterns separately.  

\begin{itemize}

\item Gaussian matrices:
The ``quality-layer" and step-size parameters at step $t$ are set in the following way:

\begin{equation} \label{eq:rate and stepsize setting}
\begin{dcases}
q_t = 40, \quad \mu_t = .2 \times \frac{\enVert{\mvec{z}_t}}{\enVert{\nabla d_A(\mvec{z}_t)}} \qquad & 1 \leq t \leq 10
\\
q_t = 0, \quad \mu_t = .02 \times \frac{\enVert{\mvec{z}_t}}{\enVert{\nabla d_A(\mvec{z}_t)}} \qquad & t \geq 11
\end{dcases}
\end{equation}
Note that $q_t = 0$ means that the algorithm employs an almost-lossless compression. The main reason an almost-lossless compression is used in the final iteration is that we are considering noiseless observations. We run the GD-COPER for 50 iterations, since the error does not decrease much after that.

\item Coded diffraction patterns: The value of parameters we chose for the coded-diffraction patterns is somewhat different from the ones we chose for Gaussian matrices. For such matrices, we adopt the following parameters:

\begin{equation} \label{eq:DCT rate and stepsize setting}
\mu_t = \max \intoo{{\rm e}^{0.7 - 0.41 t},  0.02} \times \frac{\enVert{\mvec{z}_t}}{\enVert{\nabla d_A(\mvec{z}_t)}},
\qquad
\begin{dcases}
q_t =50 & 1 \leq t \leq 5
\\
q_t =20 & 6 \leq t \leq 30
\\
q_t =0 & t > 30
\end{dcases}.
\end{equation}
\end{itemize}

We run the GD-COPER for 50 iterations, since the error remains almost the same for further iterations. Again these parameters are obtained by comparing a number of choices and choosing the one that seems to perform well on a wide range of images and problem instances. 

\subsubsection{Setting the parameters of Wirtinger flow}

The following parameters of the Wirtinger flow require tuning: (i) initialization, (ii) step size. Most of the papers, including \cite{Wirtinger Flow} suggest using the spectral method for the initialization of the algorithm. Our simulation results, some of which are reported in Section \ref{sec: spectral}, show that the algorithm works better when it is initialized with the white image. Hence, in all our simulations, except the ones in Sections \ref{sec: spectral} and \ref{sec:stability to initialization}, we initialize the algorithm with a white image. 

For setting the step size, we follow the suggestions of \cite{Wirtinger Flow}, and adopt the following policy:
\begin{equation} \label{eq: WF step-size}
\mu_t = \min \intoo{1 - {\rm e}^{- \frac{t}{t_o}}, \mu_{\max}}, 
\end{equation}
where $t_o = 330, \; \mu_{\max} = 0.4$.  Moreover, 300 iterations are used in all runs of Wirtinger Flow ( this is the number which is suggested in the simulations of \cite{Wirtinger Flow}) except for the cases that due to the divergence of algorithm the machine terminates the run earlier. Divergence happens when the norm of $\mvec{z}_t$ goes to infinity.

\subsection{Results}

We present our results for Gaussian and coded diffraction patterns separately.

\subsubsection{Gaussian measurement matrices}

\begin{table}[h!]
\centering
\caption{Results for the Gaussian measurement matrices. Both the GD-COPER algorithm and the Wirtinger flow are initialized with a white image. The setting of all the other parameters is described in Section \ref{ssec:parametersetting}. The notation DVG in the table refers to the fact that the algorithm either stops since the norm of $z$ diverges to infinity, or returns a result with negative PSNR.}
\begin{tabular}[t]{| c | *{3}{c|} | *{2}{c|}}
\hline
\multirow{2}{*}{Target} & \multirow{2}{*}{$\frac{m}{n}$}
 & \multicolumn{2}{c||}{GD-COPER} & \multicolumn{2}{c|}{Wirtinger Flow}
\\  \cline{3-6}
& & PSNR & \scriptsize{Run time} & PSNR & \scriptsize{Run time}
\\ \hline
\multirow{4}{*}{\includegraphics[width=.1 \textwidth]{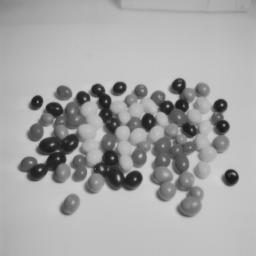}}  
& 0.5 & 23.22 & 11.2 & DVG & 8.68
\\ \cline{2-6}
 & 0.73 & 24.44 & 15.2 & DVG & 15.2
\\ \cline{2-6}
 & 1.0 & 25.63 & 18.9 & DVG & 30.6
\\ \cline{2-6}
 & 2.0 & 31.79 & 29.3 & DVG & 106.
\\  \hline \hline
\multirow{4}{*}{\includegraphics[width=.1 \textwidth]{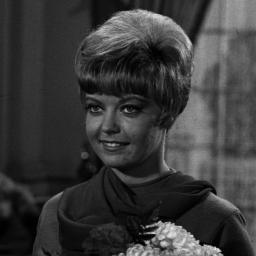}} 
 & 0.5 & 22.58 & 13.1 & 4.83 & 39.3
\\ \cline{2-6}
 & 0.73 & 24.79 & 15.6 & 6.5 & 60.3
\\ \cline{2-6}
 & 1.0 & 26.43 & 17.9 & 8.68 & 79.6
\\ \cline{2-6}
 & 2.0 & 31.91 & 31.3 & 17.71 & 135.
\\  \hline \hline
\multirow{4}{*}{\includegraphics[width=.1 \textwidth]{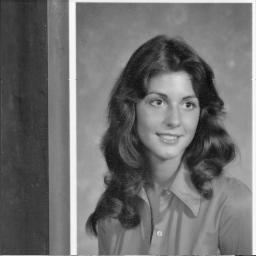}} 
 & 0.5 & 21.42 & 11.9 & DVG & 13.4
\\ \cline{2-6}
 & 0.73 & 23.73 & 15.2 & DVG & 33.1
\\ \cline{2-6}
 & 1.0 & 25.84 & 18.8 & 10.94 & 82.8
\\ \cline{2-6}
 & 2.0 & 32.36 & 30.1 & 29.66 & 136.
\\  \hline \hline
\multirow{4}{*}{\includegraphics[width=.1 \textwidth]{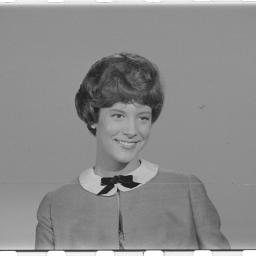}} 
 & 0.5 & 25.5 & 12.2 & DVG & 14.2
\\ \cline{2-6}
 & 0.73 & 27.43 & 13.9 & DVG & 22.1
\\ \cline{2-6}
 & 1.0 & 29.15 & 18.3 & DVG & 41.7
\\ \cline{2-6}
 & 2.0 & 34.76 & 29.6 & 33.36 & 140.
\\  \hline \hline
\multirow{4}{*}{\includegraphics[width=.1 \textwidth]{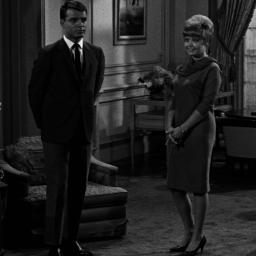}} 
  & 0.5 & 22.03 & 12.4 & 3.92 & 43.1
\\ \cline{2-6}
 & 0.73 & 24.08 & 15.1 & 5.68 & 59.0
\\ \cline{2-6}
 & 1.0 & 26.67 & 17.4 & 7.94 & 74.2
\\ \cline{2-6}
 & 2.0 & 33.07 & 28.4 & 14.35 & 143.
\\  \hline \hline
\multirow{4}{*}{\includegraphics[width=.1 \textwidth]{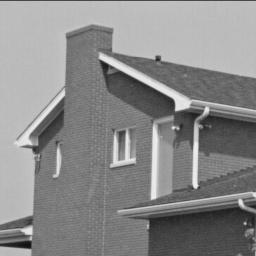}} 
& 0.5 & 21.83 & 11.2 & DVG & 7.64
\\ \cline{2-6}
 & 0.73 & 23.35 & 15.7 & DVG & 20.7
\\ \cline{2-6}
 & 1.0 & 24.52 & 19.9 & DVG & 34.1
\\ \cline{2-6}
 & 2.0 & 32.67 & 28.8 & 35.65 & 135.
\\  \hline \hline
\multirow{4}{*}{\includegraphics[width=.1 \textwidth]{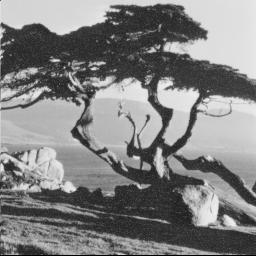}} 
 & 0.5 & 17.49 & 10.9 & DVG & 10.9
\\ \cline{2-6}
 & 0.73 & 18.68 & 14.4 & DVG & 20.9
\\ \cline{2-6}
 & 1.0 & 21.44 & 19.0 & DVG & 37.8
\\ \cline{2-6}
 & 2.0 & 29.04 & 29.8 & 32.74 & 140.
\\  \hline \hline
\end{tabular}
 \label{table:Gaussion results}
\end{table}

In our simulations, we consider seven different images shown in the first column of Table \ref{table:Gaussion results}.
 All these images are chosen from {``The Miscellaneous volume data-set''}, which is publicly available at {\url http://sipi.usc.edu/database/database.php?volume=misc}.
Since the images are colored we have extracted the luminance of the image and all the simulations are performed on gray-scale images. To reduce the computational complexity of our algorithm (in the case of Gaussian measurements only) we downsample images to reduce their size to $128 \times 128$. This size reduction helps us avoid the issues we face in storing i.i.d.~ Gaussian matrices. However, it also reduces the structures that exist in an image. Hence, JPEG2000 loses some of its efficiency. Hence, we expect the GD-COPER to perform better as the image size increases. This will become clearer when we work with larger images in the coded diffraction pattern simulations. 

After the downsampling, the signals' dimensions are $n=16384$. In Table \ref{table:Gaussion results}, we have considered $m = 32786, 16384$, $12000,$ and $8192$. Note that, in most of  these systems, not only the measurements are phaseless, but also they are undersampled.  

In each setup, we compare the performance of our algorithm with that of the Wirtinger flow. In addition to comparing the quality of the reconstruction via evaluating the peak-signal-to-noise-ratio (PSNR),\footnote{PSNR is defined as \begin{equation*} \label{def:PSNR}
{\rm PSNR} = 20 \log_{10} \intoo{\frac{255}{\sqrt{ \rm MSE}}},
\end{equation*}
where ${\rm MSE}$ is the mean squared error obtained from the last iteration of the algorithm.} we report the run time of the algorithms as well.   
 The run times of the algorithm are measured on a laptop computer with 2.8 GHz Intel Core i7 processor and 16 GB RAM. We can draw the following conclusions from the results reported in Table \ref{table:Gaussion results}:

\begin{enumerate}

\item[(i)] As expected, the Wirtinger flow does not do well when $\frac{m}{n}\leq1$. This is in contrast to the performance of GD-COPER that can obtain reasonable estimates even for $m/n \leq 1$. Note that in some cases, the Wirtinger flow can do as well as GD-COPER when $\frac{m}{n} = 2$. This happens because we have downsampled the images to $128 \times 128$ size, and hence we have removed some structures that JPEG2000 could otherwise  employ. In other words, JPEG2000 cannot efficiently reduce the size of such images, and hence GD-COPER is not capable of employing the structures of such images either. In the next section, GD-COPER works with large images (we can do this with coded diffraction patterns), and will outperform the Wirtinger flow with a larger margin. See Figure \ref{fig:final steps} for a visual comparison of the GD-COPER and Wirtinger flow algorithms.

\begin{figure}[h!]
\begin{subfigure}[b]{.19 \textwidth}
\includegraphics[width= \textwidth]{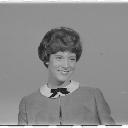}
\caption{Original Image}
\end{subfigure}
\begin{subfigure}[b]{.19 \textwidth}
\includegraphics[width= \textwidth]{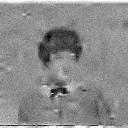}
\caption{$\frac{m}{n} = .5$}
\end{subfigure}
\begin{subfigure}[b]{.19 \textwidth}
\includegraphics[width= \textwidth]{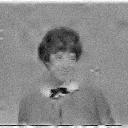}
\caption{$\frac{m}{n} = .73$}
\end{subfigure}
\begin{subfigure}[b]{.19 \textwidth}
\includegraphics[width= \textwidth]{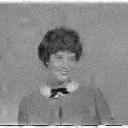}
\caption{$\frac{m}{n} = 1$}
\end{subfigure}
\begin{subfigure}[b]{.19 \textwidth}
\includegraphics[width= \textwidth]{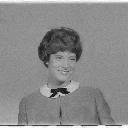}
\caption{$\frac{m}{n} = 2$}
\end{subfigure}

\begin{subfigure}[b]{.19 \textwidth}
\includegraphics[width= \textwidth]{Girl_3_128.jpg}
\caption{Original Image}
\end{subfigure}
\begin{subfigure}[b]{.19 \textwidth}
\includegraphics[width= \textwidth]{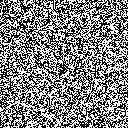}
\caption{$\frac{m}{n} = .5$}
\end{subfigure}
\begin{subfigure}[b]{.19 \textwidth}
\includegraphics[width= \textwidth]{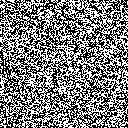}
\caption{$\frac{m}{n} = .73$}
\end{subfigure}
\begin{subfigure}[b]{.19 \textwidth}
\includegraphics[width= \textwidth]{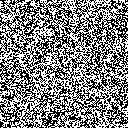}
\caption{$\frac{m}{n} = 1$}
\end{subfigure}
\begin{subfigure}[b]{.19 \textwidth}
\includegraphics[width= \textwidth]{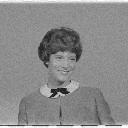}
\caption{$\frac{m}{n} = 2$}
\end{subfigure}
\caption{First row: outcomes of GD-COPER for different values of $m/n$. Second row: outcomes of Wirtinger Flow for different values of $m/n$. The original image is shown in the left column. The measurement matrix is Gaussian.  } \label{fig:final steps}
\end{figure}

\item [(ii)] GD-COPER is faster than the Wirtinger flow. Note that each iteration of GD-COPER is computationally more demanding than that of the Wirtinger flow. However, GD-COPER requires less steps to obtain a good estimate of the signal. {Figure \ref{figure:normalized error} compares the normalized MSE (we have normalized the mean square error, by the energy of the underlying signal) of GD-COPER as a function of the iteration number with that of the Wirtinger flow. We can see that GD-COPER converges in 10 iteration, while Wirtinger flow requires around 200 iterations to get to a comparable error if it does not diverge.  }

 \begin{figure}[h!]
\begin{subfigure}[b]{.5 \textwidth}
\includegraphics[width =  \textwidth]{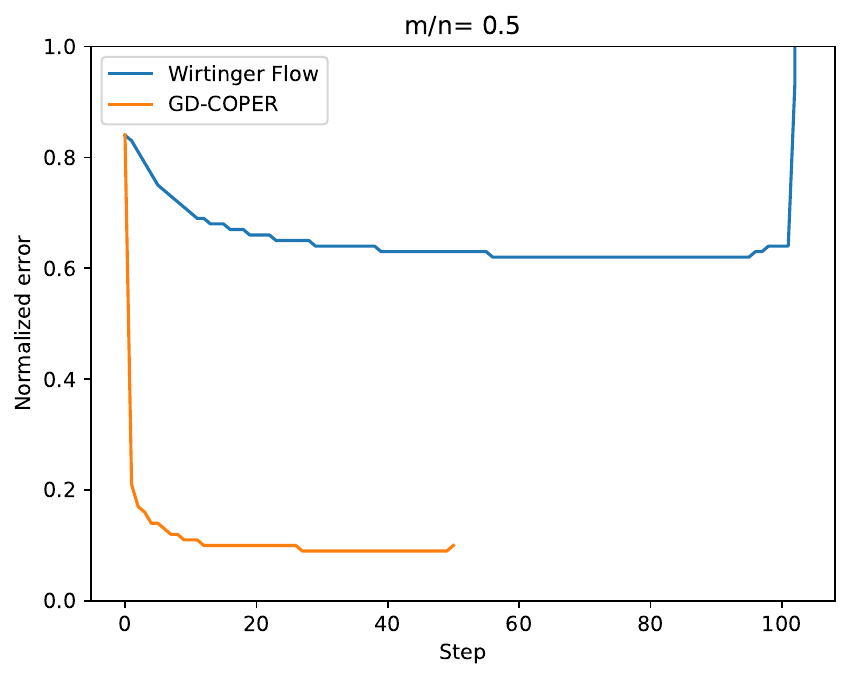}
\end{subfigure}
\begin{subfigure}[b]{.5 \textwidth}
\includegraphics[width =  \textwidth]{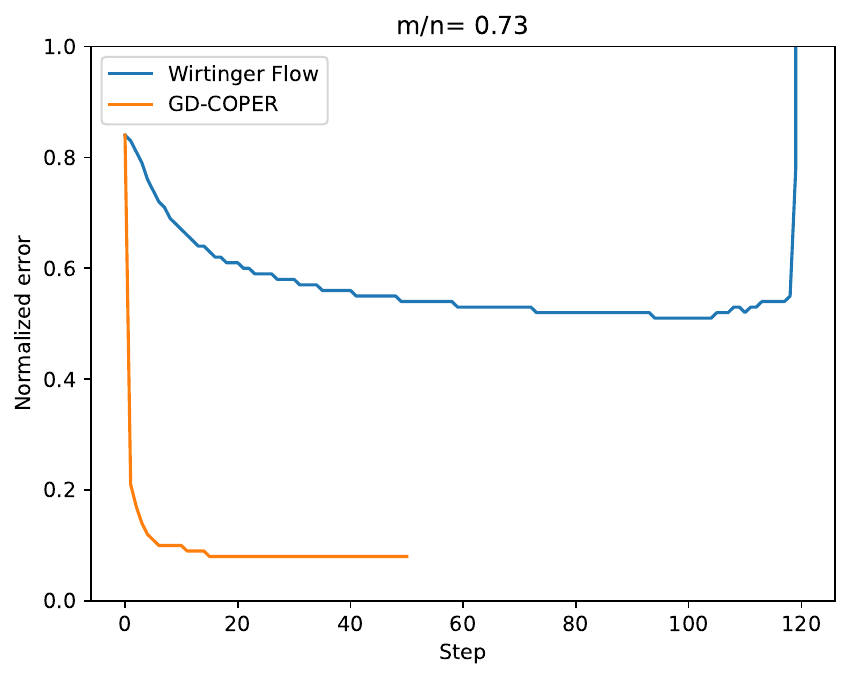}
\end{subfigure}
\begin{subfigure}[b]{.5 \textwidth}
\includegraphics[width =  \textwidth]{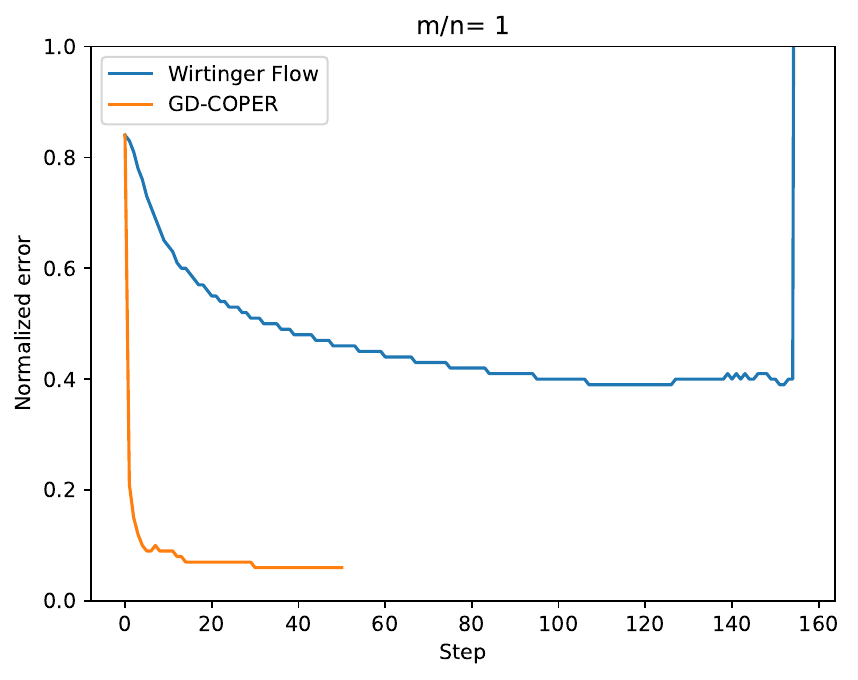}
\end{subfigure}
\begin{subfigure}[b]{.5 \textwidth}
\includegraphics[width =  \textwidth]{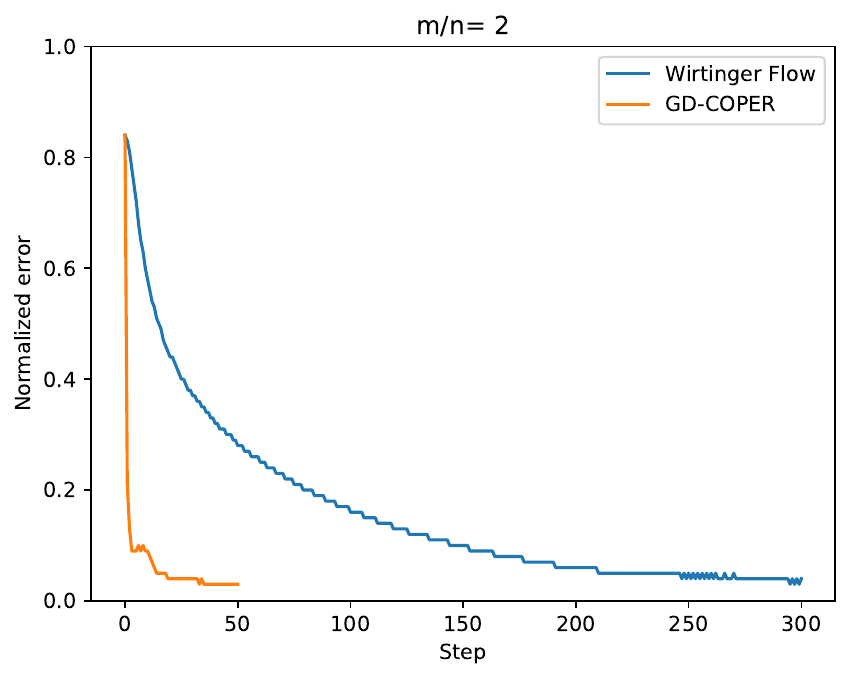}
\end{subfigure}

\caption{Normalized mean square error as a function of the iteration number for four different values of $m/n$. The original image is the same as the one chosen for Figure \ref{fig:final steps}.}
 \label{figure:normalized error}
\end{figure}

\end{enumerate}


\subsubsection{Coded diffraction model}
\label{sec:Coded diff model}
In this section, we evaluate the performance of our algorithm on the more practical coded-diffraction measurements. Again, we work with the seven images we introduced in the last section. However, given the fact that in the case of the coded diffraction patterns the measurement matrix is not explicitly stored we will use images in their original sizes, $256 \times 256$. We compare the performance of GD-COPER with that of the Wirtinger flow for different $m/n$ ratios. Tables \ref{table:dct results_1} and  \ref{table:dct results_2} summarize our simulation results. Again we should emaphasize that both the GD-COPER and Wirtinger flow are initialized with an all-white image. We can draw the following conclusion from Tables \ref{table:dct results_1} and  \ref{table:dct results_2}:

\begin{enumerate}
\item Again for lower values of the sampling ratio $m/n$ such as $m/n\leq 5$ Wirtinger flow is not capable of finding a good estimate. However, GD-COPER obtains an accurate estimate for $m/n\leq 5$, and even for $m/n=1$. If we compare these simulations with the ones we had for Gaussian matrices, it seems to be the case that the discrepancy between the performance of the Wirtinger flow and GD-COPER has increased in the coded-diffraction simulation. Part of this is a result of the fact that our simulations have been performed on larger images for which JPEG2000 is more efficient.  

\item As we increase the number of masks, usually after $L=10$ the performance of GD-COPER saturates, while Wirtinger flow continues to improve. There are two effects that cause the saturation of the GD-COPER: (i) Given that the compression is applied at every iteration, even though it is in its loss-less mode, it still imposes some quantization to the estimates. (ii) Suboptimal parameter tuning. We believe that the performance saturation of the algorithm does not cause a major issue in practice since it happens at very high values of PSNR, e.g. $40$ dB. However, as a result of the saturation, we see that in most cases, when $m/n> 15$, then the Wirtinger flow outperforms GD-COPER (for the sake of brevity we have not included the results of $m/n> 15$ in our tables). Hence, if extremely accurate estimates of the signal are required (e.g. PSNR= $50$ dB) and we have enough masks, then the Wirtinger flow should be preferred over GD-COPER.

\item The computational complexity of GD-COPER is comparable with that of the Wirtinger flow. Note that each iteration of GD-COPER is computationally more demanding than that of the Wirtinger flow. However, GD-COPER requires less steps to obtain a good estimate of the signal.

\end{enumerate}

\begin{table}[h!]
\centering
\caption{Comparison between the Wirtinger flow and the GD-COPER with the coded diffraction patterns for different values of $m/n$. The true images of the simulations are shown in the first column. }
\begin{tabular}{| c | *{3}{c|} | *{2}{c|}}
\hline
\multirow{2}{*}{Target} & \multirow{2}{*}{$\frac{m}{n}$}
 & \multicolumn{2}{c||}{GD-COPER} & \multicolumn{2}{c|}{Wirtinger Flow}
\\  \cline{3-6}
& & PSNR & \scriptsize{Run time} & PSNR & \scriptsize{Run time}
\\ \hline
\multirow{14}{*}{\includegraphics[width=.1 \textwidth]{original_Balls.jpg}} 
 & 1 & 27.8 & 13.0 & DVG & 1.2
\\ \cline{2-6}
 & 2 & 34.7 & 16.9 & DVG & 2.0
\\ \cline{2-6}
 & 3 & 36.2 & 18.1 & DVG & 2.7
\\ \cline{2-6}
 & 4 & 39.7 & 19.8 & DVG & 4.2
\\ \cline{2-6}
 & 5 & 42.1 & 14.2 & DVG & 4.1
\\ \cline{2-6}
 & 6 & 38.5 & 14.6 & DVG & 4.3
\\ \cline{2-6}
 & 7 & 42.7 & 15.4 & DVG & 5.7
\\ \cline{2-6}
 & 8 & 44.5 & 15.6 & DVG & 6.2
\\ \cline{2-6}
 & 9 & 38.9 & 16.1 & 23.6 & 18.9
\\ \cline{2-6}
 & 10 & 49.1 & 15.1 & 17.8 & 12.7
\\ \cline{2-6}
 & 15 & 38.6 & 17.3 & 13.0 & 23.0
\\ \hline
\multirow{14}{*}{\includegraphics[width=.1 \textwidth]{original_Couple.jpg}} 
 & 1 & 19.4 & 14.3 & 4.1 & 2.8
\\ \cline{2-6}
 & 2 & 28.6 & 19.5 & 7.2 & 5.1
\\ \cline{2-6}
 & 3 & 33.4 & 17.6 & 10.1 & 7.4
\\ \cline{2-6}
 & 4 & 34.5 & 14.4 & 13.1 & 5.9
\\ \cline{2-6}
 & 5 & 39.0 & 14.9 & 16.2 & 7.4
\\ \cline{2-6}
 & 6 & 40.2 & 15.0 & 18.9 & 8.0
\\ \cline{2-6}
 & 7 & 44.0 & 14.8 & 22.4 & 9.1
\\ \cline{2-6}
 & 8 & 45.9 & 15.3 & 25.2 & 10.0
\\ \cline{2-6}
 & 9 & 45.6 & 15.1 & 28.0 & 11.4
\\ \cline{2-6}
 & 10 & 47.4 & 15.7 & 31.8 & 12.9
\\ \cline{2-6}
 & 15 & 50.9 & 19.6 & 44.1 & 29.8
\\ \hline
\end{tabular}
\begin{tabular}{| c | *{3}{c|} | *{2}{c|}}
\hline
\multirow{2}{*}{Target} & \multirow{2}{*}{$\frac{m}{n}$}
 & \multicolumn{2}{c||}{GD-COPER} & \multicolumn{2}{c|}{Wirtinger Flow}
\\  \cline{3-6}
& & PSNR & \scriptsize{Run time} & PSNR & \scriptsize{Run time}
\\ \hline
\multirow{14}{*}{\includegraphics[width=.1 \textwidth]{original_girl_1.jpg}} 
& 1 & 22.2 & 14.2 & 5.1 & 2.9
\\ \cline{2-6}
 & 2 & 30.8 & 18.0 & 8.1 & 5.5
\\ \cline{2-6}
 & 3 & 34.8 & 16.9 & 11.1 & 6.3
\\ \cline{2-6}
 & 4 & 38.0 & 15.7 & 14.1 & 6.3
\\ \cline{2-6}
 & 5 & 38.9 & 18.1 & 17.2 & 8.0
\\ \cline{2-6}
 & 6 & 43.3 & 14.8 & 19.9 & 7.9
\\ \cline{2-6}
 & 7 & 43.3 & 15.0 & 23.5 & 9.1
\\ \cline{2-6}
 & 8 & 47.4 & 15.1 & 26.5 & 10.1
\\ \cline{2-6}
 & 9 & 45.3 & 16.3 & 29.4 & 13.4
\\ \cline{2-6}
 & 10 & 45.8 & 15.7 & 33.0 & 12.8
\\ \cline{2-6}
 & 15 & 49.3 & 17.1 & 47.8 & 22.8
\\ \hline
\multirow{14}{*}{\includegraphics[width=.1 \textwidth]{original_Girl_2.jpg}} 
 & 1 & 25.2 & 13.8 & DVG & 1.4
\\ \cline{2-6}
 & 2 & 32.5 & 18.1 & DVG & 3.2
\\ \cline{2-6}
 & 3 & 35.9 & 17.2 & 13.4 & 6.8
\\ \cline{2-6}
 & 4 & 38.7 & 14.8 & 16.4 & 5.8
\\ \cline{2-6}
 & 5 & 38.2 & 15.8 & 19.5 & 7.8
\\ \cline{2-6}
 & 6 & 43.1 & 14.9 & 22.4 & 8.1
\\ \cline{2-6}
 & 7 & 45.1 & 15.0 & 25.8 & 9.2
\\ \cline{2-6}
 & 8 & 42.9 & 15.2 & 28.9 & 10.0
\\ \cline{2-6}
 & 9 & 44.7 & 15.3 & 31.7 & 11.8
\\ \cline{2-6}
 & 10 & 50.0 & 16.0 & 35.2 & 13.1
\\ \cline{2-6}
 & 15 & 43.9 & 17.4 & 50.8 & 24.5
\\ \hline
\end{tabular}
 \label{table:dct results_1}
\end{table}

\begin{table}[h!]
\centering
\caption{Comparison between the Wirtinger flow and the GD-COPER with the coded diffraction patterns for different values of $m/n$. The true images in the simulations are shown in the leftmost column. }
\begin{tabular}[t]{| c | *{3}{c|} | *{2}{c|}}
\hline
\multirow{2}{*}{Target} & \multirow{2}{*}{$\frac{m}{n}$}
 & \multicolumn{2}{c||}{GD-COPER} & \multicolumn{2}{c|}{Wirtinger Flow}
\\  \cline{3-6}
& & PSNR & \scriptsize{Run time} & PSNR & \scriptsize{Run time}
\\ \hline
\multirow{14}{*}{\includegraphics[width=.1 \textwidth]{original_Girl_3.jpg}}
& 1 & 29.3 & 14.2 & DVG & 1.4
\\ \cline{2-6}
 & 2 & 34.0 & 18.5 & DVG & 2.7
\\ \cline{2-6}
 & 3 & 36.8 & 17.6 & DVG & 4.7
\\ \cline{2-6}
 & 4 & 38.0 & 15.1 & 17.6 & 6.1
\\ \cline{2-6}
 & 5 & 40.7 & 15.9 & 20.4 & 8.0
\\ \cline{2-6}
 & 6 & 44.2 & 14.6 & 22.8 & 8.2
\\ \cline{2-6}
 & 7 & 42.1 & 14.9 & 28.1 & 9.2
\\ \cline{2-6}
 & 8 & 40.7 & 16.2 & 30.5 & 9.8
\\ \cline{2-6}
 & 9 & 42.2 & 16.0 & 33.8 & 11.8
\\ \cline{2-6}
 & 10 & 49.9 & 16.3 & 37.6 & 13.1
\\ \cline{2-6}
 & 15 & 41.8 & 16.9 & 52.1 & 23.0
\\  \hline
\multirow{14}{*}{\includegraphics[width=.1 \textwidth]{original_House.jpg}} 
 & 1 & 27.2 & 13.8 & DVG & 1.4
\\ \cline{2-6}
 & 2 & 32.3 & 16.6 & DVG & 2.1
\\ \cline{2-6}
 & 3 & 35.8 & 16.9 & DVG & 3.9
\\ \cline{2-6}
 & 4 & 36.4 & 17.1 & 15.6 & 7.3
\\ \cline{2-6}
 & 5 & 38.7 & 15.1 & 17.9 & 6.7
\\ \cline{2-6}
 & 6 & 39.4 & 14.9 & 20.1 & 7.8
\\ \cline{2-6}
 & 7 & 42.9 & 15.1 & 27.1 & 8.9
\\ \cline{2-6}
 & 8 & 47.5 & 15.6 & 30.5 & 10.1
\\ \cline{2-6}
 & 9 & 40.9 & 18.2 & 33.2 & 18.9
\\ \cline{2-6}
 & 10 & 47.6 & 15.9 & 36.9 & 13.0
\\ \cline{2-6}
 & 15 & 48.6 & 17.7 & 53.3 & 23.0
\\  \hline
\end{tabular}
\begin{tabular}[t]{| c | *{3}{c|} | *{2}{c|}}
\hline
\multirow{2}{*}{Target} & \multirow{2}{*}{$\frac{m}{n}$}
 & \multicolumn{2}{c||}{GD-COPER} & \multicolumn{2}{c|}{Wirtinger Flow}
\\  \cline{3-6}
& & PSNR & \scriptsize{Run time} & PSNR & \scriptsize{Run time}
\\ \hline
\multirow{14}{*}{\includegraphics[width=.1 \textwidth]{original_tree.jpg}} 
 & 1 & 23.1 & 13.8 & DVG & 1.4
\\ \cline{2-6}
 & 2 & 28.0 & 17.9 & DVG & 2.6
\\ \cline{2-6}
 & 3 & 32.0 & 17.9 & DVG & 3.7
\\ \cline{2-6}
 & 4 & 34.3 & 18.4 & 16.2 & 7.1
\\ \cline{2-6}
 & 5 & 38.1 & 16.9 & 19.0 & 9.2
\\ \cline{2-6}
 & 6 & 38.5 & 15.3 & 21.2 & 8.0
\\ \cline{2-6}
 & 7 & 42.0 & 15.2 & 22.6 & 9.2
\\ \cline{2-6}
 & 8 & 44.6 & 15.6 & 29.2 & 9.9
\\ \cline{2-6}
 & 9 & 43.2 & 19.4 & 32.3 & 14.7
\\ \cline{2-6}
 & 10 & 43.0 & 16.3 & 36.1 & 13.1
\\ \cline{2-6}
 & 15 & 52.0 & 17.5 & 49.7 & 23.4
\\  \hline
\end{tabular}
 \label{table:dct results_2}
\end{table}

\subsection{Robustness of GD-COPER with respect to initialization}\label{sec:stability to initialization}
As we discussed in Section \ref{ssec:parametersetting}, the performance of GD-COPER is not sensitive to the initialization. In this section, we present some of our evidence that supports this claim. Given that our simulation results are similar for both coded diffraction patterns and Gaussian measurements, we only report our simulations for the coded diffraction patterns. In order to observe the impact of initialization we considered the following initialization: Let $\mvec{x}$ denote the underlying signal we want to recover, and let $\mvec{x}_o$ denote the vector that corresponds to an all-white image. A simple initialization that we can use in practice is $\mvec{x}_o$, while the best oracle-initialization is $\mvec{x}$. Hence, we can consider the family of initializations
\[
\mvec{x}_{\rm init} = \lambda \mvec{x}_o + (1-\lambda) \mvec{x},
\]
for $\lambda \in \{0, 0.1, 0.2, \ldots, 0.9, 1\}$. We expect the smaller values of $\lambda$ to return better initializations. Tables \ref{tab:testinit1} and \ref{tab:testinit2} evaluate the performance of GD-COPER for different initializations and different images. The other parameters of GD-COPER are set according to the strategy described in Section \ref{ssec:parametersetting}.  As is clear from our simulation results, the initialization schemes have much larger impacts on the Wirtinger flow compared to GD-COPER. In fact,  the GD-COPER is not very sensitive to the choice of initialization and in most cases, the difference between the best initialization and worst initialization is less than $2$ dB.  In contrast to GD-COPER, the performance of the Wirtinger flow is very sensitive to the choice of the initialization. For this reason, the spectral method is often used for the initialization of the Wirtinger flow algorithm. In the next section, we will show that the initialization of the Wirtinger flow algorithm with an all-white image is often better than the spectral initialization. However, we should emphasize that this phenomenon is only true for the real-valued signals, and has not been tested on complex-valued signals. 

\subsection{Spectral Initialization}
\label{sec: spectral}
Another claim we made in Section \ref{ssec:parametersetting} regarding the initialization was the fact that Spectral initialization does not seem to help the Wirtinger flow beyond what is offered by an all-white image initialization. We show part of our evidence regarding this claim. Tables \ref{table: WF spectral and white 1} - \ref{table: WF spectral and white 3} summarize some of our findings.  In these table the `n-init-err' shows the normalized mean square error of the initialization.  Note that in most cases the spectral methods does not offer a closer point than the all-white image except when we have $\frac{m}{n} \geq 7$.  Moreover, when we have many observations and the initial point offered by the Spectral method is closer than the white image, Wirtinger Flow usually performs better starting from the white image. This shows the initial distance is not the only important factor to the convergence of Wirtinger flow (this is an artifact of the fixed parameter tuning that has been proposed for the Wirtinger flow).  For instance, if the norm of the gradient at the starting steps, when the step-size defined in \eqref{eq: WF step-size} is large, remains high, then the algorithm may diverges.

\begin{table}
\centering
\caption{Wirtinger Flow performance with spectral and all-white initialization} \label{table: WF spectral and white 1}
\begin{tabular}[t]{| c | *{4}{c|} | *{3}{c|}}
\hline
\multirow{2}{*}{Target} & \multirow{2}{*}{$\frac{m}{n}$}
 & \multicolumn{3}{c||}{All-white} & \multicolumn{3}{c|}{Spectral}
\\  \cline{3-8}
& & {n-init-err} & PSNR & Run time & {n-init-err} & PSNR & Run time
\\ \hline
\multirow{11}{*}{\includegraphics[width=.2 \textwidth]{original_Balls.jpg}} 
  & 1 & 0.57&DVG & 1.7 & 1.39 & DVG & 2.4
\\ \cline{2-8}
 & 2 & 0.57&DVG & 1.4 & 1.39 & DVG & 4.4
\\ \cline{2-8}
 & 3 & 0.57&17.1 & 4.3 & 1.39 & DVG & 6.2
\\ \cline{2-8}
 & 4 & 0.57&20.3 & 5.5 & 1.37 & DVG & 7.4
\\ \cline{2-8}
 & 5 & 0.57&23.2 & 6.5 & 1.37 & DVG & 9.3
\\ \cline{2-8}
 & 6 & 0.57&26.9 & 7.7 & 1.38 & DVG & 11.2
\\ \cline{2-8}
 & 7 & 0.57&29.4 & 9.8 & 1.13 & DVG & 12.3
\\ \cline{2-8}
 & 8 & 0.57&32.8 & 13.5 & 0.89 & DVG & 16.2
\\ \cline{2-8}
 & 9 & 0.57&36.2 & 17.5 & 0.63 & 9.4 & 74.8
\\ \cline{2-8}
 & 10 & 0.57&39.0 & 44.5 & 0.64 & DVG & 32.5
\\ \cline{2-8}
 & 15 & 0.57&51.3 & 50.6 & 0.49 & 20.5 & 99.4
\\ \hline
\end{tabular}

\begin{tabular}[t]{| c | *{4}{c|} | *{3}{c|}}
\hline
\multirow{2}{*}{Target} & \multirow{2}{*}{$\frac{m}{n}$}
 & \multicolumn{3}{c||}{All-white} & \multicolumn{3}{c|}{Spectral}
\\  \cline{3-8}
& & {n-init-err} & PSNR & {Run time} & {n-init-err} & PSNR & {Run time}
\\ \hline
\multirow{11}{*}{\includegraphics[width=.2 \textwidth]{original_House.jpg}} 
  & 1 & 0.86&DVG & 2.8 & 1.39 & DVG & 4.8
\\ \cline{2-8}
 & 2 & 0.86&12.1 & 9.1 & 1.39 & DVG & 8.4
\\ \cline{2-8}
 & 3 & 0.86&15.1 & 11.4 & 1.39 & DVG & 10.9
\\ \cline{2-8}
 & 4 & 0.86&18.2 & 14.9 & 1.39 & DVG & 13.6
\\ \cline{2-8}
 & 5 & 0.86&21.1 & 20.0 & 1.41 & DVG & 15.1
\\ \cline{2-8}
 & 6 & 0.86&24.2 & 25.1 & 1.37 & DVG & 17.1
\\ \cline{2-8}
 & 7 & 0.86&27.4 & 28.2 & 1.06 & DVG & 19.5
\\ \cline{2-8}
 & 8 & 0.86&30.4 & 28.4 & 0.9 & DVG & 22.8
\\ \cline{2-8}
 & 9 & 0.86&33.4 & 31.6 & 1.33 & DVG & 26.0
\\ \cline{2-8}
 & 10 & 0.86&35.5 & 32.2 & 0.6 & 24.4 & 56.1
\\ \cline{2-8}
 & 15 & 0.86&56.7 & 43.7 & 0.48 & 11.0 & 81.7
\\ \hline
\end{tabular}

\begin{tabular}{| c | *{4}{c|} | *{3}{c|}}
\hline
\multirow{2}{*}{Target} & \multirow{2}{*}{$\frac{m}{n}$}
 & \multicolumn{3}{c||}{All-white} & \multicolumn{3}{c|}{Spectral}
\\  \cline{3-8}
& & {n-init-err} & PSNR & Run time & {n-init-err} & PSNR & Run time
\\ \hline
\multirow{11}{*}{\includegraphics[width=.2 \textwidth]{original_tree.jpg}} 
  & 1 & 0.98&DVG & 2.6 & 1.39 & DVG & 5.0
\\ \cline{2-8}
 & 2 & 0.98&DVG & 2.8 & 1.39 & DVG & 7.2
\\ \cline{2-8}
 & 3 & 0.98&14.0 & 9.6 & 1.39 & DVG & 8.9
\\ \cline{2-8}
 & 4 & 0.98&17.0 & 11.9 & 1.4 & DVG & 10.6
\\ \cline{2-8}
 & 5 & 0.98&20.0 & 15.9 & 1.38 & DVG & 12.6
\\ \cline{2-8}
 & 6 & 0.98&23.2 & 17.7 & 1.21 & DVG & 15.0
\\ \cline{2-8}
 & 7 & 0.98&26.1 & 21.8 & 1.31 & DVG & 17.0
\\ \cline{2-8}
 & 8 & 0.98&29.0 & 24.1 & 1.39 & DVG & 17.9
\\ \cline{2-8}
 & 9 & 0.98&32.2 & 26.2 & 0.65 & 20.4 & 30.8
\\ \cline{2-8}
 & 10 & 0.98&34.7 & 13.6 & 0.6 & 21.3 & 30.9
\\ \cline{2-8}
 & 15 & 0.98&57.1 & 21.9 & 0.48 & 21.2 & 55.3
\\ \hline
\end{tabular}
\end{table}

\begin{table}
\centering
\caption{Wirtinger Flow performance with spectral and all-white initialization} \label{table: WF spectral and white 2}
\begin{tabular}{| c | *{4}{c|} | *{3}{c|}}
\hline
\multirow{2}{*}{Target} & \multirow{2}{*}{$\frac{m}{n}$}
 & \multicolumn{3}{c||}{All-white} & \multicolumn{3}{c|}{Spectral}
\\  \cline{3-8}
& & \scriptsize{n-init-err} & PSNR & Run time & \scriptsize{n-init-err} & PSNR & Run time
\\ \hline
\multirow{11}{*}{\includegraphics[width=.2 \textwidth]{original_girl_1.jpg}} 
  & 1 & 2.84 & 5.1 & 2.6 & 1.39 & DVG & 3.1
\\ \cline{2-8}
 & 2 & 2.84 & 8.1 & 3.6 & 1.4 & DVG & 4.9
\\ \cline{2-8}
 & 3 & 2.84 & 11.0 & 4.7 & 1.39 & DVG & 7.1
\\ \cline{2-8}
 & 4 & 2.84 & 14.1 & 5.5 & 1.4 & DVG & 8.8
\\ \cline{2-8}
 & 5 & 2.84 & 17.3 & 6.9 & 1.38 & DVG & 10.6
\\ \cline{2-8}
 & 6 & 2.84 & 20.3 & 8.3 & 1.36 & DVG & 12.3
\\ \cline{2-8}
 & 7 & 2.84 & 22.9 & 9.7 & 1.36 & DVG & 14.2
\\ \cline{2-8}
 & 8 & 2.84 & 26.2 & 11.1 & 1.39 & DVG & 16.2
\\ \cline{2-8}
 & 9 & 2.84 & 28.4 & 12.8 & 1.1 & DVG & 17.8
\\ \cline{2-8}
 & 10 & 2.84 & 32.3 & 13.4 & 0.6 & 20.8 & 30.8
\\ \cline{2-8}
 & 15 & 2.84 & 45.0 & 22.2 & 0.48 & 39.7 & 53.4
\\ \hline
\end{tabular}

\begin{tabular}{| c | *{4}{c|} | *{3}{c|}}
\hline
\multirow{2}{*}{Target} & \multirow{2}{*}{$\frac{m}{n}$}
 & \multicolumn{3}{c||}{All-white} & \multicolumn{3}{c|}{Spectral}
\\  \cline{3-8}
& & \scriptsize{n-init-err} & PSNR & Run time & \scriptsize{n-init-err} & PSNR & Run time
\\ \hline
\multirow{11}{*}{\includegraphics[width=.2 \textwidth]{original_Girl_3.jpg}} 
  & 1 & 0.83 & DVG & 1.4 & 1.39 & DVG & 3.1
\\ \cline{2-8}
 & 2 & 0.83 & 12.6 & 3.8 & 1.39 & DVG & 5.1
\\ \cline{2-8}
 & 3 & 0.83 & 15.6 & 4.7 & 1.39 & DVG & 6.8
\\ \cline{2-8}
 & 4 & 0.83 & 18.6 & 5.6 & 1.39 & DVG & 8.7
\\ \cline{2-8}
 & 5 & 0.83 & 21.5 & 6.7 & 1.31 & DVG & 10.1
\\ \cline{2-8}
 & 6 & 0.83 & 24.8 & 8.1 & 1.36 & DVG & 12.2
\\ \cline{2-8}
 & 7 & 0.83 & 28.0 & 9.5 & 1.27 & DVG & 14.1
\\ \cline{2-8}
 & 8 & 0.83 & 30.9 & 10.7 & 0.7 & DVG & 21.6
\\ \cline{2-8}
 & 9 & 0.83 & 33.6 & 12.3 & 1.05 & DVG & 17.9
\\ \cline{2-8}
 & 10 & 0.83 & 36.3 & 13.0 & 0.85 & DVG & 19.6
\\ \cline{2-8}
 & 15 & 0.83 & 59.3 & 22.5 & 0.48 & 28.5 & 54.0
\\ \hline
\end{tabular}

\begin{tabular}{| c | *{4}{c|} | *{3}{c|}}
\hline
\multirow{2}{*}{Target} & \multirow{2}{*}{$\frac{m}{n}$}
 & \multicolumn{3}{c||}{All-white} & \multicolumn{3}{c|}{Spectral}
\\  \cline{3-8}
& & \scriptsize{n-init-err} & PSNR & Run time & \scriptsize{n-init-err} & PSNR & Run time
\\ \hline
\multirow{11}{*}{\includegraphics[width=.2 \textwidth]{original_Girl_2.jpg}} 
  & 1 & 1.25 & 7.4 & 2.5 & 1.4 & DVG & 3.0
\\ \cline{2-8}
 & 2 & 1.25 & 10.4 & 3.3 & 1.39 & DVG & 4.9
\\ \cline{2-8}
 & 3 & 1.25 & 13.5 & 4.9 & 1.39 & DVG & 6.5
\\ \cline{2-8}
 & 4 & 1.25 & 16.5 & 5.7 & 1.38 & DVG & 8.1
\\ \cline{2-8}
 & 5 & 1.25 & 19.3 & 7.1 & 1.4 & DVG & 10.2
\\ \cline{2-8}
 & 6 & 1.25 & 22.6 & 8.0 & 1.32 & DVG & 12.2
\\ \cline{2-8}
 & 7 & 1.25 & 25.9 & 9.9 & 1.02 & DVG & 13.7
\\ \cline{2-8}
 & 8 & 1.25 & 28.6 & 10.6 & 0.71 & 11.1 & 23.9
\\ \cline{2-8}
 & 9 & 1.25 & 31.9 & 22.3 & 0.77 & DVG & 22.1
\\ \cline{2-8}
 & 10 & 1.25 & 34.3 & 27.3 & 0.64 & DVG & 25.2
\\ \cline{2-8}
 & 15 & 1.25 & 55.2 & 42.8 & 0.49 & 24.2 & 79.6
\\ \hline
\end{tabular}
\end{table}

\begin{table}
\centering
\caption{Wirtinger Flow performance with spectral and all-white initialization} \label{table: WF spectral and white 3}
\begin{tabular}{| c | *{4}{c|} | *{3}{c|}}
\hline
\multirow{2}{*}{Target} & \multirow{2}{*}{$\frac{m}{n}$}
 & \multicolumn{3}{c||}{All-white} & \multicolumn{3}{c|}{Spectral}
\\  \cline{3-8}
& & \scriptsize{n-init-err} & PSNR & Run time & \scriptsize{n-init-err} & PSNR & Run time
\\ \hline
\multirow{11}{*}{\includegraphics[width=.2 \textwidth]{original_Couple.jpg}} 
  & 1 & 4.92 & 4.1 & 5.0 & 1.4 & DVG & 5.3
\\ \cline{2-8}
 & 2 & 4.92 & 7.1 & 6.1 & 1.39 & DVG & 8.1
\\ \cline{2-8}
 & 3 & 4.92 & 10.1 & 10.1 & 1.38 & DVG & 10.0
\\ \cline{2-8}
 & 4 & 4.92 & 13.2 & 9.6 & 1.39 & DVG & 13.0
\\ \cline{2-8}
 & 5 & 4.92 & 16.0 & 8.8 & 1.39 & DVG & 13.0
\\ \cline{2-8}
 & 6 & 4.92 & 19.0 & 13.0 & 1.36 & DVG & 17.2
\\ \cline{2-8}
 & 7 & 4.92 & 21.5 & 13.6 & 1.29 & DVG & 19.3
\\ \cline{2-8}
 & 8 & 4.92 & 24.1 & 21.0 & 0.86 & DVG & 23.6
\\ \cline{2-8}
 & 9 & 4.92 & 25.9 & 17.6 & 1.34 & DVG & 18.4
\\ \cline{2-8}
 & 10 & 4.92 & 28.4 & 13.7 & 0.62 & 30.9 & 35.1
\\ \cline{2-8}
 & 15 & 4.92 & 37.2 & 31.5 & 0.48 & 36.0 & 64.5
\\ \hline
\end{tabular}
\end{table}

\begin{table}[h!] 
\centering
\caption{The impact of initialization on the performance of GD-COPER and Wirtinger flow. ``n-init-error" is the normalized mean square error of the initialization. The initializations chosen in this simulation are in the form of $\mvec{x}_{\rm init} = \lambda \mvec{x}_o + (1-\lambda) \mvec{x}$, where $\mvec{x}_o$ is an all-white image and $\mvec{x}$ is the true signal.  }
\label{tab:testinit1}
\begin{tabular}{| c | c | c  *{3} {*{1}{|c} | *{1}{c|}}}
\hline
\multirow{2}{*}{Target} & \multirow{2}{*}{n-init-error} & \multirow{2}{*}{$\lambda$}
& \multicolumn{2}{c||}{$\frac{m}{n} = 1$} &  \multicolumn{2}{c||}{$\frac{m}{n} = 2$} &  \multicolumn{2}{c|}{$\frac{m}{n} = 3$} 
\\ \cline{4-9}
 & & & \multicolumn{1}{c|}{GD-C } & \multicolumn{1}{c||}{WF} & 
 \multicolumn{1}{c|}{GD-C } & \multicolumn{1}{c||}{WF} &
 \multicolumn{1}{c|}{GD-C} & \multicolumn{1}{c|}{WF} 
\\ \hline
\multirow{11}{*}{\includegraphics[width=.2 \textwidth]{original_Couple.jpg}}  
& 0.0 & 0.0 & 29.94 & inf & 32.55 & inf & 34.23 & inf
\\ \cline{2-9}
 & 0.49 & 0.1 & 29.46 & DVG & 32.03 & DVG & 33.79 & 26.78
\\ \cline{2-9}
 & 0.98 & 0.2 & 29.25 & DVG & 32.03 & DVG & 34.03 & 24.11
\\ \cline{2-9}
 & 1.48 & 0.3 & 28.36 & 14.55 & 32.19 & 17.57 & 33.96 & 20.59
\\ \cline{2-9}
 & 1.97 & 0.4 & 27.12 & 12.06 & 31.22 & 15.07 & 33.18 & 18.09
\\ \cline{2-9}
 & 2.46 & 0.5 & 25.0 & 10.13 & 30.63 & 13.13 & 33.59 & 16.15
\\ \cline{2-9}
 & 2.95 & 0.6 & 23.03 & 8.54 & 30.67 & 11.55 & 33.32 & 14.57
\\ \cline{2-9}
 & 3.44 & 0.7 & 21.41 & 7.2 & 29.66 & 10.21 & 33.21 & 13.22
\\ \cline{2-9}
 & 3.94 & 0.8 & 20.59 & 6.04 & 28.51 & 9.04 & 31.37 & 12.05
\\ \cline{2-9}
 & 4.43 & 0.9 & 20.36 & 5.01 & 27.69 & 8.01 & 30.89 & 11.01
\\ \cline{2-9}
 & 4.92 & 1.0 & 18.52 & 4.09 & 27.95 & 7.09 & 31.82 & 10.07
\\ \hline
\end{tabular}	

\begin{tabular}{| c | c | c  *{3} {*{1}{|c} | *{1}{c|}}}
\hline
\multirow{2}{*}{Target} & \multirow{2}{*}{n-init-error} & \multirow{2}{*}{$\lambda$}
& \multicolumn{2}{c||}{$\frac{m}{n} = 1$} &  \multicolumn{2}{c||}{$\frac{m}{n} = 2$} &  \multicolumn{2}{c|}{$\frac{m}{n} = 3$} 
\\ \cline{4-9}
 & & & \multicolumn{1}{c|}{GD-C } & \multicolumn{1}{c||}{WF} & 
 \multicolumn{1}{c|}{GD-C } & \multicolumn{1}{c||}{WF} &
 \multicolumn{1}{c|}{GD-C} & \multicolumn{1}{c|}{WF} 
\\ \hline
\multirow{11}{*}{\includegraphics[width=.2 \textwidth]{original_Girl_3.jpg}}  
 & 0.0 & 0.0 & 30.17 & inf & 34.68 & inf & 37.67 & inf
\\ \cline{2-9}
 & 0.08 & 0.1 & 29.53 & DVG & 34.59 & DVG & 37.32 & DVG
\\ \cline{2-9}
 & 0.17 & 0.2 & 29.6 & DVG & 33.67 & DVG & 37.35 & DVG
\\ \cline{2-9}
 & 0.25 & 0.3 & 29.65 & DVG & 33.6 & DVG & 37.74 & DVG
\\ \cline{2-9}
 & 0.34 & 0.4 & 29.32 & DVG & 33.7 & DVG & 37.51 & DVG
\\ \cline{2-9}
 & 0.42 & 0.5 & 28.18 & DVG & 34.19 & DVG & 36.64 & 18.73
\\ \cline{2-9}
 & 0.5 & 0.6 & 27.68 & DVG & 34.92 & DVG & 35.95 & 19.86
\\ \cline{2-9}
 & 0.59 & 0.7 & 28.37 & DVG & 35.08 & DVG & 35.92 & 18.66
\\ \cline{2-9}
 & 0.67 & 0.8 & 28.13 & DVG & 35.12 & 14.24 & 36.23 & 17.56
\\ \cline{2-9}
 & 0.76 & 0.9 & 29.21 & DVG & 34.79 & 13.43 & 36.04 & 16.54
\\ \cline{2-9}
 & 0.84 & 1.0 & 29.15 & DVG & 34.16 & 12.59 & 35.64 & 15.63
\\ \hline
\end{tabular}	
\end{table}

\begin{table}[h!] 
\centering
\caption{The impact of initialization on the performance of GD-COPER and Wirtinger flow. ``n-init-error" is the normalized mean square error of the initialization. The initializations chosen in this simulation are in the form of $\mvec{x}_{\rm init} = \lambda \mvec{x}_o + (1-\lambda) \mvec{x}$, where $\mvec{x}_o$ is an all-white image and $\mvec{x}$ is the true signal.  }
\label{tab:testinit2}
\begin{tabular}{| c | c | c  *{3} {*{1}{|c} | *{1}{c|}}}
\hline
\multirow{2}{*}{Target} & \multirow{2}{*}{n-init-error} & \multirow{2}{*}{$\lambda$}
& \multicolumn{2}{c||}{$\frac{m}{n} = 1$} &  \multicolumn{2}{c||}{$\frac{m}{n} = 2$} &  \multicolumn{2}{c|}{$\frac{m}{n} = 3$} 
\\ \cline{4-9}
 & & & \multicolumn{1}{c|}{GD-C } & \multicolumn{1}{c||}{WF} & 
 \multicolumn{1}{c|}{GD-C } & \multicolumn{1}{c||}{WF} &
 \multicolumn{1}{c|}{GD-C} & \multicolumn{1}{c|}{WF} 
\\ \hline
\multirow{11}{*}{\includegraphics[width=.2 \textwidth]{original_House.jpg}}  
& 0.0 & 0.0 & 27.84 & inf & 31.55 & inf & 35.11 & inf
\\ \cline{2-9}
 & 0.09 & 0.1 & 28.04 & DVG & 31.5 & DVG & 35.19 & DVG
\\ \cline{2-9}
 & 0.17 & 0.2 & 27.44 & DVG & 31.24 & DVG & 35.12 & DVG
\\ \cline{2-9}
 & 0.26 & 0.3 & 26.99 & DVG & 31.47 & DVG & 35.26 & DVG
\\ \cline{2-9}
 & 0.35 & 0.4 & 26.68 & DVG & 31.23 & DVG & 35.02 & DVG
\\ \cline{2-9}
 & 0.43 & 0.5 & 26.89 & DVG & 31.62 & DVG & 34.66 & 19.12
\\ \cline{2-9}
 & 0.52 & 0.6 & 26.5 & DVG & 32.18 & DVG & 33.89 & 18.97
\\ \cline{2-9}
 & 0.61 & 0.7 & 26.69 & DVG & 32.4 & DVG & 33.54 & 17.94
\\ \cline{2-9}
 & 0.7 & 0.8 & 26.56 & DVG & 31.97 & 13.86 & 33.71 & 17.13
\\ \cline{2-9}
 & 0.78 & 0.9 & 26.26 & DVG & 31.74 & 12.92 & 34.16 & 16.12
\\ \cline{2-9}
 & 0.87 & 1.0 & 26.71 & DVG & 32.0 & 12.11 & 34.6 & 15.21
\\ \hline
\end{tabular}	
\begin{tabular}{| c | c | c  *{3} {*{1}{|c} | *{1}{c|}}}
\hline
\multirow{2}{*}{Target} & \multirow{2}{*}{n-init-error} & \multirow{2}{*}{$\lambda$}
& \multicolumn{2}{c||}{$\frac{m}{n} = 1$} &  \multicolumn{2}{c||}{$\frac{m}{n} = 2$} &  \multicolumn{2}{c|}{$\frac{m}{n} = 3$} 
\\ \cline{4-9}
 & & & \multicolumn{1}{c|}{GD-C } & \multicolumn{1}{c||}{WF} & 
 \multicolumn{1}{c|}{GD-C } & \multicolumn{1}{c||}{WF} &
 \multicolumn{1}{c|}{GD-C} & \multicolumn{1}{c|}{WF} 
\\ \hline
\multirow{11}{*}{\includegraphics[width=.2 \textwidth]{original_tree.jpg}}  
& 0.0 & 0.0 & 23.65 & inf & 26.23 & inf & 27.53 & inf
\\ \cline{2-9}
 & 0.1 & 0.1 & 23.55 & DVG & 26.26 & DVG & 27.65 & DVG
\\ \cline{2-9}
 & 0.2 & 0.2 & 23.69 & DVG & 26.14 & DVG & 27.68 & DVG
\\ \cline{2-9}
 & 0.3 & 0.3 & 23.49 & DVG & 26.28 & DVG & 27.46 & DVG
\\ \cline{2-9}
 & 0.39 & 0.4 & 23.45 & DVG & 26.14 & DVG & 27.49 & DVG
\\ \cline{2-9}
 & 0.49 & 0.5 & 23.49 & DVG & 26.13 & DVG & 27.6 & DVG
\\ \cline{2-9}
 & 0.59 & 0.6 & 23.45 & DVG & 26.19 & DVG & 27.56 & DVG
\\ \cline{2-9}
 & 0.69 & 0.7 & 23.48 & DVG & 26.18 & DVG & 27.43 & 16.88
\\ \cline{2-9}
 & 0.79 & 0.8 & 22.82 & DVG & 26.44 & 12.72 & 27.53 & 15.9
\\ \cline{2-9}
 & 0.89 & 0.9 & 22.97 & DVG & 26.43 & 11.9 & 27.5 & 14.92
\\ \cline{2-9}
 & 0.99 & 1.0 & 22.62 & DVG & 26.27 & 11.03 & 27.56 & 14.0
\\ \hline
\end{tabular}
\label{tab:testinit3}
\end{table}

\section{Discussion of our assumptions}
\label{sec:assumption}

In the proof of the convergence of GD-COPER in Theorem \ref{thm convergence of algorithm}, we made two assumptions:
\begin{enumerate}
\item $\|\mvec{x}\|_2^2 =1$ and $\|\mvec{z}\|_2^2=1$ for all $\mvec{z} \in \mathcal{C}_r$.  

\item $\mathcal{P}_{\C_r} (\cdot) = \mathcal{D}_r (\mathcal{E}_r (\cdot))$, i.e. the application of the encoder and decoder of a compression algorithm is equivalent to projecting a signal on the closest code-word. 

\end{enumerate}
First, note that we can obtain a good estimate of the norm of the signal, and normalize the measurements and pretend that the signal satisfies $\|\mvec{x}\|_2^2 =1$. Below we present one approach to execute this normalization. Suppose that $ \mvec{y} = \envert{A \mvec{x}}$. We have 
\begin{equation*}
\e{\envert{{y}_k}^2}=  \e{{y}_k^*{y}_k} = \e{\mvec{x}^*  \mvec{a}_k^* \mvec{a}_k \mvec{x}} = 2 n \enVert{\mvec{x}}^2.
\end{equation*}
Hence, $ \frac{1}{2 n m } \enVert{\mvec{y}}^2 \xrightarrow{\mathbb{P}} \enVert{\mvec{x}}^2$, where the notation $ \xrightarrow{\mathbb{P}}$ denotes convergence in probability. Hence, if we divide our measurements by $\sqrt{ \frac{1}{2 n m } \enVert{\mvec{y}}^2}$, then we can  assume that $\|\mvec{x}\|_2=1$. Once we know that the magnitude of the signal is equal to one, we can modify any compression algorithm to have the property 
$\|\mvec{z}\|_2^2=1$ for all $\mvec{z} \in \mathcal{C}_r$, by dividing the output of the decoder by its magnitude. One question that we still have to address though is the following: Often times the estimate of the magnitude of the signal is random and may deviate from what we expect. Hence, we may end up having a signal $\mvec{x}$ whose magnitude satisfies $|\|\mvec{x}\|_2^2-1| \leq \gamma$ where $\gamma$ is a small number. What would be the impact of such an error in the performance of GD-COPER? In particular, one would hope that this error does not accumulate in the iterations of the algorithm. Our next theorem proves this claim.

\begin{thm} \label{thm:nonideal:magestimate}
Consider a fixed signal $\mvec{x} \in \Q$ that satisfies $|\|\mvec{x}\|_2^2-1| \leq \gamma$. Define $\mvec{z}_t  \in \C_r$ as in \eqref{eq:GD-COPER} with $ \mu = \frac{1}{8 m} $.  Suppose that for all $  \theta \in \mathds{R}$,   $\ee^ {i \theta} \mvec{x} \in \Q$. Define
$ \theta_t \triangleq \arg \min\limits_{\theta \in \mathds{R}} \enVert{\mvec{z}_t -  {\rm e}^{i \theta} \mvec{x}}$.  For all $ \epsilon \geq C_2 m^{-\frac{1}{3}} $, with probability at least $ 1 - C_3 {\rm e}^{- C_1 \sqrt{m \epsilon} + (3 \ln 2) r} $, where $C_1, C_2, C_3 > 0$ are absolute constants, for  $t=1,2,\ldots$, we have
\begin{equation}\label{eq:theoremGD 2}
\enVert{ \mvec{z}_{t + 1} -  {\rm e}^{i \theta_t} \mvec{x} } \leq \intoo{ \enVert{ \mvec{z}_t -  {\rm e}^{i \theta_t} \mvec{x}} + \epsilon } \enVert{ \mvec{z}_t -  {\rm e}^{i \theta_t} \mvec{x}} + 3 \delta_r+ \gamma. 
\end{equation}
\end{thm}
\begin{proof}
Note that we still assume that for all $\mvec{z} \in \C_r$, we have $\|\mvec{z}\|_2^2=1$, since this condition is straightforward to satisfy exactly (given that the output of decoder is available and hence we can directly normalize it). Since the proof of this theorem is similar to the proof of Theorem Theorem \ref{thm convergence of algorithm} we skip most of the steps, and only mention the ones that are different. By following the steps in the proof of Theorem \ref{thm convergence of algorithm} that led to \eqref{thm conv alg proof 1} we obtain
\begin{align}
 \tilde{\mvec{x}} - \mvec{s}_{t + 1}  = 
\tilde{\mvec{x}} -  {\rm e}^{i \theta_t} \mvec{x}  + (1 - ({\rm e}^{i \theta_t} \mvec{x})^* \mvec{z}_t) {\rm e}^{i \theta_t} \mvec{x}  + \frac{1}{8m} \intoo{ \nabla d_A(\mvec{z}_t) - \e{ \nabla d_A(\mvec{z}_t)} }.
\end{align}
In the proof of Theorem \ref{thm convergence of algorithm}, we claimed that $(1-({\rm e}^{i \theta_t} \mvec{x})^* \mvec{z}_t) = \frac{1}{2} \enVert{  {\rm e}^{i \theta_t} \mvec{x} - \mvec{z}_t }^2$. Clearly, this is not true any more. Instead we have  
\[
\frac{1}{2} \enVert{  {\rm e}^{i \theta_t} \mvec{x} - \mvec{z}_t }^2 = \frac{1}{2}  \| \mvec{x}\|^2 -\frac{1}{2} + (1-({\rm e}^{i \theta_t} \mvec{x})^* \mvec{z}_t). 
\]
Hence, we will conclude that
\begin{eqnarray} \nonumber
\lefteqn{\enVert{  { {\rm e}^{i \theta_t} \mvec{x}} - \mvec{z}_{t + 1} }
 \leq 
\enVert{{{\rm e}^{i \theta_t} \mvec{x}} - \tilde{\mvec{x}}} + \enVert{  \tilde{\mvec{x}} - \mvec{z}_{t + 1} }}
\\ 
& \nonumber \leq&
\delta_r + 2 \Re \intoo{ \mvec{v}_{t}^* ( \tilde{\mvec{x}} -  \mvec{s}_{t + 1}) }
\\ & \nonumber
\leq& \delta_r + 2 \enVert{\mvec{v}_t} \enVert{\tilde{\mvec{x}} - {\rm e}^{i \theta_t} \mvec{x}} + 2 (1 - ({\rm e}^{i \theta_t} \mvec{x})^* \mvec{z}_t) \enVert{\mvec{v}_t} \enVert{ {\rm e}^{i \theta_t} \mvec{x}} + \frac{1}{4m} \Re \intoo{\mvec{v}_t^* \intoo{\nabla d_A(\mvec{z}_t) - \e{\nabla d_A(\mvec{z}_t)}} }
\\
& \leq& 
\delta_r + 2 \delta_r + \enVert{{{\rm e}^{i \theta_t} \mvec{x}} - \mvec{z}_{t} }^2 +   |\| \mvec{x}\|^2 - 1 |+  \frac{1}{4 m} \Re  \intoo{\mvec{v}_{t}^*\intoo{ \nabla d_A(\mvec{z}_t) - \e{ \nabla d_A(\mvec{z}_t)} } } \nonumber \\
& \leq& 3 \delta_r + \gamma + \enVert{{{\rm e}^{i \theta_t} \mvec{x}} - \mvec{z}_{t} }^2 +  \frac{1}{4 m} \Re  \intoo{\mvec{v}_{t}^*\intoo{ \nabla d_A(\mvec{z}_t) - \e{ \nabla d_A(\mvec{z}_t)} } }. \nonumber
\end{eqnarray}
The rest of the proof is exactly the same as the proof of Theorem \ref{thm convergence of algorithm}, and is hence skipped. 
\end{proof}
We would like to emphasize that given the linear convergence of the GD-COPER, the accumulation of the error due to $\gamma$ will be negligible and the error in the estimation of the magnitude of $\| \mvec{x}\|^2$ does not have any major impact on the performance of GD-COPER. 

We now turn our attention to the second assumption, i.e. the assumption that $\mathcal{P}_{\C_r} (\cdot) = \mathcal{D}_r (\mathcal{E}_r (\cdot))$. We would like to first emphasize that ideally, this is what a compression algorithm should do. If an image compression algorithm maps an image to a codeword that is far from the original image, that is an indication of the fact that the compression algorithm is not good. However, it is also reasonable to consider situations in which multiple codewords are close to an image and the compression algorithm does not pick the one which is the closest to the image because of some non-ideal strategies that is chosen to reduce the computational complexity. Hence, again we can ask whether the GD-COPER algorithm is robust to such non-ideal compression algorithms?  In the rest of this section, we pursue the following two goals:
\begin{enumerate}
\item Provide a few examples to convince the readers that most of the standard compression algorithms try to mimic a projection onto the codewords.

\item Suppose that even though the compression algorithm is non-ideal and does not find the closest codeword, it is still capable of finding a codeword that is in the vicinity of the closest codeword. We aim to show that the performance of GD-COPER algorithm is robust to such non-idealities. 
\end{enumerate}

Let us start with an example that is the cornerstone of several important compression algorithms. Suppose $\Q \subset \intcc{0, 1}^n$ is the set of approximately sparse signals. For instance, for some $p<1$
\begin{equation*}
\Q = \cbr{\mvec{x} \in [0,1)^n, \; \norm{\mvec{x}}_p \leq \zeta }.
\end{equation*}
The main idea of many compression algorithms is to approximate the signals in $\Q$ with $k$-sparse signals and encode the $k$-sparse signal. For simplicity suppose that we are given the $k$.  Let  $r_t = k \lceil \log n \rceil + k (t + 1)$ denote the rate of our compression algorithm.  Let $\E_1: \Q \to \cbr{0, 1}^{k \lceil \log n \rceil}$ encode the location of $k$ largest  elements of $\mvec{x}$. Furthermore, to code the magnitudes of the non-zero coefficients we consider $\E_2 : \Q \to \cbr{0, 1}^{k (t + 1)}$ that consider the $k$ largest components of $\mvec{x}$ and codes each of them with $t+1$ bits (does a binary expansion and keeps the $t+1$ most significant bits).  

More precisely, if $\mvec{x} = \intoo{\mvec{x}_1, \mvec{x}_2, ..., \mvec{x}_n} \in \Q$, where $1 \leq i_1 < i_2 < ... < i_k \leq n $ are the location of its $k$-largest elements, then
\begin{equation}
\E_1 (\mvec{x}) = \intoo{B(i_1), ..., B(i_k)},
\end{equation}
where $B(i)$ denotes the binary expansion of positive integer $i$.  Note that since indices are less than or equal to $n$, $\log n$ bits are enough to code each of them.  Moreover, if $x_i = \sum_{j = 1}^\infty \epsilon_{i, j} 2^{-j}$ with $\epsilon_{i, j} \in \cbr{0, 1}$, denote the binary expansion of $x_i$, then 
\begin{equation}
\E_2 (\mvec{x}) = \intoo{(\epsilon_{i_1,1}, \epsilon_{i_1,2}, \ldots, \epsilon_{i_1,t+1}), ..., (\epsilon_{i_k,1}, \epsilon_{i_k,2}, \ldots, \epsilon_{i_k,t+1})}.
\end{equation}
Note that this type of coding is very close to what happens in e.g. JPEG and embedded zero tree wavelet (EZW) compression algorithms. Both compression algorithms first transform the image to a domain that is more compressible, e.g. Fourier and wavelet, and then code the location and magnitudes of the largest coefficients similar to what we did above.\footnote{There are some minor tweaks in the actual JPEG and EZW. Since coding the locations of the largest coefficients requires a large number of bits, they often use techniques such as counting zero runs or coding along the trees to reduce the number of bits. }  The decoder of the compression algorithms has access to the locations of the largest coefficients from the $k \log n$ bits that it received from the encoder. Hence, it can easily use  $(\epsilon_{i_1,1}, \epsilon_{i_1,2}, \ldots, \epsilon_{i_1,t+1}), ..., (\epsilon_{i_k,1}, \epsilon_{i_k,2}, \ldots, \epsilon_{i_k,t+1})$ to find the magnitudes of the signals at those locations. Define $\Gamma_k = \{\mvec{x} \in \intcc{0, 1}^n \ : \ \| \mvec{x}\|_0 \leq k \}$. One can easily confirm that
\begin{equation*}
\C_{r_t} = \D_{r_t} \intoo{\E_{r_t} \intoo{\Q}} = \cbr{ \mvec{y} \in \Gamma_k, \; y_i = \sum_{j = 1}^t \epsilon_{i, j} 2^{-j}, \quad \epsilon_{i, j} \in \cbr{0, 1} }.
\end{equation*}
It is straightforward to show that in these types of compression algorithms $\P_{\C_r} (\cdot) = \D_r \circ \E_r (\cdot)$. For the sake of completeness we include a brief proof below. Suppose that the choice of codeword for the projection is unique. For notational simplicity we drop the subscript $t$.  To prove this formally, let $\mvec{x} \in [0,1)^n$ be an arbitrary vector and let $\mvec{y} = \D_r (\E_r (\mvec{x}))$ and $\mvec{z} = \P_{\C_r} (\mvec{x})$.  We have to show $\mvec{y} = \mvec{z}$.  Since $\mvec{z} \in \C_r$, it has at most $k$ non-zero coordinates.  Firstly, we claim location of these non-zero coordinates have to match with the largest coordinates of $\mvec{x}$.  If this does not hold, one can swap two coordinates of $\mvec{z}$ and get smaller distance to $\mvec{x}$ by noting that if ${x}_i < {x}_j$ and ${z}_i > 0, {z}_j = 0$ then
\begin{equation*}
\intoo{{x}_i - {z}_i}^2 + {x}_j^2 < \intoo{{x}_j - {z}_i}^2 + {x}_i^2,
\end{equation*} 
which contradicts with $\mvec{z}$ being the projection of $\mvec{x}$.  Furthermore, if ${y}_i = \sum_{j = 1}^t \epsilon_{i, j} 2^{-j}$ and ${z}_i = \sum_{j = 1}^t \tilde{\epsilon}_{i, j} 2^{-j}$ then $\abs{{x}_i - {y}_i} \leq 2^{-t - 1}$ and $\abs{{x}_i - {z}_i} \leq 2^{-t - 1}$ implies $\abs{{y}_i - {z}_i} \leq 2^{-t}$ which yields $\abs{{x}_i - {y}_i} = \abs{ {x}_i - {z}_i}$. Note that there can be a case where ${y}_i \neq {z}_i$ while they have the same distance from ${x}_i$.  As an example, consider $t = 0, \; {x}_i = 0.5, \; {y}_i = 0, \; \mvec{z}_i = 1$. This yields for every $i$ that $ \abs{{x}_i - {y}_i} \leq \abs{{x}_i - {z}_i}$. Hence
\begin{equation*}
\norm{\mvec{x} - \mvec{y}} \leq \norm{\mvec{x} - \mvec{z}},
\end{equation*}
which means $\mvec{y} = \D ( \E (\mvec{x}) )$ is also a projection on $\C_r$.

Now, let us turn to another point we would like to make, that is, even if the compression algorithm is not an accurate projection, GD-COPER can still perform an accurate recovery. Towards this goal we assume that the operation of $ \D ( \E (\mvec{x}) )$ is not a projection, but has some error. In other words, we assume that
\[
\|\D ( \E (\mvec{x}) ) - \P_{\C_r} (\mvec{x})\| \leq \gamma. 
\]
Our next theorem proves that if $\gamma$ is not too large, then GD-COPER given by the following iteration can still perform well:
\begin{align*} 
\mvec{s}_{t + 1} &= \mvec{z}_t - \mu \nabla d_A ( \mvec{z}_t ),\nonumber\\
\quad \mvec{z}_{t + 1}& =  \mathcal{D}_r  (\mathcal{E}_r ( \mvec{s}_{t + 1} )).
\end{align*}

\begin{thm} \label{thm convergence non-ideal compression}
For a fixed signal $\mvec{x} \in \Q$, define $\mvec{z}_t  \in \C_r$ as in \eqref{eq:GD-COPER} with $ \mu = \frac{1}{8 m} $.  Suppose that for all $  \theta \in \mathds{R}$,   $\ee^ {i \theta} \mvec{x} \in \Q$. Define
$ \theta_t \triangleq \arg \min\limits_{\theta \in \mathds{R}} \enVert{\mvec{z}_t -  {\rm e}^{i \theta} \mvec{x}}$.  For all $ \epsilon \geq C_2 m^{-\frac{1}{3}} $, with probability at least $ 1 - C_3 {\rm e}^{- C_1 \sqrt{m \epsilon} + (3 \ln 2) r} $, where $C_1, C_2, C_3 > 0$ are absolute constants, for  $t=1,2,\ldots$, we have
\begin{equation}\label{eq:theoremGD 3}
\enVert{ \mvec{z}_{t + 1} -  {\rm e}^{i \theta_t} \mvec{x} } \leq \intoo{ \enVert{ \mvec{z}_t -  {\rm e}^{i \theta_t} \mvec{x}} + 2 \epsilon } \enVert{ \mvec{z}_t -  {\rm e}^{i \theta_t} \mvec{x}} + 3 \delta_r + 2 \gamma  +  \sqrt{2 \gamma (\delta_r + 1 + \epsilon)}. 
\end{equation}
\end{thm}
Before we prove this theorem, let us interpret it. Everything in the theorem is similar to what we had in Theorem \ref{thm convergence of algorithm}. The only difference, is the term $2 \gamma  +  \sqrt{2 \gamma (\delta_r + 1 + \epsilon)}$ added to the error. Again given the geometric convergence of the algorithm the total error after $T$ iterations does not accumulate much and remains at the same order. It is clear that if $\gamma$ is small, then the GD-COPER algorithm performs well. 

\begin{proof}[Proof of Theorem \ref{thm convergence non-ideal compression}]
Since the proof is very similar to the proof of Theorem \ref{thm convergence of algorithm} we do not repeat the entire proof and only emphasize on the aspects of this proof that change. 
Let $\tilde{\mvec{x}} = \P_{\C_r}(  {\rm e}^{i \theta_t} \mvec{x}) $, and define $\mvec{w}_{t+1} = \P_{\C_r}(\mvec{s}_{t + 1})$. In this case, we know that $ \mvec{z}_{t + 1} = \D( \E (\mvec{s}_{t + 1}))$ and we have
\begin{equation}\label{eq:closeproj}
\|\mvec{w}_{t+1} - \mvec{z}_{t+1}\| \leq \gamma. 
\end{equation}

Since $ \tilde{\mvec{x}} \in \C_r $, we have
\begin{align*} \nonumber
\enVert{ \mvec{s}_{t + 1} -  \tilde{\mvec{x}} }^2 
& \geq \enVert{ \mvec{s}_{t + 1} -  {\mvec{w}_{t + 1}} }^2   \\
&= \|\mvec{s}_{t+1}-\mvec{z}_{t+1}\|^2 + \|\mvec{z}_{t+1}-\mvec{w}_{t+1}\|^2 + 2 \Re \left( (\mvec{z}_{t+1}-\mvec{w}_{t+1})^*(\mvec{s}_{t+1}-\mvec{z}_{t+1})\right)
\\ & \geq \|\mvec{s}_{t+1}-\mvec{z}_{t+1}\|^2 + 2 \Re \left( (\mvec{z}_{t+1}-\mvec{w}_{t+1})^*(\mvec{s}_{t+1}-\mvec{z}_{t+1})\right),
\end{align*}
where to obtain the last inequality we used the Cauchy-Schwartz inequality, \eqref{eq:closeproj}, and the fact that both $\mvec{s}_{t+1}$ and $\mvec{z}_{t+1}$ have unit norms. Therefore, we have

\begin{align} \nonumber
\enVert{ \mvec{s}_{t + 1} -  \tilde{\mvec{x}} }^2 
& \geq \enVert{ \mvec{s}_{t + 1} -  {\mvec{z}_{t + 1}} }^2+2 \Re \left( (\mvec{z}_{t+1}-\mvec{w}_{t+1})^*(\mvec{s}_{t+1}-\mvec{z}_{t+1})\right)  \\
 & =
\enVert{\mvec{s}_{t + 1} -  \tilde{\mvec{x}} }^2 + \enVert{ \tilde{ \mvec{x}} - \mvec{z}_{t + 1}}^2 + 2 \Re \intoo{ ( \tilde{ \mvec{x}} - \mvec{z}_{t + 1})^* (\mvec{s}_{t + 1} - \tilde{\mvec{x}}) } \nonumber \\
&+  2 \Re \left( (\mvec{z}_{t+1}-\mvec{w}_{t+1})^*(\mvec{s}_{t+1}-\tilde{\mvec{x}})\right)+ 2 \Re \left( (\mvec{z}_{t+1}-\mvec{w}_{t+1})^*(\tilde{\mvec{x}}-\mvec{z}_{t+1})\right). \nonumber
\end{align}
Hence,
\begin{equation} \label{upper bdd of error by real inner product 2}
\enVert{ \tilde{\mvec{x}} - \mvec{z}_{t + 1}}^2 \leq 2 \Re \intoo{ ( \tilde{ \mvec{x}} - \mvec{z}_{t + 1})^* (\tilde{\mvec{x}} - \mvec{s}_{t + 1}) }  +  2 \Re \left( (\mvec{w}_{t+1}-\mvec{z}_{t+1})^*(\mvec{s}_{t+1}-\tilde{\mvec{x}})\right)+ 2 \Re \left( (\mvec{w}_{t+1}-\mvec{z}_{t+1})^*(\tilde{\mvec{x}}-\mvec{z}_{t+1})\right)
\end{equation}

Recall that  $ \e{ \nabla d_A (\mvec{z}) } 
= 8m ( \mvec{z} \mvec{z}^* - \mvec{x} \mvec{x}^* ) \mvec{z} $. Thus, 
\begin{align} \nonumber
 \tilde{\mvec{x}} - \mvec{s}_{t + 1} 
& =
 \tilde{\mvec{x}} -  {\rm e}^{i \theta_t} \mvec{x}  + {\rm e}^{i \theta_t} \mvec{x}  - \intoo{ \mvec{z}_t - \frac{1}{8 m} \e{ \nabla d_A( \mvec{z}_t) } + \frac{1}{8m} \intoo{\e{ \nabla d_A(\mvec{z}_t)} - \nabla d_A(\mvec{z}_t)} }
\\ & = \nonumber
 \tilde{\mvec{x}} -  {\rm e}^{i \theta_t} \mvec{x}  + {\rm e}^{i \theta_t} \mvec{x}  - \intoo{\mvec{z}_t - \mvec{z}_t + (\mvec{x}^* \mvec{z}_t) \mvec{x}} + \frac{1}{8m} \intoo{ \nabla d_A (\mvec{z}_t) - \e{ \nabla d_A(\mvec{z}_t)} }
\\ & = \label{thm conv alg proof 1_2}
\tilde{\mvec{x}} -  {\rm e}^{i \theta_t} \mvec{x}  + (1 - ({\rm e}^{i \theta_t} \mvec{x})^* \mvec{z}_t) {\rm e}^{i \theta_t} \mvec{x}  + \frac{1}{8m} \intoo{ \nabla d_A(\mvec{z}_t) - \e{ \nabla d_A(\mvec{z}_t)} }.
\end{align}
Note that 
$ \enVert{\tilde{\mvec{x}} - {\rm e}^{i \theta_t} \mvec{x}} \leq \delta_r$.
Also,  since $1 -({\rm e}^{i \theta_t} \mvec{x})^* \mvec{z}_t = \frac{1}{2} \enVert{  {\rm e}^{i \theta_t} \mvec{x} - \mvec{z}_t }^2
$
and $ \enVert{{\rm e}^{i \theta_t} \mvec{x}} = \enVert{\mvec{v}_{t}} = 1 $. Let 
\begin{align}
 \mvec{v}_{t} \triangleq \frac{ \tilde{\mvec{x}}  - \mvec{z}_{t + 1}}{\enVert{ \tilde{\mvec{x}}  - \mvec{z}_{t + 1} }}.
\end{align}
and 
\begin{align}
 \tilde{\mvec{{v}}}_{t} \triangleq \frac{ \mvec{w}_{t+1}  - \mvec{z}_{t + 1}}{\enVert{ \mvec{w}_{t+1}  - \mvec{z}_{t + 1} }}.
\end{align}
Then we have
\begin{align} \label{upper bdd of error by real inner product 3}
&\enVert{ \tilde{\mvec{x}} - \mvec{z}_{t + 1}}^2 \leq 2 \Re \intoo{ ( \tilde{ \mvec{x}} - \mvec{z}_{t + 1})^* (\tilde{\mvec{x}} - \mvec{s}_{t + 1}) }  +  2 \Re \left( (\mvec{w}_{t+1}-\mvec{z}_{t+1})^*(\mvec{s}_{t+1}-\tilde{\mvec{x}})\right)+ 2 \Re \left( (\mvec{w}_{t+1}-\mvec{z}_{t+1})^*(\tilde{\mvec{x}}-\mvec{z}_{t+1})\right) \nonumber \\
& \leq 2 \| \tilde{ \mvec{x}} - \mvec{z}_{t + 1}\| |\Re \intoo{ \mvec{v}_{t}^*(\tilde{\mvec{x}} - \mvec{s}_{t + 1})  }| + 2 \gamma |\Re \left( (\tilde{\mvec{v}}_{t+1})^*(\mvec{s}_{t+1}-\tilde{\mvec{x}})\right)| + 2 \gamma | \Re \left( (\tilde{\mvec{v}}_{t+1})^*(\tilde{\mvec{x}}-\mvec{z}_{t+1})\right)| \nonumber \\
&\leq 2 \| \tilde{ \mvec{x}} - \mvec{z}_{t + 1}\| \left(\delta_r +\frac{1}{2} \enVert{  {\rm e}^{i \theta_t} \mvec{x} - \mvec{z}_t }^2 + \left|\Re \intoo{ \mvec{v}_{t}^*( \frac{1}{8m} \intoo{ \nabla d_A(\mvec{z}_t) - \e{ \nabla d_A(\mvec{z}_t)} )  }}\right|\right) \nonumber \\
&+ 2 \gamma \left(\delta_r +\frac{1}{2} \enVert{  {\rm e}^{i \theta_t} \mvec{x} - \mvec{z}_t }^2 + \left|\Re \intoo{ \mvec{v}_{t}^*( \frac{1}{8m} \intoo{ \nabla d_A(\mvec{z}_t) - \e{ \nabla d_A(\mvec{z}_t)} )  }}\right|\right) + 2 \gamma \|\tilde{\mvec{x}}-\mvec{z}_{t+1}\|.  
\end{align}
%
%
Define events $\cal G$  and $\tilde{\cal G}$ as follows
\begin{equation} \label{def event G}
{\cal G} \triangleq \cbr{ \frac{1}{4m} \Re \left(\mvec{v}^* \intoo{ \nabla d_A(\mvec{z}) - \e{ \nabla d_A(\mvec{z}) } } \right) \leq \epsilon \inf_{\theta \in \mathds{R}} \enVert{ {\rm e}^{i \theta} \mvec{x} - \mvec{z} }, \quad \mvec{v} = \frac{\tilde{\mvec{x}} - \mvec{z'}}{\enVert{\tilde{\mvec{x}} - \mvec{z'}}}, \qquad \forall \mvec{z}, \tilde{\mvec{x}} \in \C_r },
\end{equation}
\begin{equation} \label{def event tilde G}
{\tilde{\cal G}} \triangleq \cbr{ \frac{1}{4m} \Re \left(\tilde{\mvec{{v}}}^* \intoo{ \nabla d_A(\mvec{z}) - \e{ \nabla d_A(\mvec{z}) } } \right) \leq \epsilon \inf_{\theta \in \mathds{R}} \enVert{ {\rm e}^{i \theta} \mvec{x} - \mvec{z} }, \quad \tilde{\mvec{{v}}} = \frac{\mvec{z} - \mvec{z'}}{\enVert{\mvec{z} - \mvec{z'}}}, \qquad \forall \mvec{z}, \mvec{z'} \in \C_r }. 
\end{equation}
 
 Similar to the proof of Theorem \ref{thm convergence of algorithm}, we can bound the probabilities of these events. In particular, we have constants $C_1, C_2, C_3 > 0$ such that for every $\epsilon \geq C_2 m^{-\frac{1}{3}}$,
\begin{equation}\label{eq:concentrationsinglepart 2}
\p{ \envert{ \Re \intoo{  \mvec{v}^* \intoo{ \nabla d_A(\mvec{z}) - \e{\nabla d_A(\mvec{z})}} } } > 4 m \epsilon \inf_{\theta \in \mathds{R}} \enVert{{\rm e}^{i \theta} \mvec{x} - \mvec{z} } } \leq  C_3 {\rm e}^{ - C_1 \sqrt{m \epsilon}}. 
\end{equation}
Hence, combining \eqref{eq:concentrationsinglepart 2} with the union bound, for every $\epsilon \geq C_2 m^{-\frac{1}{3}}$, we have 
\begin{equation}
\p{\cal G} \geq 1 - 2^{3r} C_3 {\rm e}^{- C_1 \sqrt{m \epsilon}}. 
\end{equation}
Similarly, we have
\begin{equation}
\p{\tilde{\cal G}} \geq 1 - 2^{3r} C_3 {\rm e}^{- C_1 \sqrt{m \epsilon}}. 
\end{equation}
Therefore, conditioned on  $\cal G \cap \tilde{\cal G}$ we have  
\begin{align*}
\frac{1}{4 m} \Re  \intoo{\mvec{v}_{t}^*\intoo{ \nabla d_A(\mvec{z}_t) - \e{ \nabla d_A(\mvec{z}_t)} } } & \leq \epsilon \inf_{\theta \in \mathds{R}} \enVert{{\rm e}^{i \theta} \mvec{x} - \mvec{z}_t} 
= \epsilon \enVert{{\rm e}^{i \theta_t} \mvec{x} - \mvec{z}_t}.
\end{align*}
Hence, \eqref{upper bdd of error by real inner product 3} implies that, for all  $t\in\{ 1 \cdots,T\}$, 
\begin{align}
&\enVert{ \tilde{\mvec{x}} - \mvec{z}_{t + 1}}^2 \leq 2 \| \tilde{ \mvec{x}} - \mvec{z}_{t + 1}\| \left(\delta_r +\frac{1}{2} \enVert{  {\rm e}^{i \theta_t} \mvec{x} - \mvec{z}_t }^2 +  \epsilon \enVert{{\rm e}^{i \theta_t} \mvec{x} - \mvec{z}_t} \right) \nonumber \\
&+ 2 \gamma \left(\delta_r +\frac{1}{2} \enVert{  {\rm e}^{i \theta_t} \mvec{x} - \mvec{z}_t }^2 +   \epsilon \enVert{{\rm e}^{i \theta_t} \mvec{x} - \mvec{z}_t} \right) + 2 \gamma \|\tilde{\mvec{x}}-\mvec{z}_{t+1}\|  \nonumber \\
&\leq 2 \| \tilde{ \mvec{x}} - \mvec{z}_{t + 1}\| \left(\delta_r +\frac{1}{2} \enVert{  {\rm e}^{i \theta_t} \mvec{x} - \mvec{z}_t }^2 +  \epsilon \enVert{{\rm e}^{i \theta_t} \mvec{x} - \mvec{z}_t} + \gamma \right) + 2 \gamma (\delta_r + 1 + \epsilon). 
\end{align}
Given that we have a quadratic function of $\enVert{ \tilde{\mvec{x}} - \mvec{z}_{t + 1}}$ with one negative and one positive root, it is straightforward to see that $\enVert{ \tilde{\mvec{x}} - \mvec{z}_{t + 1}}$ should be smaller than the positive root. By bounding the positive root we obtain
\[
\enVert{ \tilde{\mvec{x}} - \mvec{z}_{t + 1}} \leq 2 \left(\delta_r +\frac{1}{2} \enVert{  {\rm e}^{i \theta_t} \mvec{x} - \mvec{z}_t }^2 +  \epsilon \enVert{{\rm e}^{i \theta_t} \mvec{x} - \mvec{z}_t} + \gamma \right) + \sqrt{2 \gamma (\delta_r + 1 + \epsilon)}. 
\]
Hence,
\begin{align*}\label{eq:last_nonideal compression}
\enVert{{\rm e}^{i \theta_t} \mvec{x} - \mvec{z}_{t+1}} &\leq \delta_r + 2 \left(\delta_r +\frac{1}{2} \enVert{  {\rm e}^{i \theta_t} \mvec{x} - \mvec{z}_t }^2 +  \epsilon \enVert{{\rm e}^{i \theta_t} \mvec{x} - \mvec{z}_t} + \gamma \right) + \sqrt{2 \gamma (\delta_r + 1 + \epsilon)}  \nonumber \\
&\leq 3 \delta_r + 2 \gamma  +  \sqrt{2 \gamma (\delta_r + 1 + \epsilon)} + \enVert{{\rm e}^{i \theta_t} \mvec{x} - \mvec{z}_t} \left(2 \epsilon + \enVert{{\rm e}^{i \theta_t} \mvec{x} - \mvec{z}_t} \right). 
\end{align*}

\end{proof}

\section{Proofs} 
\label{sec:proofs}
\subsection{Preliminaries}

\begin{lem} \label{lem correct phase_a}
$\inf\limits_{\theta \in \intco{0,2 \pi}} \enVert{{\rm e}^{i \theta}\mvec{x} - \mvec{y}}$ achieves its minimum at a value of $\theta$ that makes ${\rm e}^{-i \theta}\mvec{x}^* \mvec{y}$ a positive real number, and for that $\theta$ we have 
\begin{align*}
 \enVert{{\rm e}^{i \theta} \mvec{x} - \mvec{y}}^2 &  = \enVert{\mvec{x}}^2 + \enVert{\mvec{y}}^2 - 2 \envert{\mvec{x}^* \mvec{y}}
 \\ \nonumber &
 = \intoo{\enVert{\mvec{x}} - \enVert{\mvec{y}}}^2 + 2 \intoo{\enVert{\mvec{x}} \enVert{\mvec{y}} - \envert{\mvec{x}^* \mvec{y}}}.
 \end{align*}
\end{lem}
\begin{proof}
Let $ \mvec{z} = {\rm e}^{i \theta} \mvec{x} $
\begin{align*}
\enVert{\mvec{z} - \mvec{y}}^2 & = \intoo{\mvec{z} - \mvec{y}}^* \intoo{\mvec{z} - \mvec{y}} 
\\ \nonumber &
= \enVert{\mvec{z}}^2 + \enVert{\mvec{y}}^2 - 2 \Re(\mvec{z}^* \mvec{y}) 
\\ \nonumber &
\geq \enVert{\mvec{z}}^2 + \enVert{\mvec{y}}^2 - 2\envert{\mvec{z}^* \mvec{y}} 
\\ \nonumber &
=  \enVert{\mvec{x}}^2 + \enVert{\mvec{y}}^2 - 2 \envert{\mvec{x}^* \mvec{y}}.
\end{align*}
Note that equality holds only when $ \Re(\mvec{z}^* \mvec{y}) = \envert{\mvec{z}^* \mvec{y}}$, which proves our claim. 
\end{proof}

\begin{lem} \label{correct phase}
 For any two vectors $\mvec{x}$ and $\hat{\mvec{x}}$ in $ \mathds{C}^n$, we have
\begin{align*}
 \frac{1}{8} \intoo{\inf\limits_{\theta} \enVert{{\rm e}^{i \theta}\mvec{x} - \hat{\mvec{x}}}^2}^2  &
\leq \frac{1}{2} \intoo{\enVert{\mvec{x}}^2 - \enVert{\hat{\mvec{x}}}^2}^2 +  \intoo{\enVert{\mvec{x}}^2 \enVert{\hat{\mvec{x}}}^2 - \envert{\mvec{x}^* \hat{\mvec{x}}}^2}.
\end{align*}
\end{lem}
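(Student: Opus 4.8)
The idea is to collapse everything to an elementary inequality among three scalars. First I would apply Lemma~\ref{correct phase_a} to rewrite the quantity inside the outer square: it gives $\inf_{\theta}\enVert{{\rm e}^{i\theta}\mvec{x}-\hat{\mvec{x}}}^2 = \left(\enVert{\mvec{x}}-\enVert{\hat{\mvec{x}}}\right)^2 + 2\left(\enVert{\mvec{x}}\enVert{\hat{\mvec{x}}}-\envert{\mvec{x}^*\hat{\mvec{x}}}\right)$. Setting $a=\enVert{\mvec{x}}$, $b=\enVert{\hat{\mvec{x}}}$, $c=\envert{\mvec{x}^*\hat{\mvec{x}}}$, the Cauchy--Schwarz inequality gives $0\le c\le ab$, and with the abbreviations $u\triangleq (a-b)^2\ge 0$ and $v\triangleq 2(ab-c)\ge 0$ the left-hand side of the lemma is exactly $\tfrac18(u+v)^2$.

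Next I would massage the right-hand side. Using $(a^2-b^2)^2=(a-b)^2(a+b)^2=u\,(a+b)^2$ and $a^2b^2-c^2=(ab-c)(ab+c)$, the plan is to lower-bound each term by a monomial in $u,v$: since $a,b\ge 0$ we have $(a+b)^2\ge(a-b)^2=u$, hence $\tfrac12(a^2-b^2)^2\ge\tfrac12 u^2$; and since $ab-c=\tfrac v2$ while $ab+c\ge ab\ge ab-c=\tfrac v2$ (both inequalities using only $c\ge 0$), we get $a^2b^2-c^2=\tfrac v2(ab+c)\ge\tfrac14 v^2$. Thus the right-hand side of the lemma is at least $\tfrac12 u^2+\tfrac14 v^2$.

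It then remains to verify the purely scalar inequality $\tfrac18(u+v)^2\le\tfrac12 u^2+\tfrac14 v^2$ for all $u,v\ge 0$; clearing denominators this reads $3u^2-2uv+v^2\ge 0$, i.e.\ $2u^2+(u-v)^2\ge 0$, which is obvious. Concatenating the three bounds yields the claim. I do not anticipate a genuine obstacle here: the only point requiring care is choosing the intermediate lower bound $\tfrac12 u^2+\tfrac14 v^2$ so that the last step collapses to a sum of squares, and double-checking that the leading constant comes out to be exactly $\tfrac18$ — this is the constant that later produces the factor $768$ in the roadmap of Theorem~\ref{theorem_main}, so getting it right matters. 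It is also worth flagging where each hypothesis enters: $c\le ab$ (Cauchy--Schwarz) is used only to get $v\ge 0$, while $c\ge 0$ is what makes both $ab+c\ge ab$ and $ab\ge\tfrac v2$ hold.
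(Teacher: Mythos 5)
Your proposal is correct and follows essentially the same route as the paper: both reduce to scalars via Lemma~\ref{correct phase_a}, use the same two bounds $(a-b)^4\le(a^2-b^2)^2$ and $(ab-c)^2\le a^2b^2-c^2$, and close with an elementary two-variable quadratic inequality (the paper applies $(A+B)^2\le 2A^2+2B^2$ up front, whereas you defer the combination to the end — a cosmetic reordering). The constants all check out.
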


\begin{proof}

Note that according to Lemma \ref{lem correct phase_a} we have
\begin{align*}
\intoo{ \intoo{\enVert{\mvec{x}} - \enVert{\hat{\mvec{x}}}}^2 + 2 \intoo{\enVert{\mvec{x}} \enVert{\hat{\mvec{x}}} - \envert{\mvec{x}^* \hat{\mvec{x}}} } }^2 & \leq
 \intoo{1 + 1} \intoo{\intoo{\enVert{\mvec{x}} - \enVert{\hat{\mvec{x}}}}^4 + 4 \intoo{\enVert{\mvec{x}} \enVert{\hat{\mvec{x}}} - \envert{\mvec{x}^* \hat{\mvec{x}}} }^2}  
\\ &
\leq 2 \intoo{\enVert{\mvec{x}}^2 - \enVert{\hat{\mvec{x}}}^2}^2 + 8 \intoo{\enVert{\mvec{x}}^2 \enVert{\hat{\mvec{x}}}^2 - \envert{\mvec{x}^* \hat{\mvec{x}}}^2 }.
\end{align*}
Hence,
\begin{align*}
\frac{1}{8} \intoo{\inf_\theta \enVert{{\rm e}^{i \theta} \mvec{x} - \hat{\mvec{x}}}^2}^2
& \leq 
 \frac{1}{4} \intoo{\enVert{\mvec{x}}^2 - \enVert{\hat{\mvec{x}}}^2}^2 +  \intoo{\enVert{\mvec{x}}^2 \enVert{\hat{\mvec{x}}}^2 - \envert{\mvec{x}^* \hat{\mvec{x}}}^2}
 \\ \nonumber &
\leq \frac{1}{2} \intoo{\enVert{\mvec{x}}^2 - \enVert{\hat{\mvec{x}}}^2}^2 +  \intoo{\enVert{\mvec{x}}^2 \enVert{\hat{\mvec{x}}}^2 - \envert{\mvec{x}^* \hat{\mvec{x}}}^2}.
\end{align*}
\end{proof}

\begin{lem} \label{e^u2 Phi(u) < 1} Let $\Phi(x)$ denote the CDF of a standard normal variable. Then, for any $u > 0$,
$$ g(u) = {\rm e}^{\frac{1}{u^2}} \Phi \intoo{- \frac{\sqrt{2}}{u}} \leq 1. $$
\end{lem}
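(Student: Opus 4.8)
The plan is to perform the natural change of variables and recognize the statement as a standard Gaussian tail bound. Setting $a = \sqrt{2}/u$, which ranges over $(0,\infty)$ as $u$ does, we have $1/u^2 = a^2/2$, so $g(u) = e^{a^2/2}\Phi(-a)$, and the claim becomes $\Phi(-a) \le e^{-a^2/2}$ for all $a>0$.

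The first (and essentially only) step I would take is the Chernoff bound for a standard normal $Z$: since $a>0$, $\Phi(-a) = \Pr(Z \ge a) = \Pr(e^{aZ} \ge e^{a^2}) \le e^{-a^2}\,\mathbb{E}[e^{aZ}] = e^{-a^2}e^{a^2/2} = e^{-a^2/2}$, using Markov's inequality and the standard normal moment generating function $\mathbb{E}[e^{\lambda Z}] = e^{\lambda^2/2}$. Undoing the substitution gives $g(u) \le 1$.

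I do not anticipate a genuine obstacle; the one place to be slightly careful is that the cruder Mills-ratio estimate $\Phi(-a) \le \frac{1}{a\sqrt{2\pi}}\,e^{-a^2/2}$ does not suffice on its own, because its prefactor diverges as $a \to 0^+$ (i.e.\ $u \to \infty$), even though $g(u) \to \tfrac12$ there. If one prefers to avoid probabilistic language entirely, an alternative is to show that $h(a) := e^{a^2/2}\Phi(-a)$ is non-increasing on $(0,\infty)$: writing $\phi$ for the standard normal density, $h'(a) = e^{a^2/2}\bigl(a\Phi(-a)-\phi(a)\bigr)$, and $a\Phi(-a) \le \phi(a)$ follows from $\int_a^\infty e^{-s^2/2}\,ds \le \int_a^\infty \tfrac{s}{a}\,e^{-s^2/2}\,ds$; since $h(0^+) = \tfrac12$, this even yields the stronger bound $g(u) \le \tfrac12$. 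Either way, $g(u) \le 1$ as claimed.
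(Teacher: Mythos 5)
Your proof is correct, but it takes a different route from the paper's. After the same substitution ($a=\sqrt{2}/u$, reducing the claim to $\Phi(-a)\leq {\rm e}^{-a^2/2}$ for $a>0$), the paper argues by elementary calculus on the difference $h(v)={\rm e}^{-v^2/2}-\Phi(-v)$: it computes $h'(v)=\bigl(-v+\tfrac{1}{\sqrt{2\pi}}\bigr){\rm e}^{-v^2/2}$, observes that $h$ increases up to $v=1/\sqrt{2\pi}$ and decreases thereafter, and concludes $h\geq 0$ from $h(0)=\tfrac12>0$ and $h(\infty)=0$. You instead invoke the Chernoff bound $\Pr(Z\geq a)\leq {\rm e}^{-a^2}\,\mathbb{E}[{\rm e}^{aZ}]={\rm e}^{-a^2/2}$, which is a genuine one-line proof requiring no case analysis, at the cost of importing the Gaussian moment generating function. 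Your secondary argument (monotonicity of $a\mapsto {\rm e}^{a^2/2}\Phi(-a)$ via the Mills-ratio inequality $a\Phi(-a)\leq\phi(a)$) is also valid and is in fact sharper than both, yielding $g(u)\leq\tfrac12$; and your caveat that the raw Mills-ratio bound $\Phi(-a)\leq\frac{1}{a\sqrt{2\pi}}{\rm e}^{-a^2/2}$ alone would fail near $a=0^+$ is well taken. Any of these suffices for the lemma as used in the paper.
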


\begin{proof}
With a  change of variable $v = \frac{\sqrt{2}}{u}$, proving $g(u) \leq 1$ is equivalent to proving
$h(v) = {\rm e}^{-\frac{v^2}{2}} - \Phi(-v) \geq 0$ for all $v \geq 0$.  We have
\begin{align*}
h'(v) = \intoo{-{v} + \frac{1}{\sqrt{2 \pi}}} {\rm e}^{- \frac{v^2}{2}} \implies \begin{dcases}
h'(v) \geq 0 & v \leq \frac{1}{\sqrt{ 2 \pi}}
\\
h'(v) < 0 & v > \frac{1}{\sqrt{ 2 \pi}}
\end{dcases}.
\end{align*}
In addition, $h(0) = \frac{1}{2} > 0$ and $ h(\infty) = 0$.
\end{proof}

\begin{lem}[Chi squared concentration] \label{chi squared Upper bound}
For any $\tau \geq 0$, we have
$$ \p{\chi^2(m) > m (1 + \tau)} \leq {\rm e}^{- \frac{m}{2} \intoo{\tau - \ln(1 + \tau)}}.$$
\end{lem}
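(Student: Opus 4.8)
The plan is to use the standard Chernoff/exponential-moment method. Write $\chi^2(m) = \sum_{i=1}^m Z_i^2$ with $Z_i$ i.i.d.\ standard normal, and recall the moment generating function $\mathbb{E}[{\rm e}^{s Z_i^2}] = (1 - 2s)^{-1/2}$ for $s < \tfrac12$, so that by independence $\mathbb{E}[{\rm e}^{s \chi^2(m)}] = (1 - 2s)^{-m/2}$. Then, for any $0 \le s < \tfrac12$, Markov's inequality applied to ${\rm e}^{s \chi^2(m)}$ gives
\[
\mathbb{P}\!\left(\chi^2(m) > m(1+\tau)\right) \le {\rm e}^{-s m (1+\tau)} (1 - 2s)^{-m/2}.
\]

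Next I would optimize the bound over $s$. Taking the logarithm of the right-hand side and dividing by $m$, we must minimize $\varphi(s) = -s(1+\tau) - \tfrac12 \ln(1 - 2s)$ over $s \in [0, \tfrac12)$. Setting $\varphi'(s) = -(1+\tau) + (1-2s)^{-1} = 0$ yields the minimizer $s^\star = \tfrac{\tau}{2(1+\tau)}$, which lies in $[0,\tfrac12)$ precisely because $\tau \ge 0$. Substituting $s^\star$ back (so that $1 - 2s^\star = (1+\tau)^{-1}$) gives $\varphi(s^\star) = -\tfrac{\tau}{2} + \tfrac12 \ln(1+\tau) = -\tfrac12\bigl(\tau - \ln(1+\tau)\bigr)$, which is exactly the claimed exponent after multiplying by $m$.

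There is essentially no hard step here: the argument is a textbook Chernoff bound, and the only mild point to check is that the optimal $s^\star$ remains in the admissible range $[0,\tfrac12)$ for all $\tau \ge 0$ (and that $\tau - \ln(1+\tau) \ge 0$, so the bound is nontrivial), both of which are immediate. I would present the three displays above in order and conclude.
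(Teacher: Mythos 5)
Your proof is correct: the Chernoff/exponential-moment argument with the $\chi^2$ moment generating function $(1-2s)^{-m/2}$, optimized at $s^\star = \tfrac{\tau}{2(1+\tau)}$, yields exactly the stated exponent. The paper itself does not prove this lemma but simply cites \cite{com2com}, and your argument is the standard one used there, so there is nothing to add.
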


The proof of this lemma can be found in \cite{com2com}.

\subsection{Heavy-tailed concentration}

In this section, we discuss a few lemmas regarding the concentration of heavy-tailed random variables. A more complete discussion of such concentration results can be found in \cite{Bakhshizadeh2020concentration}. 

\begin{milad}

\begin{lem}[Bounded random variable MGF upper bound]
\label{lem bdd iid mgf upper bound}
Let $X$ be a random variable and $c, c'$ positive constants such that 
\begin{equation*}
\p{ \abs{X - \e{X}} > \tau } \leq c' \expp{ - c \sqrt{\tau} } \quad
\forall \tau \geq 0.
\end{equation*}
Then there exist constants $c_2, c_3$, depending only on the distribution of $X$, such that for all $L \geq c_3$ and $\lambda = \frac{c}{2 \sqrt{L}}$
\begin{equation*}
\log \e{ \expp{\lambda \intoo{X \mathbf{1}_{X \leq L} - \e{X}}} } \leq \frac{c_2}{2} \lambda^2.
\end{equation*}
\end{lem}

\begin{proof}
For simplicity of the notation let $X_L \triangleq X \mathbf{1}_{X \leq L} $ denotes the truncated version of the $X$.
By Taylor expansion of exponential function at $\e{X}$, one can get

\begin{align*}
\expp{\lambda X_L} = \expp{\lambda \e{X}} + \lambda \intoo{X_L - \e{X}} \expp{\lambda \e{X}} + \frac{\lambda^2}{2} \intoo{X_L - \e{X}}^2 \expp{\lambda Y},
\end{align*}
where $Y$ is a random variable whose value is between $\e{X}$ and $X_L$.  Therefore
\begin{align} \label{eq:mgf bound proof 1}
\e{ \expp{\lambda \intoo{X_L - \e{X}}} } = 1 + \lambda \intoo{\e{X_L} - \e{X}} + \frac{\lambda^2}{2} \e{\intoo{X_L - \e{X}}^2 \expp{\lambda \intoo{Y - \e{X}}}}.
\end{align}
Note that $\e{X_L - X} \leq 0$ and $\log(1 + x) \leq x \quad \forall x \geq 0$, hence by \eqref{eq:mgf bound proof 1} we obtain
\begin{equation*}
\log \e{ \expp{ \lambda \intoo{X_L - \e{X}}} } \leq \frac{\lambda^2}{2} \e{\intoo{X_L - \e{X}}^2 \expp{\lambda \intoo{Y - \e{X}}}},
\end{equation*}
which means if for some $c_3$ we show 
\begin{equation} \label{eq:mgf bound proof c2 bound}
\sup_{L \geq c_3, \; \lambda = \frac{c}{2 \sqrt{L}}
} \e{\intoo{X_L - \e{X}}^2 \expp{\lambda \intoo{Y - \e{X}}}} \leq c_2,
\end{equation}
the Lemma holds with $c_2, c_3$.

Note that since $Y$ is bounded between $\e{X}$ and $X_L$ we get 
\begin{align} \nonumber
& \e{\intoo{X_L - \e{X}}^2 \expp{\lambda \intoo{Y - \e{X}}}} \leq
\\ & \label{eq:mgf bound proof max of two terms}
{\e{X_L - \e{X}}^2 + \e{\intoo{X_L - \e{X}}^2 \expp{\lambda \intoo{X_L - \e{X}}}}}.
\end{align}
Since $X_L \xrightarrow{L \to \infty} X$ and it is dominated by $X$, by the dominated convergence Theorem we get $\e{X_L - \e{X}}^2 \to Var (X)$, hence for $L > c_3'$ we have
\begin{equation} \label{eq:mgf bound proof variance bound}
\e{X_L - \e{X}}^2 \leq 2 Var (X).
\end{equation}
Now, we bound the second term in \eqref{eq:mgf bound proof max of two terms}.

\begin{align*}
\e{\intoo{X_L - \e{X}}^2 \expp{\lambda \intoo{X_L - \e{X}}}} & =
\e{\intoo{X_L - \e{X}}^2 \expp{\lambda \intoo{X_L - \e{X}}} \mathbf{1}_{X_L < \e{X}}}
\\ & \;\; +
\e{\intoo{X_L - \e{X}}^2 \expp{\lambda \intoo{X_L - \e{X}}} \mathbf{1}_{X_L \geq \e{X}}}.
\end{align*}

Note that
\begin{equation} \label{eq:mgf bound proof bound term 2,1}
\e{\intoo{X_L - \e{X}}^2 \expp{\lambda \intoo{X_L - \e{X}}} \mathbf{1}_{X_L < \e{X}}} \leq \e{X_L -\e{X}}^2 \leq 2 Var (X),
\end{equation}
for $L > c_3'$.

Moreover, if $U \define X_L - \e{X}$

\begin{align} \nonumber
\e  {U^2 \expp{\lambda U} \mathbf{1}_{U \geq 0}}  & =
\int_0^\infty \p{U^2 \expp{\lambda U} > u, \; U \geq 0 } du
\\ & = \nonumber
\int_0^\infty \p{X_L - \e{X} > t } du, \qquad t^2 \expp{\lambda t} = u,
\\ & = \nonumber
\int_0^{L - \e{X}} \p{X - \e{X} > t} \expp{\lambda t} \intoo{2t + \lambda t^2} dt
\\ & \leq \label{eq:mgf bound proof int ineq 1}
\int_0^{L - \e{X}} c' \expp{- c \sqrt{t} + \frac{c}{2 \sqrt{L}} t} \intoo{2t + \lambda t^2} dt.
\end{align}
Note that for large enough $L$ and $0 \leq t \leq L - \e{X} $, we have 
$ - c \sqrt{t} + \frac{c}{2 \sqrt{L}} t \leq -\frac{c \sqrt{t}}{3}$.  More specifically, 
\begin{align*}
- \sqrt{t} + \frac{t}{2 \sqrt{L}} \leq -\frac{\sqrt{t}}{3}, \qquad
\forall L \geq \frac{-9 \e{X}}{7}, \quad 0 \leq t \leq L - \e{X},
 \end{align*} 

hence by \eqref{eq:mgf bound proof int ineq 1} we obtain
\begin{align} \nonumber
\e{\intoo{X_L - \e{X}}^2 \expp{\lambda \intoo{X_L - \e{X}}} \mathbf{1}_{X_L \geq \e{X}}} & \leq
c' \int_0^{L - \e{X}} \expp{- \frac{c \sqrt{t}}{3}} (2t + \lambda t^2) dt
\\ & \leq \nonumber
c' \int_0^{\infty} \expp{- \frac{c \sqrt{t}}{3}} (2t + \lambda t^2) dt
\\ & = \nonumber
c' \int_0^\infty  \exp(-z) \intoo{\frac{18}{c^2} z^2 + \frac{81 \lambda}{c^4} z^4} \frac{18 z}{c^2} dz
\\ & = \nonumber
\frac{18^2 c'}{c^4} \Gamma(4) + \frac{18 c' \times 81 \lambda}{c^6} \Gamma(6)
\\ & = \nonumber
\frac{18^2 c'}{c^4} \Gamma(4) + \frac{9 c' \times 81}{c^5 \sqrt{L}} \Gamma(6)
\\ & \leq \label{eq:mgf bound proof bound term2,2}
\frac{324 c'}{c^4} \Gamma(4) + \frac{729 c'}{c^5 \sqrt{c_3}} \Gamma(6), \qquad \forall L \geq c_3.
\end{align}
Hence. if we set $c_2 = 4 Var(X) + \frac{324 c'}{c^4} \Gamma(4) + \frac{729 c'}{c^5 \sqrt{c_3}} \Gamma(6)$ and $c_3 \geq c_3'$, \eqref{eq:mgf bound proof variance bound}, \eqref{eq:mgf bound proof bound term 2,1}, \eqref{eq:mgf bound proof bound term2,2} yield \eqref{eq:mgf bound proof c2 bound} which concludes the proof.
\end{proof}

\end{milad}

\begin{milad}
\begin{lem}[Heavy tail concentration] \label{lem heavy tail concentration real}
Let $ \cbr{Y_k}_{k \in \mathds{N}} $ be i.i.d. random variables.  Assume there are constants $c > 0, \; c' \geq 1$ such that $ \p{\envert{Y_{k} - \e{Y_k}} > \tau} \leq c' {\rm e}^{- c \sqrt{\tau}}$, for all $\tau > 0$ .  Then, there exist a positive constant $C_3> 0$, such that, for every $ \epsilon >  C_3 m^{- \frac{1}{3}}$, 
\begin{equation}
\p{ \envert{ \frac{1}{m} \sum_{k = 1}^m  Y_k - \e{ Y_k }} > \epsilon } \leq 4 {\rm e}^{ - \frac{c}{2} \sqrt{ m \epsilon} }.
\end{equation}

\end{lem}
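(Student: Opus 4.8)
The plan is to argue by truncation. Fix a truncation level $T := c_0\, m\epsilon$ for a sufficiently large absolute constant $c_0$, and split $Y_k = Y_k' + Y_k''$ with $Y_k' := Y_k\,\mathbf{1}_{|Y_k|\le T}$ and $Y_k'' := Y_k\,\mathbf{1}_{|Y_k| > T}$. On the event $\mathcal{A} := \{|Y_k|\le T \text{ for all }k\}$ one has $\sum_k Y_k = \sum_k Y_k'$, so I would split the deviation event into $\mathcal{A}^c$ and a deviation of the truncated average. The union bound and the tail hypothesis give $\mathbb{P}(\mathcal{A}^c) \le m c'\,{\rm e}^{-\sqrt{T/c}} = m c'\,{\rm e}^{-\sqrt{c_0/c}\,\sqrt{m\epsilon}}$; since $\epsilon > C_4 m^{-1/3}$ forces $\sqrt{m\epsilon}\ge \sqrt{C_4}\,m^{1/3}\gg \ln m$, choosing $c_0$ large makes this at most $c'\,{\rm e}^{-C_3\sqrt{m\epsilon}}$. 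Truncation also perturbs the mean, but integrating the tail bound yields $|\mathbb{E} Y_k''|\le c'\,{\rm e}^{-\sqrt{T/c}}\,(T + 2\sqrt{cT} + 2c)$, which for this $T$ is below $\epsilon/4$ once $m$ exceeds an absolute constant. Hence it will suffice to bound $\mathbb{P}\!\left(\tfrac1m\sum_k (Y_k' - \mathbb{E} Y_k') > \tfrac34\epsilon\right)$.

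For that, set $\tilde Y_k := Y_k' - \mathbb{E} Y_k'$. These are i.i.d., centered, and satisfy $\|\tilde Y_k\|_\infty\le 2T$ and $\|\tilde Y_k\|_p\le 2cp^2$, so their moment generating function is controlled by the argument proving Lemma~\ref{lem bdd iid mgf upper bound}: for $0\le\lambda\le C_2/\sqrt T$ that argument in fact delivers the intermediate bound $\mathbb{E}[{\rm e}^{\lambda\tilde Y_k}]\le 1 + C_1\lambda^2 + 2^{1-\sqrt{T/c}}$. I would keep it in this ``$1+(\text{small})$'' form rather than the ``$2\,{\rm e}^{C_1\lambda^2}$'' stated in the lemma, because on raising to the $m$-th power one then obtains $(\mathbb{E}[{\rm e}^{\lambda\tilde Y_k}])^m \le \exp\!\big(C_1 m\lambda^2 + m\cdot 2^{1-\sqrt{T/c}}\big)$, and in our regime $\sqrt{T/c}\asymp\sqrt{m\epsilon}\gg\ln m$, so $m\cdot 2^{1-\sqrt{T/c}}=o(1)$ and the product is at most $2\,{\rm e}^{C_1 m\lambda^2}$ --- a single harmless constant, not a fatal $2^m$. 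A Chernoff bound then gives, for every such $\lambda$,
\[ \mathbb{P}\!\left(\tfrac1m\textstyle\sum_k\tilde Y_k > \tfrac34\epsilon\right) \le 2\,{\rm e}^{-\frac34\lambda m\epsilon + C_1 m\lambda^2}. \]

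Next I would optimize by taking $\lambda = C_2/\sqrt T = C_2/\sqrt{c_0\,m\epsilon}$, which makes the linear term $-\Theta(\sqrt{m\epsilon})$ and the quadratic term $\Theta(1/\epsilon)$. The quadratic term is dominated by the linear one exactly when $\epsilon\sqrt{m\epsilon}\gtrsim 1$, i.e.\ when $\epsilon\ge C_4 m^{-1/3}$ for a suitable $C_4$ --- the decisive use of the hypothesis. One concludes $\mathbb{P}\!\left(\tfrac1m\sum_k\tilde Y_k > \tfrac34\epsilon\right)\le 2\,{\rm e}^{-C_3\sqrt{m\epsilon}}$, hence, adding the bound on $\mathbb{P}(\mathcal{A}^c)$, the one-sided estimate $\mathbb{P}\!\left(\tfrac1m\sum_k Y_k - \mathbb{E} Y_k > \epsilon\right)\le (2+c')\,{\rm e}^{-C_3\sqrt{m\epsilon}}$. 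Applying the same argument to $\{-Y_k\}$ (which satisfies the identical hypotheses) and taking a union bound over the two sides doubles this to the asserted $(4+2c')\,{\rm e}^{-C_3\sqrt{m\epsilon}}$; for $m$ below an absolute constant the bound exceeds $1$ and is vacuous.

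The hard part is the tension in the choice of $T$: it must be of order at least $m\epsilon$ so that the union-bound error ${\rm e}^{-\sqrt{T/c}}$ and the mean shift are smaller than the target ${\rm e}^{-C_3\sqrt{m\epsilon}}$, yet increasing $T$ shrinks the admissible Chernoff step $\lambda\le C_2/\sqrt T$ and thereby weakens the concentration of the truncated average; $T\asymp m\epsilon$ is the balance point, and the residual $\Theta(1/\epsilon)$ term it leaves in the exponent is exactly what forces the restriction $\epsilon > C_4 m^{-1/3}$. The other delicate ingredient is the observation that, although each factor $\mathbb{E}[{\rm e}^{\lambda\tilde Y_k}]$ only obeys the ``$2\,{\rm e}^{C_1\lambda^2}$'' bound of Lemma~\ref{lem bdd iid mgf upper bound}, the exponentially-small-in-$\sqrt T$ error in that estimate contributes only $o(1)$ after the product over $m$ terms, so a single constant factor survives rather than $2^m$.
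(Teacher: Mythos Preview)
Your approach is essentially the same as the paper's: truncate at level $\Theta(m\epsilon)$, handle the event that some $|Y_k|$ exceeds the threshold by a union bound on the tail hypothesis, control the mean shift, and apply a Chernoff bound to the centered truncated variables with $\lambda \asymp (m\epsilon)^{-1/2}$, so that the $\Theta(1/\epsilon)$ second-order term is absorbed precisely under $\epsilon \gtrsim m^{-1/3}$. The paper's write-up invokes Lemma~\ref{lem bdd iid mgf upper bound} as stated and passes directly from $\mathbb{E}[e^{(\lambda/m)Y_1^L}]^m$ to $2\,e^{C_1\lambda^2/m}$; you are right to flag that the stated form $\mathbb{E}[e^{\lambda X}]\le 2e^{C_1\lambda^2}$ would literally produce $2^m$, and your fix---retaining the intermediate estimate $1+O(\lambda^2)+2^{-\Theta(\sqrt{L})}$ from that lemma's proof (the $2^{-\Theta(\sqrt{L})}$ coming from the tail $\sum_{p>\sqrt{L}}(\lambda\sqrt{L}e)^p$ and the second half of the middle sum, while the first half is genuinely $O(\lambda^2)$)---is exactly what is needed to make the $m$-fold product yield only a bounded constant. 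So your plan is correct and, on this point, more carefully justified than the paper's own presentation.
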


\begin{proof}
Following the notion in the proof of the Lemma \ref{lem bdd iid mgf upper bound}, let $Y_k^L = Y_k \mathbf{1}_{Y_k \leq L}$.  Then,

\begin{align} \nonumber
\p{\frac{1}{m} \sum_{k = 1}^m Y_k - \e{Y_k} > \epsilon} & \leq
\p{\frac{1}{m} \sum_{k = 1}^m Y_k^L - \e{Y_k} > \epsilon} + \p{ \exists k, \; Y_k > L }
\\ \nonumber &
\leq
\expp{- \lambda \epsilon} \e{\expp{\frac{\lambda}{m} (Y_1^L - \e{Y_1})}}^m + m \p{Y_1 - \e{Y_1} > L - \e{Y_1}}
\\  & \leq \label{eq:concentration proof probability bound}
\expp{- \lambda \epsilon + \frac{c_2}{2} \frac{\lambda^2}{m^2} m} + m c' \expp{-c \sqrt{L - \e{Y_1}}},
\end{align}
for $\frac{\lambda}{m} = \frac{c}{2 \sqrt{L}}$.  Note that the last inequality is obtained by the Lemma \ref{lem bdd iid mgf upper bound}.  Set $L = m \epsilon$ and hence $\lambda =\frac{c \sqrt{m}}{2 \sqrt{\epsilon}} $, then by \eqref{eq:concentration proof probability bound} we have
\begin{equation} \label{eq:concentration proof probability bound 2}
\p{\frac{1}{m} \sum_{k = 1}^m Y_k - \e{Y_k} > \epsilon} \leq
\expp{- c \sqrt{m \epsilon} + \frac{c_2 c^2}{8} \frac{1}{\epsilon}} + m c' \expp{-c \intoo{\sqrt{m \epsilon} - \sqrt{\abs{\e{Y_1}}}}}.
\end{equation} 
Note that if $\epsilon \geq \intoo{\frac{c_2 c}{4}}^{\frac{2}{3}} m^{-\frac{1}{3}}$, we have $\frac{c_2 c^2}{8} \frac{1}{\epsilon} \leq \frac{c}{2} \sqrt{m \epsilon}$, hence, 
\begin{equation} \label{eq:concntration proof bound term 1}
\expp{- c \sqrt{m \epsilon} + \frac{c_2 c^2}{8} \frac{1}{\epsilon}} \leq \expp{- \frac{c}{2} \sqrt{m \epsilon}}.
\end{equation}
Furthermore,
\begin{align} \label{eq:concentration proof bound term 2}
m c' \expp{-c \intoo{\sqrt{m \epsilon} - \sqrt{\abs{\e{Y_1}}}}} = \expp{ \log c' + \log m + c \sqrt{\abs{\e{Y_1}}} - c \sqrt{m \epsilon}} \leq \expp{- \frac{c}{2} \sqrt{m \epsilon}},
\end{align}
whenever
\begin{equation} \label{eq:concentration proof bound for log term}
\log c' + \log m + c \sqrt{\abs{\e{Y_1}}} \leq \frac{c}{2} \sqrt{m \epsilon}.
\end{equation}
since $m \epsilon \geq C_3 m^{\frac{2}{3}}$ for $\epsilon \geq C_3 m^{- \frac{1}{3}}$ we have 
\begin{equation}
\frac{c}{2} \sqrt{m \epsilon} \geq \frac{c \sqrt{C_3}}{2} m^{\frac{1}{3}}.
\end{equation}

Given $m^{\frac{1}{3}}$ grows faster than $\log m$, by choosing large enough $C_3$ we can make \eqref{eq:concntration proof bound term 1} and \eqref{eq:concentration proof bound term 2} hold for all integer $m$, thus we obtain
\begin{equation} \label{eq:concentration proof right side bound}
\p{\frac{1}{m} \sum_{k = 1}^m Y_k - \e{Y_k} > \epsilon} \leq 2 \expp{- \frac{c}{2} \sqrt{m \epsilon}}.
\end{equation}

By repeating the exact same line of the proof for $- Y_k$ instead of $Y_k$ we can obtain
\begin{equation} \label{eq:concentration proof left side bound}
\p{\frac{1}{m} \sum_{k = 1}^m Y_k - \e{Y_k} < - \epsilon} \leq 2 \expp{- \frac{c}{2} \sqrt{m \epsilon}}.
\end{equation}

Combining \eqref{eq:concentration proof right side bound} and \eqref{eq:concentration proof left side bound} yields

\begin{equation}
\p{ \abs{\frac{1}{m} \sum_{k = 1}^m Y_k - \e{Y_k}} > \epsilon} \leq 4 \expp{- \frac{c}{2} \sqrt{m \epsilon}}.
\end{equation}

\end{proof}

\end{milad}

\subsection{Properties of $d_A(\cdot,\cdot)$} \label{Properties of d}

\begin{lem}\label{lem:lambdasVSvectors}
If $\lambda_1(\mvec{c})$ and $\lambda_2(\mvec{c})$ denote the two non-zero eigenvalues of $\mvec{x} \mvec{x}^* - \mvec{c} \mvec{c}^*$, then we have
\begin{enumerate}
\item $\lambda_1(\mvec{c}) + \lambda_2(\mvec{c}) = \enVert{\mvec{x}}^2 - \enVert{\mvec{c}}^2$. 
\item $\lambda_1(\mvec{c})^2 + \lambda_2(\mvec{c})^2 = \intoo{\enVert{\mvec{x}}^2 - \enVert{\mvec{c}}^2 }^2 + 2 \intoo{\enVert{\mvec{x}}^2 \enVert{\mvec{c}}^2 - \envert{\mvec{x}^*\mvec{c}}^2 }$.
\item $\lambda_1(\mvec{c}) \lambda_2(\mvec{c}) = \intoo{\envert{\mvec{x}^* \mvec{c}}^2 - \enVert{\mvec{x}}^2 \enVert{\mvec{c}}^2} \leq 0$
\end{enumerate}
\end{lem}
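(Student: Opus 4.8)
The plan is to work directly with the Hermitian matrix $M \triangleq \mathbf{x}\mathbf{x}^* - \mathbf{c}\mathbf{c}^*$. Since $M$ is the difference of two rank-one Hermitian matrices, it is Hermitian of rank at most $2$; in particular all its eigenvalues are real and at most two of them are non-zero, so labelling them $\lambda_1(\mathbf{c}),\lambda_2(\mathbf{c})$ (with the convention that a missing eigenvalue is set to $0$ when the rank drops below $2$) is well defined. The three claims will then fall out of the two trace identities $\sum_i \lambda_i = \tr(M)$ and $\sum_i \lambda_i^2 = \tr(M^2)$, in which the sums over all eigenvalues collapse to $\lambda_1 + \lambda_2$ and $\lambda_1^2 + \lambda_2^2$ precisely because the remaining eigenvalues vanish. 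Stating this rank bound and convention once at the outset is the only point that needs care; everything afterwards is a routine computation.

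For item 1, I would compute $\tr(M) = \tr(\mathbf{x}\mathbf{x}^*) - \tr(\mathbf{c}\mathbf{c}^*) = \mathbf{x}^*\mathbf{x} - \mathbf{c}^*\mathbf{c} = \|\mathbf{x}\|^2 - \|\mathbf{c}\|^2$ by cyclicity of the trace, which is exactly $\lambda_1(\mathbf{c}) + \lambda_2(\mathbf{c})$.

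For item 2, I would expand $M^2 = \|\mathbf{x}\|^2\,\mathbf{x}\mathbf{x}^* - (\mathbf{x}^*\mathbf{c})\,\mathbf{x}\mathbf{c}^* - (\mathbf{c}^*\mathbf{x})\,\mathbf{c}\mathbf{x}^* + \|\mathbf{c}\|^2\,\mathbf{c}\mathbf{c}^*$, using $\mathbf{x}^*\mathbf{x} = \|\mathbf{x}\|^2$, $\mathbf{c}^*\mathbf{c} = \|\mathbf{c}\|^2$, and take the trace term by term (again via cyclicity) to get $\tr(M^2) = \|\mathbf{x}\|^4 - 2\,|\mathbf{x}^*\mathbf{c}|^2 + \|\mathbf{c}\|^4$. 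It then remains to verify the elementary identity $\|\mathbf{x}\|^4 + \|\mathbf{c}\|^4 - 2\,|\mathbf{x}^*\mathbf{c}|^2 = \big(\|\mathbf{x}\|^2 - \|\mathbf{c}\|^2\big)^2 + 2\big(\|\mathbf{x}\|^2\|\mathbf{c}\|^2 - |\mathbf{x}^*\mathbf{c}|^2\big)$, which is immediate.

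For item 3, I would use $\lambda_1(\mathbf{c})\lambda_2(\mathbf{c}) = \tfrac12\big((\lambda_1(\mathbf{c})+\lambda_2(\mathbf{c}))^2 - (\lambda_1(\mathbf{c})^2+\lambda_2(\mathbf{c})^2)\big)$ and substitute items 1 and 2, giving $\lambda_1(\mathbf{c})\lambda_2(\mathbf{c}) = |\mathbf{x}^*\mathbf{c}|^2 - \|\mathbf{x}\|^2\|\mathbf{c}\|^2$; the non-positivity is then exactly the Cauchy--Schwarz inequality $|\mathbf{x}^*\mathbf{c}| \le \|\mathbf{x}\|\,\|\mathbf{c}\|$. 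There is no genuine obstacle in this lemma — the ``hard part,'' such as it is, is only the bookkeeping that guarantees $\tr(M)$ and $\tr(M^2)$ really equal the two-term sums $\lambda_1+\lambda_2$ and $\lambda_1^2+\lambda_2^2$, which the rank-$\le 2$ observation settles.
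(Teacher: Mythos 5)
Your proposal is correct and follows essentially the same route as the paper: both arguments compute $\tr(M)$ and $\tr(M^2)$ for $M=\mvec{x}\mvec{x}^*-\mvec{c}\mvec{c}^*$ via cyclicity of the trace, and obtain item 3 from $2\lambda_1\lambda_2=(\lambda_1+\lambda_2)^2-(\lambda_1^2+\lambda_2^2)$ together with Cauchy--Schwarz. Your explicit remark that the rank-$\le 2$ bound is what lets the full eigenvalue sums collapse to two terms is a small point the paper leaves implicit, but otherwise the two proofs coincide.
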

\begin{proof}
First note that 
\begin{align}
\label{sum lambdas}
\lambda_1(\mvec{c}) + \lambda_2(\mvec{c}) &= {\rm Tr} (\mvec{x}\mvec{ x}^* - \mvec{c}\mvec{ c}^*) = \enVert{\mvec{x}}^2 - \enVert{\mvec{c}}^2.
\end{align}
Similarly,
\begin{align}
\lambda_1(\mvec{c})^2 + \lambda_2(\mvec{c})^2  = &{\rm Tr} (\mvec{x} \mvec{x}^* -\mvec{ c}\mvec{ c}^*)^2 \nonumber
\\ \nonumber &
 = {\rm Tr} (\mvec{x}\mvec{ x}^* \mvec{x}\mvec{ x}^*) + {\rm Tr} ( \mvec{c}\mvec{ c}^* \mvec{c}\mvec{ c}^* ) 
- {\rm Tr} (\mvec{c}\mvec{c}^* \mvec{x}\mvec{ x}^*) - {\rm Tr} (\mvec{x}\mvec{x}^* \mvec{c}\mvec{ c}^*) 
\\ & \nonumber
= \enVert{\mvec{x}}^4 + \enVert{\mvec{c}}^4  - 2 \envert{\mvec{x}^* \mvec{c}}^2
\\  & \label{sum lambda squared} 
=\intoo{\enVert{\mvec{x}}^2 - \enVert{\mvec{c}}^2 }^2 + 2 \intoo{\enVert{\mvec{x}}^2 \enVert{\mvec{c}}^2 - \envert{x^*c}^2 }.
\end{align}
Finally,
\begin{align*}
2 \lambda_1(\mvec{c}) \lambda_2(\mvec{c}) &= (\lambda_1(\mvec{c}) + \lambda_2(\mvec{c}))^2 - (\lambda_1(\mvec{c})^2 + \lambda_2(\mvec{c})^2) 
\\ \nonumber &
= 2 \intoo{\envert{\mvec{x}^* \mvec{c}}^2 - \enVert{\mvec{x}}^2 \enVert{\mvec{c}}^2} 
\\ \nonumber &
\leq 0.
\end{align*}
\end{proof}

\begin{lem} \label{upper boumd for MGF} Let $Z= (\lambda_1(\mvec{c}) U + \lambda_2(\mvec{c})V )^2$, where $U$ and $V$ are independent $\chi^2(2)$. Then, for any $\alpha>0$, we have
\begin{align*}
f(\alpha) \triangleq 
 \e{{\rm e}^{- \alpha Z}} \leq \intoo{\frac{\pi}{\lambda_{\max}(\mvec{c})^2 \alpha}}^{\frac{1}{2}}.
\end{align*}
\end{lem}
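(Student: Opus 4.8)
The plan is to integrate out just one of the two independent variables and then bound the remaining one-dimensional integral by a complete Gaussian integral. First I would dispose of the degenerate case: if $\lambda_{\max}(\mvec{c}) = 0$ then, since $\lambda_{\max}(\mvec{c})$ is by definition the eigenvalue of largest absolute value, both $\lambda_1(\mvec{c})$ and $\lambda_2(\mvec{c})$ vanish, so $Z \equiv 0$, $f(\alpha) = 1$, while the right-hand side is $+\infty$; the inequality holds trivially. So assume $\lambda_{\max}(\mvec{c}) \neq 0$, and, after possibly swapping the roles of $U$ and $V$ (they are i.i.d.), assume $\envert{\lambda_1(\mvec{c})} \geq \envert{\lambda_2(\mvec{c})}$, so that $\lambda_{\max}(\mvec{c})^2 = \lambda_1(\mvec{c})^2$. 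Write $\lambda_1 \triangleq \lambda_1(\mvec{c})$ and $\lambda_2 \triangleq \lambda_2(\mvec{c})$ for brevity.

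Next I would condition on $V$ (equivalently, by Tonelli, carry out the $U$-integration first, as the integrand is nonnegative). The key elementary fact is that a $\chi^2(2)$ random variable has density $\tfrac12 {\rm e}^{-u/2} \mathbf{1}_{\{u \geq 0\}}$, which is bounded above by $\tfrac12$. Hence
\[
\e{ {\rm e}^{-\alpha Z} \mid V } = \int_0^\infty \tfrac12 {\rm e}^{-u/2} {\rm e}^{-\alpha(\lambda_1 u + \lambda_2 V)^2} \, du \leq \tfrac12 \int_0^\infty {\rm e}^{-\alpha(\lambda_1 u + \lambda_2 V)^2} \, du .
\]
The change of variables $t = \lambda_1 u + \lambda_2 V$ (valid since $\lambda_1 \neq 0$) rewrites the last integral as $\tfrac{1}{2\envert{\lambda_1}}$ times the integral of ${\rm e}^{-\alpha t^2}$ over a half-line, and since ${\rm e}^{-\alpha t^2} \geq 0$ this is at most
\[
\frac{1}{2\envert{\lambda_1}} \int_{-\infty}^{\infty} {\rm e}^{-\alpha t^2} \, dt = \frac{1}{2\envert{\lambda_1}} \sqrt{\frac{\pi}{\alpha}} = \frac12 \sqrt{\frac{\pi}{\lambda_{\max}(\mvec{c})^2 \alpha}} .
\]
Since this bound is independent of $V$, taking the expectation over $V$ gives $f(\alpha) \leq \tfrac12 \sqrt{\pi / (\lambda_{\max}(\mvec{c})^2 \alpha)} \leq \sqrt{\pi / (\lambda_{\max}(\mvec{c})^2 \alpha)}$, which is the claim (in fact with an extra factor of $\tfrac12$ to spare).

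There is no genuinely hard step here; the two points that require a little care are the reduction to the non-degenerate case and the recognition that integrating out a single variable already suffices. In particular one should resist representing ${\rm e}^{-\alpha Z}$ via a Gaussian (Fourier) identity and then bounding the resulting expectation by moving absolute values inside: that route forces one to estimate an expectation of the form $\e{ (1 + c_1 g^2)^{-1/2}(1 + c_2 g^2)^{-1/2} }$ for a standard normal $g$, which, when one of the $\lambda_i$ is tiny, produces a spurious logarithmic factor and fails to give the clean bound. Bounding the $\chi^2(2)$ density by its supremum $\tfrac12$ rather than using its exact form is what keeps the computation short, and the loose constant it introduces is harmless for the subsequent application.
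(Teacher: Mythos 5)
Your proof is correct, and in fact it proves a bound that is a factor of $2$ stronger than the one claimed. It is, however, a genuinely different argument from the paper's. The paper keeps the exact joint density of $(U,V)$, performs the two-dimensional change of variables $(u,v)=(\lambda_1 x+\lambda_2 y,\tfrac{x+y}{2})$, integrates out $v$ exactly, completes the square in $u$, and arrives at a closed-form expression for $f(\alpha)$ in terms of Gaussian tail probabilities; it then invokes a separate elementary lemma (that ${\rm e}^{1/u^2}\Phi(-\sqrt{2}/u)\leq 1$) and the inequality $\envert{\lambda_1}+\envert{\lambda_2}\geq\envert{\lambda_{\max}}$ to conclude. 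Your route discards the density of $U$ at the outset, bounding it by its supremum $\tfrac12$, and then only a one-dimensional Gaussian integral remains; this avoids the Jacobian computation, the region-of-integration bookkeeping, and the auxiliary lemma entirely. It is also slightly more robust: the paper's change of variables tacitly uses $\lambda_1>0>\lambda_2$ (it divides by $\lambda_2$ and uses $\lambda_1/(-2\lambda_2)>0$, which is justified in context by Lemma~\ref{lem:lambdasVSvectors} but degenerates when one eigenvalue vanishes), whereas your argument needs only $\lambda_{\max}(\mvec{c})\neq 0$, and you handle the fully degenerate case explicitly. What the paper's longer computation buys is an exact formula for the MGF before any bounding, which could in principle be reused for sharper estimates; for the purpose of this lemma your shorter argument suffices and gives a better constant.
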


\begin{proof}

\begin{align}
f(\alpha)  = \int_{x,y \geq 0} {\rm e}^{- \alpha \intoo{\lambda_1(\mvec{c}) x + \lambda_2(\mvec{c}) y}^2} \frac{{\rm e}^{-\frac{x}{2}}}{2} \frac{{\rm e}^{-\frac{y}{2}}}{2} dx dy.
\end{align}
Consider changing the variable $(x,y)$ in the above integral to $(u,v)$ defined as 
\[
(u,v) = \intoo{\lambda_1(\mvec{c}) x + \lambda_2(\mvec{c}) y,\frac{x + y}{2}}.
\]
The determinent of the Jacobian of this mapping is given by
\begin{align}
\envert{ \frac{\partial u,v}{\partial x,y} } 
& = \begin{vmatrix}
\lambda_1(\mvec{c}) & \lambda_2(\mvec{c})
\\ \nonumber
\frac{1}{2} & \frac{1}{2} 
\end{vmatrix}
= \frac{\lambda_1(\mvec{c}) - \lambda_2(\mvec{c})}{2}.
\end{align}
Furthermore,  
\[
v - \frac{u}{2 \lambda_2(\mvec{c})} = \intoo{\frac{1}{2} + { \frac{\lambda_1(\mvec{c})}{-2 \lambda_2(\mvec{c})}}} x.
\]
Since $\frac{\lambda_1(\mvec{c})}{-2 \lambda_2(\mvec{c})} > 0$, we have
\[
 x \geq 0 \iff v \geq \frac{u}{2 \lambda_2(\mvec{c})}.
\]
Similarly,
\[
 v \geq \frac{u}{2 \lambda_1(\mvec{c})}.
\]
Therefore, 
\begin{align}
\nonumber
f(\alpha) & = \frac{2}{4 (\lambda_1(\mvec{c}) - \lambda_2(\mvec{c}))}\int \int_{v \geq \frac{u}{2 \lambda_1(\mvec{c})}, v \geq \frac{u}{2 \lambda_2(\mvec{c})}} {\rm e}^{- \alpha u^2} {\rm e}^{-v} dv du
\\ \nonumber
& =
\frac{1}{2(\lambda_1(\mvec{c}) - \lambda_2(\mvec{c}))} \int_{u \geq 0} \int_{v = \frac{u}{2 \lambda_1(\mvec{c})}}^\infty {\rm e}^{- \alpha u^2} {\rm e}^{-v} dv du + \frac{1}{2(\lambda_1(\mvec{c}) - \lambda_2(\mvec{c}))} \int_{u < 0} \int_{v = \frac{u}{2 \lambda_2(\mvec{c})}}^\infty {\rm e}^{- \alpha u^2} {\rm e}^{-v} dv du 
\\ \nonumber
& =
 \frac{1}{2(\lambda_1(\mvec{c}) - \lambda_2(\mvec{c}))}  \int_{u = 0}^\infty  {\rm e}^{- \alpha u^2 - \frac{u}{2 \lambda_1(\mvec{c})}} du
 + \frac{1}{2(\lambda_1(\mvec{c}) - \lambda_2(\mvec{c}))} \int_{u = - \infty}^0 {\rm e}^{- \alpha u^2 - \frac{u}{2 \lambda_2(\mvec{c})}} du  
 \\ \nonumber
 & =
 \frac{{\rm e}^{\frac{1}{16 \lambda_1(\mvec{c})^2 \alpha}}}{2(\lambda_1(\mvec{c}) - \lambda_2(\mvec{c}))}   \int_{u = 0}^\infty  {\rm e}^{- \alpha (u + \frac{1}{4 \lambda_1 \alpha})^2} du 
+ \frac{{\rm e}^{\frac{1}{16 \lambda_2(\mvec{c})^2 \alpha}}}{2(\lambda_1(\mvec{c}) - \lambda_2(\mvec{c}))}  \int_{u = -\infty}^0  {\rm e}^{- \alpha (u + \frac{1}{4 \lambda_2(\mvec{c}) \alpha})^2} du
\\  \label{MGF of Z_1}
& = 
 \frac{\sqrt{\pi}}{2(\envert{\lambda_1(\mvec{c})} + \envert{\lambda_2(\mvec{c})}) \sqrt{\alpha}}  {\rm e}^{\frac{1}{16 \lambda_1(\mvec{c})^2 \alpha}} \Phi \intoo{-\frac{\sqrt{2}}{4 \envert{\lambda_1(\mvec{c})} \sqrt{\alpha}}} 
+  \frac{\sqrt{\pi}}{2(\envert{\lambda_1(\mvec{c})} + \envert{\lambda_2(\mvec{c})}) \sqrt{\alpha}} {\rm e}^{\frac{1}{16 \lambda_2(\mvec{c})^2 \alpha}} \Phi \intoo{- \frac{\sqrt{2}}{4 \envert{\lambda_2(\mvec{c})} \sqrt{\alpha}}},
\end{align}
where $
\Phi(x)={1\over \sqrt{2\pi}}\int_{x}^{\infty} {\rm e}^{-{1\over 2}u^2}.
$
According to Lemma \ref{e^u2 Phi(u) < 1} we have
\begin{align}\label{eq:upperboundmgflast}
{\rm e}^{\frac{1}{16 \lambda_1(\mvec{c})^2 \alpha}} \Phi \intoo{- \frac{\sqrt{2}}{4 \envert{\lambda_1(\mvec{c})} \sqrt{\alpha}}}  = g
\intoo{{4 \envert{\lambda_1(\mvec{c})} \sqrt{\alpha}}}  \leq 1. 
\end{align}
Hence, by combining \eqref{MGF of Z_1}, \eqref{eq:upperboundmgflast}, and the fact that $\frac{1}{\envert{\lambda_1(\mvec{c})} + \envert{\lambda_2(\mvec{c})} } \leq \frac{1}{\envert{\lambda_{\max}(\mvec{c})}}$ we can complete the proof. 
\end{proof}

\begin{r2}
Theorem below is showing how the distance function $d_A$ concentrates when we have sufficient measurements.  
\end{r2}

\begin{thm}[Concentration of $d_A(\cdot,\cdot)$] \label{thm:concentration of d}
Let $\mathcal{C}_r$ denote the set of codewords at rate $r$, and $\mvec{x}$ denotes the signal of interest.  For a given $\mvec{c} \in \mathds{C}^n$, let $\lambda^2_{\min} ( \mvec{c} ) \leq \lambda^2_{\max} (\mvec{c})$ be squared of the two non-zero eigenvalues of $\mvec{x}\mvec{ x}^* - \mvec{c}\mvec{ c}^*$.  For any positive real numbers $\tau_1,\tau_2$,
\begin{equation} \label{eq 1:thm:concentration}
\p{d_A(\envert{A \mvec{x}} , \envert{A \mvec{c}}) > \lambda_{\max}^2(\mvec{c}) \tau_1,\; \forall \mvec{c} \in C_r}
\geq 1 -  2^r {\rm e}^{\frac{m}{2} \intoo{K + \ln \tau_1 - \ln m} },
\end{equation}
where $  K = \ln 2 \pi {\rm e}$ and
\begin{align} \label{eq 2:thm:concentration}
\p{d_A( \envert{A \mvec{x} },\envert{A \mvec{c}} ) < \lambda_{\max}^2(\mvec{c}) \intoo{4m (1 + \tau_2)}^2}
 \geq 1 - {\rm e}^{-2m \intoo{\tau_2 - \ln (1 + \tau_2)}}.
\end{align}
\end{thm}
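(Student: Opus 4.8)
The plan is to reduce $d_A(\envert{A\mvec{x}},\envert{A\mvec{c}})$ to a sum of i.i.d.\ random variables and then apply one Chernoff-type estimate for each of the two inequalities, using the already-established Lemmas~\ref{upper boumd for MGF} and \ref{chi squared Upper bound}. First I would fix $\mvec{c}\in\mathds{C}^n$ and diagonalize the Hermitian, rank-at-most-two matrix $M\triangleq\mvec{x}\mvec{x}^*-\mvec{c}\mvec{c}^*$ as $M=\lambda_1(\mvec{c})\mvec{u}_1\mvec{u}_1^*+\lambda_2(\mvec{c})\mvec{u}_2\mvec{u}_2^*$ with orthonormal eigenvectors $\mvec{u}_1\perp\mvec{u}_2$, so that $\mvec{a}_k^* M\mvec{a}_k=\lambda_1(\mvec{c})\envert{\mvec{a}_k^*\mvec{u}_1}^2+\lambda_2(\mvec{c})\envert{\mvec{a}_k^*\mvec{u}_2}^2$. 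Since the entries of $\mvec{a}_k$ are i.i.d.\ $\mathcal{N}(0,1)+i\mathcal{N}(0,1)$ and $\mvec{u}_1\perp\mvec{u}_2$, a short computation of the Hermitian covariance and the pseudo-covariance shows that $U_k\triangleq\envert{\mvec{a}_k^*\mvec{u}_1}^2$ and $V_k\triangleq\envert{\mvec{a}_k^*\mvec{u}_2}^2$ are independent $\chi^2(2)$ random variables, and independent across $k$. Hence
\[
d_A(\envert{A\mvec{x}},\envert{A\mvec{c}})=\sum_{k=1}^m\intoo{\mvec{a}_k^* M\mvec{a}_k}^2=\sum_{k=1}^m Z_k ,
\]
where $Z_1,\dots,Z_m$ are i.i.d.\ copies of $Z=(\lambda_1(\mvec{c})U+\lambda_2(\mvec{c})V)^2$ with $U,V$ independent $\chi^2(2)$ — exactly the variable treated in Lemma~\ref{upper boumd for MGF}. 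If $M$ has rank $<2$ the missing $\lambda_i$ is $0$ and the argument is unchanged; the degenerate case $M=0$ makes both claims vacuous and can be set aside.

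For the lower bound \eqref{eq 1:thm:concentration} I would apply the Chernoff bound to $-\sum_k Z_k$: for every $\alpha>0$,
\[
\p{\sum_{k=1}^m Z_k\le\lambda_{\max}^2(\mvec{c})\tau_1}\le {\rm e}^{\alpha\lambda_{\max}^2(\mvec{c})\tau_1}\intoo{\e{{\rm e}^{-\alpha Z}}}^m\le {\rm e}^{\alpha\lambda_{\max}^2(\mvec{c})\tau_1}\intoo{\frac{\pi}{\lambda_{\max}^2(\mvec{c})\alpha}}^{m/2},
\]
the last step being Lemma~\ref{upper boumd for MGF}. Substituting $\beta=\lambda_{\max}^2(\mvec{c})\alpha$ turns the right-hand side into ${\rm e}^{\beta\tau_1}(\pi/\beta)^{m/2}$, which is minimized at $\beta=m/(2\tau_1)$ and there equals ${\rm e}^{\frac m2(\ln(2\pi e)+\ln\tau_1-\ln m)}$; since $K=\ln 2\pi e$, this gives the per-codeword estimate $\p{d_A(\envert{A\mvec{x}},\envert{A\mvec{c}})\le\lambda_{\max}^2(\mvec{c})\tau_1}\le {\rm e}^{\frac m2(K+\ln\tau_1-\ln m)}$, and a union bound over the at most $2^r$ elements of $\mathcal{C}_r$ completes \eqref{eq 1:thm:concentration}.

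For the upper bound \eqref{eq 2:thm:concentration} I would work with a single $\mvec{c}$ and bound crudely. Because $\envert{\lambda_1(\mvec{c})},\envert{\lambda_2(\mvec{c})}\le\envert{\lambda_{\max}(\mvec{c})}$ and $U_k,V_k\ge 0$, we get $Z_k\le\lambda_{\max}^2(\mvec{c})(U_k+V_k)^2$, and nonnegativity of the summands gives $\sum_k(U_k+V_k)^2\le\intoo{\sum_k(U_k+V_k)}^2$. The variable $W\triangleq\sum_{k=1}^m(U_k+V_k)$ is a sum of $2m$ independent $\chi^2(2)$ variables, hence $W\sim\chi^2(4m)$, so Lemma~\ref{chi squared Upper bound} applied with parameter $4m$ yields $\p{W>4m(1+\tau_2)}\le {\rm e}^{-2m(\tau_2-\ln(1+\tau_2))}$. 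On the complementary event $d_A(\envert{A\mvec{x}},\envert{A\mvec{c}})=\sum_k Z_k\le\lambda_{\max}^2(\mvec{c})\intoo{4m(1+\tau_2)}^2$, which is \eqref{eq 2:thm:concentration}.

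The calculations above are routine once the reduction is in place, so the step needing genuine care is that distributional reduction: checking that rotating $\mvec{a}_k$ into the eigenbasis of $M$ really produces two independent standard complex Gaussian coordinates (i.e.\ that both the covariance and the pseudo-covariance of $(\mvec{a}_k^*\mvec{u}_1,\mvec{a}_k^*\mvec{u}_2)$ vanish), correctly identifying which of $\lambda_1,\lambda_2$ is $\lambda_{\max}$, and covering the rank-deficient cases. Everything else is one Chernoff inequality per tail plus the two cited lemmas; the only mild loose end is upgrading the ``$\le$'' produced by the crude bound to the strict ``$<$'' in \eqref{eq 2:thm:concentration}, which follows from continuity of the chi-squared law.
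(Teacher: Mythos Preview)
Your proposal is correct and follows essentially the same route as the paper: the same unitary-rotation reduction of $d_A$ to $\sum_k Z_k$ with $Z_k\overset{d}{=}(\lambda_1 U+\lambda_2 V)^2$, the same Chernoff bound via Lemma~\ref{upper boumd for MGF} (optimized at $\alpha\lambda_{\max}^2=m/(2\tau_1)$) plus a union bound for \eqref{eq 1:thm:concentration}, and the same crude domination $Z_k\le\lambda_{\max}^2(U_k+V_k)^2$ followed by $\sum_k(U_k+V_k)\sim\chi^2(4m)$ and Lemma~\ref{chi squared Upper bound} for \eqref{eq 2:thm:concentration}. Your write-up is in fact slightly more careful than the paper's about the complex-Gaussian independence check and the rank-deficient cases.
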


\begin{proof}
Recall from (\ref{def d_A(Ax,Ac)}) that 
\begin{align}
d_A(\envert{A \mvec{x}}, \envert{A \mvec{c}}) = \sum_{k = 1}^m \intoo{{\mvec{a}_k}^* (\mvec{x} \mvec{x}^* - \mvec{c} \mvec{c}^*){\mvec{a}_k}}^2.\label{eq:d-Ax-Ac}
\end{align}
First, for fixed $ \mvec{x}$ and $ \mvec{c}$, we derive the distribution and the moment-generating function (mgf) of $d_A(\envert{A \mvec{x}}, \envert{A \mvec{c}})$.
Note that $\mvec{x} \mvec{x}^* - \mvec{c} \mvec{c}^*$ is a Hermitian matrix of rank at most two, and therefore  it can be written as
\begin{equation} \label{eq:xx*-decomposition}
\mvec{x} \mvec{x}^* - \mvec{c} \mvec{c}^* = Q^T 
\begin{pmatrix}
\lambda_1(\mvec{c})
\\
& \lambda_2(\mvec{c})
\\ 
& & \ddots
\\
& & & 0
\end{pmatrix} 
\bar{Q},
\end{equation}
where $Q^T \bar{Q} = I_n$.  Combining \eqref{eq:d-Ax-Ac} and \eqref{eq:xx*-decomposition}, we have
\begin{align*}
\sum_{k = 1}^m \intoo{{\mvec{a}_k}^* (\mvec{x} \mvec{x}^* - \mvec{c} \mvec{c}^*){\mvec{a}_k}}^2 
& =  \sum_{k = 1}^m \intoo{{\mvec{a}_k}^*  Q^T 
\begin{pmatrix}
\lambda_1(\mvec{c})
\\
& \lambda_2(\mvec{c})
\\ 
& & \ddots
\\
& & & 0
\end{pmatrix} 
\bar{Q} {\mvec{a}_k}}^2
\\ \nonumber &
= \sum_{k = 1}^m \intoo{\mvec{B}_k^*
\begin{pmatrix}
\lambda_1(\mvec{c})
\\
& \lambda_2(\mvec{c})
\\ 
& & \ddots
\\
& & & 0
\end{pmatrix} 
\mvec{B}_k}^2  
\\
& =  \sum_{k = 1}^m \intoo{ \lambda_1(\mvec{c}) \envert{B_{k,1}}^2 + \lambda_2(\mvec{c}) \envert{B_{k,2}}^2 }^2,
\end{align*}
where $\mvec{B}_k = \bar{Q} {\mvec{a}_k}$.
Since $\bar{Q}$ is an orthonormal matrix,   $B = \bar{Q} \bar{A}$ has the same distribution as $A$, and therefore the $\chi^2$ variables in the above sum are all independent. Let $Z_k = \intoo{ \lambda_1(\mvec{c}) \envert{B_{k,1}}^2 + \lambda_2(\mvec{c}) \envert{B_{k,2}}^2 }^2$. Then we have
\begin{gather} \label{Zk dist}
d_A( \envert{A\mvec{x}},\envert{A \mvec{c}} ) = \sum_{i = 1}^m Z_i,
\end{gather}
where  $Z_1,\ldots,Z_m$ are i.i.d.~as $(\lambda_1(\mvec{c}) U + \lambda_2(\mvec{c})V )^2$, where $U$ and $V$ are independent $\chi^2(2)$ random variables. 
Define $\mathbf{\lambda}_{\min}(\mvec{c}), \lambda_{\max}(\mvec{c})$ to denote $\lambda_1(\mvec{c}),\lambda_2(\mvec{c})$ with smaller and larger absolute value respectively, i.e.,
$$ \envert{{\lambda}_{\min}(\mvec{c})} = \min \cbr{\envert{\lambda_1(\mvec{c})},\envert{\lambda_2(\mvec{c})}}, \quad \envert{{\lambda}_{\max}(\mvec{c})} = \max \cbr{\envert{\lambda_1(\mvec{c})},\envert{\lambda_2(\mvec{c})}}. $$ 
To derive \eqref{eq 1:thm:concentration}, note that according to Lemma \ref{upper boumd for MGF} for any $\alpha>0$, we have
\begin{align*}
\p{d_A(\envert{A \mvec{x}} , \envert{A \mvec{c}}) \leq t}  & =
\p{{\rm e}^{-\alpha \sum\limits_{i = 1}^m  Z_i} \geq {\rm e}^{ - \alpha t}}
\\ \nonumber &
\leq {\rm e}^{\alpha t} \e{{\rm e}^{- \alpha Z_1}}^m
\\ \nonumber &
 \leq {\rm e}^{\alpha t} f(\alpha)^m 
 \\ \nonumber &
\leq {\rm e}^{\alpha t} \intoo{\frac{\pi}{\lambda_{\max}(c)^2 \alpha}}^{\frac{m}{2}},
\end{align*}
where $\alpha > 0$ is a free parameter.  Let $\alpha = \frac{m}{2 \lambda_{\max}^2(\mvec{c}) \tau_1}$ and $t = \lambda_{\max}^2(\mvec{c}) \tau_1$. Therefore, 
\begin{align*} 
\p{d_A(\envert{A \mvec{x}} , \envert{A \mvec{c}}) \leq \lambda_{\max}^2 (\mvec{c}) \tau_1} & \leq {\rm e}^{\frac{m}{2}} \intoo{\frac{2 \pi \tau_1}{m}}^{\frac{m}{2}}
\\ \nonumber &
\leq {\rm e}^{\frac{m}{2} \intoo{K + \ln \tau_1 - \ln m}},
\end{align*}
where $K = \ln 2 \pi e $. Hence, we have
\begin{align*} 
\p{d_A(\envert{A \mvec{x}} , \envert{A \mvec{c}}) > \lambda_{\max}^2(\mvec{c}) \tau_1} \geq 1 -  {\rm e}^{\frac{m}{2} \intoo{K + \ln \tau_1 - \ln m} },
\end{align*}
and with an union bound on $C_r$ we get
\begin{align}
\nonumber 
\p{d_A(\envert{A \mvec{x}} , \envert{A \mvec{c}}) > \lambda_{\max}^2(\mvec{c}) \tau_1 \quad \forall \mvec{c} \in C_r}
\geq 1 -  2^r {\rm e}^{\frac{m}{2} \intoo{K + \ln \tau_1 - \ln m}}.
\end{align}
To prove \eqref{eq 2:thm:concentration}, note that for $Z_i$ defined in  (\ref{Zk dist}), one has $Z_i \leq \intoo{\envert{\lambda_{\max}(\mvec{c})} \chi^2(4)}^2$, thus
\begin{align*}
\nonumber 
 \sum_{i = 1}^m Z_i & \leq \lambda_{\max}^2 \sum_{i = 1}^m \chi^4(4)
 \leq \lambda_{\max}^2 \intoo{\sum_{i= 1}^m \chi^2(4)}^2 
 \overset{d}{=} \lambda_{\max}^2 \intoo{\chi^2(4m)}^2,
\end{align*}
where the notation $ \overset{d}{=}$ implies that they have the same distributions. Therefore, by Lemma \ref{chi squared Upper bound} we have
\begin{align*} 
\p{d_A( \envert{A \mvec{x}},\envert{A \mvec{c}} )  \geq \lambda_{\max}^2(\mvec{c}) \intoo{4m (1 + \tau_2)}^2}
 & = \p{\sum_{i = 1}^m Z_i \geq \lambda_{\max}^2 \intoo{4m (1 + \tau_2)}^2}
\\ &
\leq \p{\chi^2(4m) \geq 4m(1 + \tau_2)}
\\ \nonumber &
\leq
 {\rm e}^{-2m \intoo{\tau_2 - \ln (1 + \tau_2)}}.
\end{align*}
Hence, for any $  \tau_2 > 0$, we have
\begin{align*} 
\p{d_A( \envert{A \mvec{x} },\envert{A \mvec{c}} ) < \lambda_{\max}^2(\mvec{c}) \intoo{4m (1 + \tau_2)}^2} 
 \geq 1 - {\rm e}^{-2m \intoo{\tau_2 - \ln (1 + \tau_2)}}.
\end{align*}
\end{proof}

\begin{rem}[Expectation of $d_A(.,.)$]
Note that \eqref{Zk dist} implies

\begin{equation}  \label{eq: d expectation}
\e{d \intoo{\envert{A \mvec{x}}, \envert{A \mvec{c}}} } = 8 m  \intoo{\lambda_1(\mvec{c})^2 + \lambda_2(\mvec{c})^2 + \lambda_1(\mvec{c}) \lambda_2(\mvec{c})}
\end{equation}

\begin{proof}
By \eqref{Zk dist} we obtain
\begin{align*}
\e{d \intoo{\envert{A \mvec{x}}, \envert{A \mvec{c}}} }  &= m \e{Z_1} 
\\ &
= m \intoo{ \lambda_1(\mvec{c})^2  \e{U^2}   + \lambda_2(\mvec{c})^2 \e{V^2} + 2 \lambda_1(\mvec{c}) \lambda_2(\mvec{c})\e{U V} } 
\\ &
= m \intoo{ 8 \lambda_1(\mvec{c})^2     + 8 \lambda_2(\mvec{c})^2 + 2 \times 4 \lambda_1(\mvec{c}) \lambda_2(\mvec{c})} 
\\ &
=  8 m  \intoo{\lambda_1(\mvec{c})^2 + \lambda_2(\mvec{c})^2 + \lambda_1(\mvec{c}) \lambda_2(\mvec{c})}
\end{align*}
\end{proof}
\end{rem}

\subsection{Concentration of the gradient}\label{ssec:concgradient}

\begin{lem} \label{lem concentration of gradient}
Let $\mvec{v} \in \mathds{C}^{n}$ with $\enVert{\mvec{v}} = 1$ and $ \mvec{z} \in \C_r $ be fixed.  Then there exist constants $C_1, C_2, C_3 > 0$ such that,
\begin{equation}
\p{ \envert{ \Re \intoo{  \mvec{v}^* \intoo{ \nabla d_A(\mvec{z}) - \e{\nabla d_A(\mvec{z})}} } } > m \epsilon \inf_{\theta \in \mathds{R}} \enVert{{\rm e}^{i \theta} \mvec{x} - \mvec{z} } } \leq  C_2 {\rm e}^{ - C_1 \sqrt{m \epsilon}},
\qquad \forall \; \epsilon \geq C_3 m^{-\frac{1}{3}}.
\end{equation}
\end{lem}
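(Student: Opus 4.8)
The plan is to expand $\Re(\mvec{v}^*(\nabla d_A(\mvec{z})-\e{\nabla d_A(\mvec{z})}))$ as a sum of $m$ i.i.d.\ centered terms and then apply the heavy-tail concentration inequality of Lemma \ref{lem heavy tail concentration real}. First I would recall that
\[
\nabla d_A(\mvec{z}) = 2\sum_{k=1}^m\intoo{\envert{\mvec{a}_k^*\mvec{z}}^2-\envert{\mvec{a}_k^*\mvec{x}}^2}\mvec{a}_k\mvec{a}_k^*\mvec{z},
\]
so that, setting $Y_k \triangleq 2\Re\intoo{\mvec{v}^*\mvec{a}_k\mvec{a}_k^*\mvec{z}\intoo{\envert{\mvec{a}_k^*\mvec{z}}^2-\envert{\mvec{a}_k^*\mvec{x}}^2}}$, the quantity of interest equals $\sum_{k=1}^m (Y_k - \e{Y_k})$. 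The $Y_k$ are i.i.d., and they are genuinely heavy-tailed: each involves a product of three (correlated) Gaussian-quadratic factors $\mvec{v}^*\mvec{a}_k$, $\mvec{a}_k^*\mvec{z}$, and $\envert{\mvec{a}_k^*\mvec{z}}^2-\envert{\mvec{a}_k^*\mvec{x}}^2$, so its tail decays only like a stretched exponential $\exp(-\sqrt{\cdot})$ — exactly the regime Lemma \ref{lem heavy tail concentration real} is built for.

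The two hypotheses of Lemma \ref{lem heavy tail concentration real} must be verified for $Y_k$ (after rescaling by a constant $c$ depending only on absolute constants): (i) the moment growth $\enVert{Y_k}_p \leq c\,p^2$ for all $p\in\mathds{N}$, and (ii) the tail bound $\p{\envert{Y_k}>\tau}\leq c'\,{\rm e}^{-\sqrt{\tau/c}}$. For (i) I would use Hölder's inequality to split $\enVert{Y_k}_p$ into a product of $L^{3p}$ norms of the three Gaussian-quadratic factors; each such factor is a (centered or uncentered) sub-exponential / $\chi^2$-type variable whose $L^q$ norm grows linearly in $q$, giving an overall growth $\lesssim p^{3/2}\cdot$const, which is dominated by $p^2$. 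This is where I also want the factor $\inf_\theta\enVert{{\rm e}^{i\theta}\mvec{x}-\mvec{z}}$ to enter: the difference $\envert{\mvec{a}_k^*\mvec{z}}^2-\envert{\mvec{a}_k^*\mvec{x}}^2 = \mvec{a}_k^*(\mvec{z}\mvec{z}^*-\mvec{x}\mvec{x}^*)\mvec{a}_k$ is governed by the eigenvalues $\lambda_1,\lambda_2$ of $\mvec{z}\mvec{z}^*-\mvec{x}\mvec{x}^*$, and by Lemma \ref{lem:lambdasVSvectors} together with the normalization $\enVert{\mvec{x}}=\enVert{\mvec{z}}=1$ in \eqref{asump normalized Q and codeword}, one has $\lambda_1^2+\lambda_2^2 \leq 2\intoo{\enVert{\mvec{x}}^2\enVert{\mvec{z}}^2-\envert{\mvec{x}^*\mvec{z}}^2} = 2\intoo{1-\envert{\mvec{x}^*\mvec{z}}^2}$, and by Lemma \ref{correct phase_a} this is comparable to $\intoo{\inf_\theta\enVert{{\rm e}^{i\theta}\mvec{x}-\mvec{z}}}^2$ up to an absolute constant. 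Hence the "scale" of $Y_k$ is proportional to $\inf_\theta\enVert{{\rm e}^{i\theta}\mvec{x}-\mvec{z}}$, which is exactly the normalization appearing in the statement. For (ii) the stretched-exponential tail follows from the same decomposition: a product of three sub-exponential variables has a tail of order $\exp(-c\,\tau^{1/3})$ in general, but — crucially — after pulling out the deterministic scale factor and using that one factor is merely $O(1)$ (the $\mvec{v}^*\mvec{a}_k$ piece combined with $\mvec{a}_k^*\mvec{z}$ is a product of two sub-Gaussians, i.e.\ sub-exponential), the remaining structure yields the $\exp(-\sqrt{\cdot})$ bound required; I would make this precise by a truncation/union argument on the individual factors.

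Once (i) and (ii) hold, Lemma \ref{lem heavy tail concentration real} applied with the rescaled variables $Y_k/\intoo{c\inf_\theta\enVert{{\rm e}^{i\theta}\mvec{x}-\mvec{z}}}$ and error level $\epsilon$ gives, for every $\epsilon\in(C_4 m^{-1/3},1)$,
\[
\p{\envert{\frac{1}{m}\sum_{k=1}^m (Y_k-\e{Y_k})} > \epsilon\, c\inf_\theta\enVert{{\rm e}^{i\theta}\mvec{x}-\mvec{z}}} \leq (4+2c')\,{\rm e}^{-C_3\sqrt{m\epsilon}},
\]
and absorbing the constant $c$ into a redefinition of $\epsilon$ (and $C_2$) and setting $C_1 = C_3$, $C_3^{\text{stmt}} = 4+2c'$ yields the claimed bound $\p{\envert{\Re(\mvec{v}^*(\nabla d_A(\mvec{z})-\e{\nabla d_A(\mvec{z})}))}>m\epsilon\inf_\theta\enVert{{\rm e}^{i\theta}\mvec{x}-\mvec{z}}}\leq C_3{\rm e}^{-C_1\sqrt{m\epsilon}}$ for $\epsilon\geq C_2 m^{-1/3}$. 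The main obstacle I anticipate is the careful bookkeeping in step (i)–(ii): one must extract the factor $\inf_\theta\enVert{{\rm e}^{i\theta}\mvec{x}-\mvec{z}}$ cleanly so that the residual random part has absolute-constant moment and tail parameters (this is the only way the final bound depends on $\mvec{z}$ only through that one scalar), and one must confirm that the product of the three factors does not worsen the moment growth beyond $p^2$ nor the tail beyond $\exp(-\sqrt{\cdot})$ — in particular checking that the "$O(1)$" factor $\mvec{v}^*\mvec{a}_k\cdot\mvec{a}_k^*\mvec{z}$ really is sub-exponential rather than merely sub-$\chi$-square, which is what keeps the combined tail at the $\sqrt{\cdot}$ rate rather than the $\tau^{1/3}$ rate.
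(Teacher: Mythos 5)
Your proposal follows essentially the same route as the paper: decompose the quantity into i.i.d.\ heavy-tailed summands $Y_k$, verify the moment-growth and stretched-exponential tail hypotheses of Lemma \ref{lem heavy tail concentration real}, and extract the scale $\inf_\theta\enVert{{\rm e}^{i\theta}\mvec{x}-\mvec{z}}$ through the eigenvalues of $\mvec{x}\mvec{x}^*-\mvec{z}\mvec{z}^*$ via Lemma \ref{lem:lambdasVSvectors} and the normalization \eqref{asump normalized Q and codeword}; the paper merely diagonalizes $\mvec{x}\mvec{x}^*-\mvec{z}\mvec{z}^*$ first, so that $Y_k$ becomes a sum of sixteen products of four real Gaussians, from which both conditions follow at once. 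One correction to your bookkeeping: in the H\"older step the two linear factors have $L^q$ norm growing like $\sqrt{q}$ while the quadratic-form factor grows like $q$, so the product gives $\enVert{Y_k}_p \lesssim \sqrt{p}\cdot\sqrt{p}\cdot p = p^2$ — neither the $p^{3/2}$ you state nor the $p^3$ that ``linear growth in each factor'' would imply (the latter would violate the hypothesis of Lemma \ref{lem heavy tail concentration real}) — and $p^2$ is exactly the threshold that lemma requires, so the argument goes through.
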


\begin{proof}
In this proof, we will use the notations we introduced in \eqref{eq:xx*-decomposition} in the proof of Theorem \ref{thm:concentration of d}. Since we have assumed that  for any codeword $\mvec{c}$, $\|\mvec{x}\|_2 = \|\mvec{c} \|_2=1$, according to Lemma \ref{lem:lambdasVSvectors}, $\lambda_1 (\mvec{c}) + \lambda_2(\mvec{c}) =0$. Hence,
\begin{align} \nonumber
\nabla d_A(\mvec{z}) 
& = 2 \sum_{k = 1}^m \intoo{ \envert{\mvec{a}_k^* \mvec{z}}^2 - \envert{ \mvec{a}_k^* \mvec{x} } ^2 } \mvec{a}_k \mvec{a}_k^* \mvec{z},
\\ & = \nonumber
2 \sum_{k = 1}^m \mvec{a}_k \mvec{a}_k^* \mvec{z} \intoo{\bar{Q} \mvec{{\mvec{a}_k}}}^* \begin{pmatrix}
- \lambda_1(\mvec{z})
\\
&  \lambda_1 (\mvec{z})
\\
& & \ddots
\\
& & & 0
\end{pmatrix}
\bar{Q} \mvec{a}_k
\\ & = \label{eq gradient in term of U_k }
2 \lambda_1 (\mvec{z}) \sum_{k = 1}^m \mvec{a}_k \mvec{a}_k^* \mvec{z} U_k,
\end{align}
where $\lambda_i(\mvec{z}), Q$ are as defined in \eqref{eq:xx*-decomposition}, and
\begin{equation} \label{def U_k}
U_k \triangleq  \intoo{ \envert{ \intoo{\bar{Q} \mvec{a}_k}_2 }^2 - \envert{ \intoo{\bar{Q }\mvec{a}_k}_1 }^2}.
\end{equation}
It is straightforward to check that
\begin{align}
\e{ \nabla d_A(\mvec{z}) } 
= 8m ( \mvec{z} \mvec{z}^* - \mvec{x} \mvec{x}^* ) z. 
\end{align}
We also have
$$ \lambda_1(\mvec{z}) = - \lambda_2 (\mvec{z}) = \lambda_{\max} (\mvec{z}). $$
By \eqref{eq gradient in term of U_k } we have,
\begin{align} \nonumber
\Re \intoo{ \mvec{v}^* \intoo{\nabla d_A(\mvec{z}) - \e{\nabla d_A(\mvec{z})}}}
& =  2 \lambda_{\max}(\mvec{z}) \sum_{k = 1}^m \Re \intoo{ ( \mvec{v}^* \mvec{a}_k) (\mvec{a}_k^* \mvec{z}) U_k - \e{( \mvec{v}^* \mvec{a}_k) (\mvec{a}_k^* \mvec{z}) U_k}}
\\ & = \nonumber
2 \lambda_{\max}(\mvec{z}) \sum_{k = 1}^m \Re \intoo{ ( \mvec{v}^* \mvec{a}_k) (\mvec{a}_k^* \mvec{z}) U_k} - \e{\Re \intoo{ ( \mvec{v}^* \mvec{a}_k) (\mvec{a}_k^* \mvec{z}) U_k}}
\\ & = \label{lem gradient concetration proof 1}
2 \lambda_{\max} (\mvec{z}) \sum_{k = 1}^m Y_k - \e{Y_k},
\end{align}
where $ Y_k =  \Re \intoo{ ( \mvec{v}^* \mvec{a}_k) (\mvec{a}_k^* \mvec{z}) U_k} $.  We claim $Y_k$ satisfies all assumptions of Lemma \ref{lem heavy tail concentration real}.  To prove this note that since $ \enVert{\mvec{v}} = \enVert{\mvec{z}} = 1 $, all $ \mvec{v}^* \mvec{a}_k, \; \mvec{a}_k^* \mvec{z}, \; (\bar{Q} \mvec{a}_k)_ 1, \; (\bar{Q} \mvec{a}_k)_ 2$ have the same distribution as $\N(0, 1) + i \N(0, 1)$.  Therefore, $Y_k$ can be written as
\begin{equation} \label{Y_k as product of 4 Guassian}
Y_k = \sum_{j = 1}^{16} W_{1, j, k} W_{2, j, k} W_{3, j, k} W_{4, j, k}, 
\quad W_{l, j, k} \sim \N(0, 1) \quad 1 \leq l \leq 4, \; 1 \leq j \leq 16, \; 1 \leq k \leq m.
\end{equation}

\begin{milad}
We should emphasize that $W_{1, j, k}, W_{2, j, k}, W_{3, j, k}, W_{4, j, k}$ may be dependent on each other but are independent of $W_{1, j, k'}, W_{2, j, k'}, W_{3, j, k'}, W_{4, j, k'}$, if $k \neq k'$. 
Hence, we have

\begin{align} \nonumber
\p{\envert{Y_k} > \tau}
& \leq \nonumber
\p{\exists j \in \cbr{1, \ldots, 16};  \quad \envert{W_{1, j, k} W_{2, j, k} W_{3, j, k} W_{4, j, k}} > \frac{\tau}{16} }
\\ & \leq \nonumber
\p{ \exists j \in \cbr{1,\ldots, 16},  l \in \cbr{1, \ldots, 4}; \quad  \envert{W_{l, j, k}} > \sqrt[4]{\frac{\tau}{16}}}
\\ & \leq \nonumber
16 \times 4 \times {\rm e}^{ - \frac{1}{c^2} \sqrt{\frac{\tau}{16}} }
\\ & \leq  \label{lem gradient concentration proof 3}
64 {\rm e}^{- c' \sqrt{{\tau}}}.
\end{align}

To have \eqref{lem gradient concentration proof 3}, one may choose $ c ' = \frac{1}{4 c^2}$,
where $c$ is a constant for which
$ \p{\envert{\N(0, 1)} > \tau} \leq {\rm e}^{- \frac{\tau^2}{c^2}} $.

Hence, by Lemma \ref{lem heavy tail concentration real}, there exist constants $C_3$ such that
\begin{equation}
\p{ \envert{\sum_{k = 1}^m Y_k - \e{Y_k} } > m \frac{\epsilon}{2} } \leq 4 {\rm e}^{- \frac{c'}{2} \sqrt{ m \epsilon}}, \quad \forall \; \epsilon \geq C_3 m^{- \frac{1}{3}}.
\end{equation}
Thus,
\begin{equation*}
\p { \envert{\Re \intoo{ \mvec{v}^* \intoo{\nabla d_A(\mvec{z}) - \e{\nabla d_A(\mvec{z})}}}} > m \epsilon \lambda_{\max} (\mvec{z}) } \leq C_2 {\rm e}^{- C_1 \sqrt{ m \epsilon}}, \quad \forall \; \epsilon \geq C_3 m^{- \frac{1}{3}}.
\end{equation*}

Furthermore, note that by \eqref{sum lambda squared} and Lemma \ref{correct phase} we have

\begin{align*}
\lambda_{\max} (\mvec{z})^2 \leq \lambda_1 (\mvec{z})^2 + \lambda_2 ( \mvec{z} )^2 =  2 (1 - \envert{\mvec{x}^* \mvec{z}}) = \inf_{\theta \in \mathds{R}} \enVert{{\rm e}^{i \theta} \mvec{x} - \mvec{z} }^2.
\end{align*}

\end{milad}

\end{proof}

\section{Conclusions} \label{sec:conclude}
In this paper, we have studied the problem of employing compression codes to solve the phase retrieval problem. Given a class of structured signals and a corresponding compression code, we have proposed COPER, which provably recovers  structured signals  in that class from their phaseless measurements using the compression code. Our results have shown that, in  noiseless phase retrieval, asymptotically,  the required sampling rate  for almost zero-distortion recovery, modulo the phase, is the same as noiseless compressed sensing. 

COPER is based on a combinatorial optimization problem. Hence, we have also introduced an iterative algorithm named gradient descent COPER (GD-COPER). We have shown that GD-COPER can return an accurate estimate of the signal in polynomial time (under mild assumptions on the compression code and the initialization of the algorithm). However, GD-COPER requires more measurements than COPER.  The simulation results not only confirms the excellent performance of GD-COPER, but also shows the GD-COPER can perform pretty well even with a far initial point from the target.  This confirms that the very mild condition we had in Corollary \ref{cor:relaxedinit} for the theoretical guarantee, also works in practice.

\end{document}